\def\scale{0.6}
\tikzstyle{internal} = [draw, fill, shape=circle]
\tikzstyle{external} = [shape=circle]
\tikzstyle{square}   = [draw, fill, rectangle]
\tikzstyle{triangle} = [draw, fill, regular polygon, regular polygon sides=3, inner sep=3pt]
\tikzstyle{pentagon} = [draw, fill, regular polygon, regular polygon sides=5, inner sep=2pt, minimum size=14pt]
\tikzset{every fit/.append style=text badly centered}
\tikzset{>=latex} % arrow tips
\tikzset{
    reverseclip/.style={insert path={(current page.north east) --
        (current page.south east) --
        (current page.south west) --
        (current page.north west) --
        (current page.north east)}
    }
}
\newcommand{\tp}[1]{{\left( #1 \right)}}
\newcommand{\numP}{\#{\bf P}}
\renewcommand{\mathbf}[1]{\bm{#1}}
\def\*#1{\mathbf{#1}}
\def\+#1{\mathcal{#1}}
\def\-#1{\mathrm{#1}}
\def\=#1{\mathbb{#1}}
\newcommand{\abs}[1]{\left\vert#1\right\vert}
\newcommand{\set}[1]{\left\{#1\right\}}
\newcommand{\eps}{\varepsilon}
\renewcommand{\vec}[1]{\bm{#1}}
\newcommand{\trans}[4]{\ensuremath{\left[\begin{smallmatrix} #1 & #2 \\ #3 & #4 \end{smallmatrix}\right]}}
\newtheorem{theorem}{Theorem}
\newtheorem{conjecture}[theorem]{Conjecture}
\newtheorem{lemma}[theorem]{Lemma}
\newtheorem{proposition}[theorem]{Proposition}
\crefname{theorem}{Theorem}{Theorems}
\crefname{observation}{Observation}{Observations}
\crefname{claim}{Claim}{Claims}
\crefname{condition}{Condition}{Conditions}
\crefname{algorithm}{Algorithm}{Algorithms}
\crefname{property}{Property}{Properties}
\crefname{example}{Example}{Examples}
\crefname{fact}{Fact}{Facts}
\crefname{lemma}{Lemma}{Lemmas}
\crefname{corollary}{Corollary}{Corollaries}
\crefname{definition}{Definition}{Definitions}
\crefname{remark}{Remark}{Remarks}
\crefname{proposition}{Proposition}{Propositions}
\crefname{equation}{equation}{equations}
\crefname{enumi}{Case}{Case}
\def\prob#1#2#3{\goodbreak\begin{list}{}{\labelwidth\z@ \itemindent-\leftmargin
      \itemsep\z@  \topsep6\p@\@plus6\p@
      \let\makelabel\descriptionlabel}
  \item[\textbf{Name}]#1
  \item[\textbf{Instance}]#2
  \item[\textbf{Output}]#3
  \end{list}}
\newcommand{\defeq}{:=}
\newcommand{\prodK}[1]{\ensuremath{\+K_{#1}}}
\newcommand{\lambdastar}[1]{\ensuremath{\lambda^{\star}_{#1}}}
\def\lambdac{\ensuremath{\lambda_c}}
\def\dc{\ensuremath{d_c}}
\def\lambdaMCMC{\ensuremath{\lambda_{\text{MCMC}}}}
\def\dMCMC{\ensuremath{d_{\text{MCMC}}}}
\def\dstar{\ensuremath{d^{\star}}}
\def\cstar{\ensuremath{c^{\star}}}
\newcommand{\Zspin}{\ensuremath{Z_{\text{spin}}}}
\newcommand{\Complement}[1]{#1^{\textnormal{\texttt{c}}}}
\newcommand{\Closure}[1]{\overline{#1}}
\def\ii{\iota}
\def\ravg{\widehat{r}}
\def\zavg{\widehat{z}}
\def\thetaavg{\widehat{\theta}}
\def\DET{\Phi}
\newcommand{\alternative}[1]{\widetilde{#1}}
\def\righthalfplane{[0,\pi/2)\cup(3\pi/2,2\pi)}
\def\lefthalfplane{[\pi/2,3\pi/2]}
\title{Zeros of ferromagnetic 2-spin systems}
\author{Heng Guo}
\address[Heng Guo]{School of Informatics, University of Edinburgh, Informatics Forum, Edinburgh, EH8 9AB, United Kingdom.}
\email{hguo@inf.ed.ac.uk}
\author{Jingcheng Liu}
\address[Jingcheng Liu]{Department of EECS, University of California, Berkeley, CA}
\email{liuexp@berkeley.edu}
\author{Pinyan Lu}
\address[Pinyan Lu]{ITCS, Shanghai University of Finance and Economics, No.100
Wudong Road, Yangpu District, Shanghai, China.}
\email{lu.pinyan@mail.shufe.edu.cn}
\begin{document}
  
\begin{abstract}
We study zeros of the partition functions of ferromagnetic 2-state spin systems in terms of the external field,
  and obtain new zero-free regions of these systems via a refinement of Asano's and Ruelle's contraction method. 
  The strength of our results is that they do not depend on the maximum degree of the underlying graph. 
  Via Barvinok's method, we also obtain new efficient and deterministic approximate counting algorithms. 
  In certain regimes, our algorithm outperforms all other methods such as Markov chain Monte Carlo and correlation decay.
\end{abstract}

\maketitle

\section{Introduction}

Spin systems are widely studied in statistical physics, probability theory, machine learning, and theoretical computer science,
sometimes under a different name such as \emph{Markov random field}.
An important special case is when there are only $2$ spins, and
a systematic study of their computational complexity was initiated by Goldberg et al.~\cite{GJP03}.
In addition to their intrinsic importance, 
these systems are also great test beds for algorithmic ideas.
Many interesting tools and techniques are developed through studying them.
By now, we have almost completely settled the anti-ferromagnetic case,
whereas a definitive answer to the ferromagnetic case still remains elusive.

Before reviewing the state-of-the-art, we define the $2$-state spin system first.
In a graph $G=(V,E)$,
a configuration $\sigma:V \rightarrow \{0,1\}$ assigns one of the two spins ``0'' and ``1'' to each vertex.
The $2$-spin system is specified by the edge interaction matrix, which we normalise to $\trans{\beta}{1}{1}{\gamma}$,
and the external field $\lambda$ for vertices that are assigned $1$.
All parameters here are non-negative.
For a particular configuration $\sigma$,
its weight $w(\sigma)$ is a product over all edge interactions and vertex weights, that is
\begin{align} \label{eqn:weight}
  w(\sigma) = \beta^{m_0(\sigma)} \gamma ^{m_1(\sigma)} \lambda^{n_1(\sigma)},
\end{align}
where $m_0(\sigma)$ is the number of $(0,0)$ edges given by the configuration $\sigma$, $m_1(\sigma)$ is the number of $(1,1)$ edges,
and $n_1(\sigma)$ is the number of vertices assigned $1$.
The Gibbs measure / distribution of the system is one where the probability of a configuration is proportional to its weight.
The partition function \Zspin\ is the normalising factor of the Gibbs distribution:
\begin{align}  \label{eqn:Z}
  \Zspin(G;\beta,\gamma,\lambda)=\sum_{\sigma:V \rightarrow \{0,1\}} w(\sigma) = \sum_{\sigma:V \rightarrow \{0,1\}}\beta^{m_0(\sigma)} \gamma ^{m_1(\sigma)} \lambda^{n_1(\sigma)}.
\end{align}
An important special case is the Ising model, where $\beta=\gamma$.
Notice that in the statistical physics literature,
parameters are usually chosen to be the logarithms of our parameters above.
Change of variables as such do not affect the complexity of the same system.

Many macroscopic properties of the system can be studied through partition functions,
which raises the interest of computing them.
Exact computation of \Zspin\ is \numP-hard for all but trivial cases \cite{Bar82},
so the main focus is on approximating \Zspin.

The system shows drastically different behaviours depending on whether $\beta\gamma<1$ or  $\beta\gamma>1$ (the case where $\beta\gamma=1$ is degenerate).
The antiferromagnetic case $\beta\gamma<1$ is now very well understood by a series of work \cite{Wei06,LLY13,SST14,SS14,GSV16}, 
where an exact threshold of computational complexity transition is identified and the only remaining case is at the critical point.
This threshold corresponds to the uniqueness threshold of Gibbs measures in infinite regular trees (also known as the Bethe lattice).

On the other hand, far less is known for the ferromagnetic case $\beta\gamma>1$.
Due to symmetry, we will assume $\beta\ge\gamma$ throughout this paper as the other case is similar.
This assumption means that the edge interaction favours the spin ``0''.
As it turns out, if the external field also favors ``0'' (namely $\lambda\le 1$),
then efficient algorithms can be obtained in a number of ways.
The real challenge is how far we can allow $\lambda$ to go beyond $1$,
and a critical threshold is conjectured to exist.

Unlike antiferromagnetic systems, the tree uniqueness threshold is not the right answer,
as the pioneering algorithm of Jerrum and Sinclair \cite{JS93} is efficient on both sides of the tree uniqueness threshold for ferromagnetic Ising models ($\beta=\gamma$).
This algorithm is based on the Markov chain Monte Carlo (MCMC) method.
The MCMC method has been adapted to general ferromagnetic $2$-spin systems \cite{GJP03}.
The bound in \cite{GJP03} is then slightly improved \cite{LLZ14} to give an efficient approximation algorithm of \Zspin\ if $0<\lambda\le\lambda_{\text{MCMC}}=\frac{\beta}{\gamma}$,
for fixed $\beta\ge\gamma$.

The algorithmic success in the anti-ferromagnetic case is largely thanks to the correlation decay method introduced by Weitz \cite{Wei06}.
It is natural to try this method on ferromagnetic systems as well.
Non-trivial results have been obtained \cite{GL18} but these results still fall short from solving the problem in general.
In \cite{GL18}, the first and the third author raised the following conjecture.

\begin{conjecture}[\cite{GL18}]\label{conj:ferro}
  Let $\beta,\gamma,\lambda$ be positive parameters such that $\beta\ge\gamma$ and $\beta\gamma>1$.
  If \;$\lambda \le \lambdac$ where $\lambdac \defeq \left(\frac{\beta}{\gamma}\right)^{\dc}$ and $\dc\defeq\frac{\sqrt{\beta\gamma}}{\sqrt{\beta\gamma}-1}$,
  then a fully polynomial-time approximation scheme (FPTAS) exists for \Zspin.
\end{conjecture}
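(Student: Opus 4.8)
Conjecture~\ref{conj:ferro} is open, so what follows is a plan of attack together with the partial progress one should expect it to yield. The plan is to combine a \emph{complex zero-free region} for $\Zspin$, regarded as a polynomial in the external field $\lambda$, with Barvinok's interpolation method. Concretely, I would aim for an open connected $\Lambda\subseteq\=C$ with $0\in\Lambda$ and $(0,\lambdac]\subseteq\Lambda$ such that $\Zspin(G;\beta,\gamma,\lambda)\neq 0$ for \emph{every} graph $G$ and every $\lambda\in\Lambda$, where $\Lambda$ depends only on $\beta,\gamma$ and --- crucially --- not on the maximum degree of $G$; this degree-independence is exactly what is needed to overtake $\lambdaMCMC=\beta/\gamma$ and the correlation-decay bound of \cite{GL18}. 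Granting such a $\Lambda$, the algorithm is the now-standard Barvinok pipeline: $\Zspin(G;\cdot)$ is a polynomial of degree $\abs{V}$ in $\lambda$ with $\Zspin(G;0)\neq 0$, so $\log\Zspin(G;\cdot)$ is analytic on a simply connected neighbourhood of a path inside $\Lambda$ from $0$ to the target $\lambda$; truncating its Taylor expansion after $O\tp{\log(\abs{V}/\eps)}$ terms gives a $(1\pm\eps)$-approximation, and each Taylor coefficient is a weighted sum over connected subgraphs of $G$ with few edges, computable in polynomial time for bounded-degree $G$ in the manner of Patel--Regts (an FPTAS there) and in quasi-polynomial time in general; matching the bare FPTAS statement of Conjecture~\ref{conj:ferro} for unbounded degree would additionally need either a faster coefficient computation or a complementary (MCMC- or correlation-decay-type) argument in that regime.

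The heart of the matter is producing $\Lambda$, and here I would use a refinement of the Asano--Ruelle contraction method. Decouple the interactions: write $\Zspin(G;\vec\lambda)$ as an Asano contraction of a product of gadget polynomials, with one edge gadget $\beta+\lambda_{u_e}+\lambda_{v_e}+\gamma\,\lambda_{u_e}\lambda_{v_e}$ per edge (in per-edge copies of the endpoint fields) and one vertex gadget $1+\lambda_v$ per vertex, and then Asano-contract all copies of each vertex back together. The edge gadget vanishes exactly when $\lambda_v=g(\lambda_u):=-\tp{\beta+\lambda_u}/\tp{1+\gamma\lambda_u}$, where $g$ is a M\"obius involution whose two fixed points are $\tp{-1\pm i\sqrt{\beta\gamma-1}}/\gamma$, complex conjugates of modulus $\sqrt{\beta/\gamma}$; hence it is nonvanishing on $U\times U$ exactly when $g(U)\cap U=\emptyset$. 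So one wants a region $U$ with (i) $g(U)\cap U=\emptyset$ and $-1\notin U$, and (ii) enough stability under the Ruelle contraction of regions that contracting any number of copies of a vertex keeps its field inside $U$, which makes the resulting bound degree-free; then $\Lambda$ is the interior of such a $U$ and $\sup\tp{U\cap\=R_{>0}}$ is the field up to which zero-freeness is certified. It is convenient to conjugate $g$ to $w\mapsto -w$ (it has multiplier $-1$), turning (i) into ``$W$ and $-W$ are disjoint'' for the transformed region $W$; with this, the disk $\set{\abs{\lambda}<1}$ is easily seen to satisfy (i) and (ii), recovering the classical zero-free region for $\lambda<1$, so the real work is to find genuinely non-circular $U$ reaching past $1$ and ideally as far as $\lambdac$.

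The step I expect to be the main obstacle is (ii), carried out at a region that still reaches $\lambdac$ along $\=R_{>0}$. The classical instance of (ii) asks that the complement $\=C\setminus U$ be closed under pointwise multiplication; but this is a condition on $U$ in the \emph{original} $\lambda$-coordinate while (i) is transparent only in the \emph{conjugated} $w$-coordinate, and M\"obius changes of variable do not respect multiplication, so the two constraints fight each other. Worse, the classical form of (ii) is provably too weak for the conjecture: a multiplicatively closed $\=C\setminus U$ must contain both fixed points of $g$, hence their product $\beta/\gamma$, so it can never push $U$ past $\beta/\gamma=\lambdaMCMC$ --- exactly the MCMC barrier. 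Thus a genuine refinement is unavoidable; the natural one is a sharper analysis of how the Ruelle contraction acts on carefully shaped \emph{non}-circular regions, exploiting that the $2$-spin edge gadget is a single bilinear form rather than a product of two linear factors, and balancing this against (i) and against the vertex-gadget constraint $-1\notin U$. I would expect such an analysis to push the certified interval to $\tp{0,(\beta/\gamma)^{d}}$ for some exponent $d>1$ that beats $\lambdaMCMC$ in a range of $\beta,\gamma$; the delicate --- and possibly genuinely hard --- question is whether $d$ can be driven all the way up to $\dc=\sqrt{\beta\gamma}/(\sqrt{\beta\gamma}-1)$. Even short of that, the same machinery yields an FPTAS for $\Zspin$ on whatever sub-interval of $(0,\lambdac]$ it reaches, which already enlarges the known tractable region; that is the concrete progress towards Conjecture~\ref{conj:ferro} I would aim to record.
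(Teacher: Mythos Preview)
Your plan is essentially the strategy the paper itself uses to obtain its partial progress (Theorem~2), not a proof of the conjecture, which remains open in the paper as well. The high-level pipeline --- Asano--Ruelle contraction from the isolated-edge graph, combined with Barvinok's interpolation and the Patel--Regts coefficient computation --- matches the paper exactly, and your expectation that this yields an FPTAS on bounded-degree graphs for $\lambda$ up to $(\beta/\gamma)^d$ for some $d>1$ is precisely what the paper proves, with $d=\dstar/2$ where $\dstar=\pi/\ArcTan{\sqrt{\beta\gamma-1}}$.

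A few places where your concrete plan diverges from what the paper actually does, and where it would run into trouble. First, the vertex gadget $1+\lambda_v$ is spurious: the decomposition into isolated edges already carries the external field, and contracting the per-edge copies of a vertex back together recovers $\Zspin$ with no extra factor; inserting $1+\lambda_v$ would impose an unnecessary constraint $-1\notin U$. Second, your ``classical instance of (ii)'' is too strong a requirement and is not what makes the method work. The paper does \emph{not} ask that $K=\Complement{U}$ be closed under multiplication (or that $\prodK{d}\subseteq K$); it only asks that each $\prodK{d}=(-1)^{d+1}K^d$ avoid a strip around $[0,\lambda']$. With the sign $(-1)^{d+1}$, your obstruction $\zeta_1\zeta_2=\beta/\gamma$ lands in $\prodK{2}$ at $-\beta/\gamma$, on the \emph{negative} axis, so it does not block the positive interval; the method does go past $\lambdaMCMC$. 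Third, the refinement that buys this is \emph{not} passing to non-circular $U$: the paper stays with genuinely circular $K$ (a disk, its complement, or a half-plane, depending on the sign of $\DET$ in \eqref{eqn:DET}), but lets the centre $c$ vary along the real line subject to $\zeta_1,\zeta_2\in\partial K$, and then solves the optimisation \eqref{eqn:program} for $\min\bigl(\prodK{d}\cap\=R_{>0}\bigr)$ directly via a convexity/Jensen argument. The optimal centre $c=\cstar$ in \eqref{eqn:c} is what pushes the bound to $\lambdastar{}$.

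Finally, the paper's \Cref{sec:limit} names the barrier to reaching $\lambdac$ by this route: any circular $K$ admissible for Grace--Szeg\H{o}--Walsh must contain the fixed roots $\zeta_1,\zeta_2$, and for $d$ close to $\dstar$ the $d$-fold product carries $\zeta_1$ arbitrarily close to the positive real axis at modulus $(\beta/\gamma)^{d/2}$; so within the contraction framework $\lambdastar{}$ is essentially tight. Closing the gap to $\lambdac$ would need an idea beyond choosing a better $K$.
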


\Cref{conj:ferro} is confirmed in \cite{GL18} for the case of $\gamma\le1$.
However, it does not generalise to $\gamma>1$ because certain key properties in correlation decay fail.
On the other hand, one should not expect to go beyond $\lambdac$ too far.
Indeed, Liu et al.~\cite{LLZ14} identified another threshold beyond which the problem is as hard as approximately counting independent set in bipartite graphs,
which is a notorious open problem in approximate counting and is conjectured to have no efficient algorithm \cite{DGGJ04}.
This hardness threshold of \cite{LLZ14} is almost equal to $\lambdac$ except for a small integral gap.

In this paper, we obtain new algorithmic result that outperforms both the MCMC and the correlation decay methods in the $\gamma>1$ regime.
\begin{theorem}\label{thm:main}
  Let $\beta,\gamma,\lambda$ be positive parameters such that $\beta\ge\gamma$ and $\beta\gamma>1$.
  If $\lambda < \lambdastar{}$ where $\lambdastar{} \defeq \left(\frac{\beta}{\gamma}\right)^{\dstar/2}$ and $\dstar\defeq\frac{\pi}{\ArcTan {\sqrt{\beta\gamma-1}}}$,
  then an FPTAS exists for \Zspin\ in bounded degree graphs.
\end{theorem}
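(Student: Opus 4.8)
The strategy is Barvinok's polynomial interpolation method, fed by a new zero‑free region for $\Zspin$ in the complex plane of the field $\lambda$ that does not depend on the maximum degree. Viewing $\Zspin(G;\beta,\gamma,\cdot)$ as a polynomial in $\lambda$ of degree $|V|$ with $\Zspin(G;\beta,\gamma,0)=\beta^{|E|}\neq 0$, it suffices, by the standard interpolation framework, to exhibit an open connected set $\+U\subseteq\mathbb{C}$ with $0\in\+U$ and $(0,\lambdastar{})\subseteq\+U$ on which $\Zspin(G;\beta,\gamma,\cdot)$ never vanishes, for \emph{every} graph $G$: one then approximates $\log\Zspin(G;\beta,\gamma,\lambda)$ for real $\lambda<\lambdastar{}$ by truncating its Taylor expansion at $\lambda=0$ to order $O(\log(|V|/\eps))$, and the coefficients of this expansion are computable in time $(|V|/\eps)^{O(1)}$ by enumerating connected subgraphs of bounded size -- the only place the bounded‑degree hypothesis is needed.

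To build $\+U$ we use Asano contraction. Put a variable on each half‑edge; a single edge $\{u,v\}$ contributes $p(x,y)=\beta+x+y+\gamma xy$, and $\Zspin(G;\beta,\gamma,\lambda)$ is obtained from $\prod_{e=\{u,v\}}p(x_e^{u},x_e^{v})$ by Asano‑contracting, at each vertex $w$, the $\deg(w)$ variables $\{x_e^{w}:e\ni w\}$ and then substituting $\lambda$. The point zeros of $p$ are governed by the involution $\phi(z)=-\tfrac{z+\beta}{\gamma z+1}$: one has $p(z,\phi(z))\equiv 0$ and $\phi\circ\phi=\operatorname{id}$, and the fixed points of $\phi$ are $z_0^{\pm}=\tfrac{-1\pm i\sqrt{\beta\gamma-1}}{\gamma}$, of modulus $\sqrt{\beta/\gamma}$ and argument $\pi\mp\ArcTan\sqrt{\beta\gamma-1}$ -- precisely the zeros of $\Zspin$ on a single edge. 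Hence $p$ is zero‑free on $\Complement{\+K}\times\Complement{\+K}$ if and only if $\+K\cup\phi(\+K)=\widehat{\mathbb{C}}$, and the Asano--Ruelle lemma shows that contracting a degree‑$d$ vertex turns the forbidden set $\+K$ of each incident half‑edge variable into a forbidden set contained in $(-1)^{d+1}\+K^{\star d}$, where $\+K^{\star d}=\{z_1\cdots z_d:z_i\in\+K\}$; the sign appears because the contracted gadget has constant part $\prod_i(\beta+x_i)$ and top part $\prod_i(1+\gamma x_i)$. Iterating over all vertices, $\Zspin(G;\beta,\gamma,\lambda)\neq 0$ for every $G$ whenever $\lambda\notin\bigcup_{d\ge 1}(-1)^{d+1}\+K^{\star d}$.

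It remains to choose $\+K$. We take $\+K=\Complement{D}$, where $D$ is the open disc whose bounding circle passes through $z_0^{+}$, $z_0^{-}$ and through the real point $\lambdastar{}$. Since $z_0^{\pm}$ are the fixed points of $\phi$, $\phi$ maps every circle through them to itself, and since $\phi(0)=-\beta\notin D$ while $0\in D$, $\phi$ interchanges $D$ and its complement; thus $\+K\cup\phi(\+K)=\widehat{\mathbb{C}}$ and $0\notin\+K$, while $D$ is an open connected set containing $(0,\lambdastar{})$. The core step is to show $\bigcup_{d\ge1}(-1)^{d+1}\+K^{\star d}$ misses $(0,\lambdastar{})$. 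The even‑$d$ terms lie off the positive reals except at modulus at least $\lambdastar{}$, so one bounds, for odd $d$, how far $\+K^{\star d}$ reaches along $(0,\infty)$: since a factor of $\+K$ of modulus at most $\sqrt{\beta/\gamma}$ necessarily has argument within $\ArcTan\sqrt{\beta\gamma-1}$ of $\pi$, making the arguments of $d$ factors sum to a multiple of $2\pi$ requires an accumulated rotation of $\pi$ on top of the $d\pi$ coming from the negative‑real direction, forcing $d\ge\dstar=\pi/\ArcTan\sqrt{\beta\gamma-1}$, and by then $|z_1\cdots z_d|\ge(\sqrt{\beta/\gamma})^{\dstar}=\lambdastar{}$. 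Consequently $\+U:=D\setminus\bigcup_{d\ge 2}(-1)^{d+1}\+K^{\star d}$ is open, contains $[0,\lambdastar{})$ -- hence a complex neighbourhood of each compact subinterval of $[0,\lambdastar{})$ -- and is zero‑free; feeding $\+U$ into the first paragraph completes the proof.

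I expect the last step to be the main obstacle: determining $D$ explicitly (its centre and radius as functions of $\beta,\gamma$) and carrying out the calibrated estimate on the angular and modular growth of $\+K^{\star d}$ so that the threshold comes out exactly as $\lambdastar{}=(\beta/\gamma)^{\dstar/2}$ rather than merely the Lee--Yang–type value $\sqrt{\beta/\gamma}$ that the round disc $\{|z|<\sqrt{\beta/\gamma}\}$ would give. This is exactly the promised refinement of the Asano--Ruelle contraction method, and it is also what makes the zero‑free region independent of the degree; the remaining ingredients -- geometric decay of the truncated expansion on $\+U$ and polynomial‑time coefficient extraction in bounded‑degree graphs -- are routine.
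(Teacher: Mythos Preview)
Your architecture --- Barvinok interpolation fed by Asano--Ruelle contraction from a circular region through the edge-roots $z_0^{\pm}$ --- is the paper's, and the M\"obius-involution remark (any circle through $z_0^{\pm}$ is $\phi$-invariant with its two sides interchanged) is a pleasant substitute for the paper's Grace--Szeg\H{o}--Walsh step.

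The gap is in the choice of $D$ and the $\+K^{\star d}$ bound. For \emph{any} circle through $z_0^{\pm}$, setting all $z_i=z_0^{+}$ is feasible, and at $d=\dstar$ this lands on the positive axis with modulus exactly $\lambdastar{}$; thus $\min_{d\ge 2}\lambdastar{d}\le\lambdastar{}$ regardless of which circle you pick, with equality only when $\dstar$ is the actual minimiser. The paper shows this pins the centre to $\cstar=-\beta\log\sqrt{\beta/\gamma}\,/\,\DET$ with $\DET=\log\sqrt{\beta/\gamma}-\sqrt{\beta\gamma-1}\cdot\ArcTan{\sqrt{\beta\gamma-1}}$, which is \emph{not} your ``circle through $\lambdastar{}$'' (for $\beta=3,\gamma=4/3$ one gets $\cstar\approx 0.864$ versus your $c\approx 1.108$); with your $D$ the minimum over $d$ falls strictly below $\lambdastar{}$. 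The sign of $\DET$ also changes the picture: $\DET<0$ gives $\cstar>0$ and $K=\Complement{D}$ as you wrote, but $\DET>0$ forces $\cstar<-\beta$ with $K=\Closure{D}$, and $\DET=0$ needs a half-plane --- your plan covers only the first case. Your modulus--argument heuristic does not produce the required \emph{lower} bound on $\prod|z_i|$ (knowing $|z_i|\le\sqrt{\beta/\gamma}$ gives an upper bound, and the mixed regime with some large and some small moduli is where the work lies); the paper instead parametrises $|z_i|=f(\theta_i)$ on $\partial D$, proves $\log f$ is convex on the left half-plane so the optimum has all $\theta_i=(d-1)\pi/d$, and then shows $d\mapsto\log\lambdastar{d}$ is convex with its minimum placed at $\dstar$ precisely by the choice $c=\cstar$. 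Finally, with the correct $\cstar$ one has $\cstar+r<\lambdastar{}$, so $K$ itself meets $(0,\lambdastar{})$ and degree-$1$ vertices are not handled by the contraction; the paper prunes leaves first (the effective field at the neighbour stays below $\lambdastar{}$ because $\lambdastar{}\le(\beta-1)/(\gamma-1)$) and then invokes a multivariate zero-free theorem.
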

\Cref{thm:main} is a generalisation of the algorithm for the ferromagnetic Ising model ($\beta=\gamma$) by Liu, Sinclair, and Srivastava~\cite{LSS17}.
We note that our bound on $\lambda$ is uniform and does not rely on the maximum degree of the underlying graph.
The requirement of bounded degree is only for the efficiency of our algorithm.
Without this assumption our algorithm becomes quasi-polynomial time.
This is typical for deterministic approximate counting algorithms.

To compare $\lambdastar{}$ with $\lambdac$, we note that as $\beta\gamma\rightarrow 1$, $\dstar$ is asymptotically the square root of $\dc$.
An illustration of comparing $\lambdaMCMC$, $\lambdac$ and $\lambdastar{}$ is given in \Cref{fig:main}.

\pgfmathdeclarefunction{dstar}{1}{%
  \pgfmathparse{acos(#1)}%
}

\begin{figure}[htbp]
  \centering
  \begin{minipage}{.4\textwidth}
    \centering
      \begin{tikzpicture}[scale=\scale,transform shape]
        \begin{axis}[	xmin=1/2,   xmax=2, ymin=0,   ymax=20, extra x ticks={1}, extra y ticks={1}, extra tick style={grid=major},xlabel={\Large $\gamma$},ylabel={Threshold of the external field},legend entries={$\lambdac$,$\lambdastar{}$,$\lambdaMCMC$}]
        \addplot [very thick, draw=red,  domain=0.8:1, smooth] {(2/x)^( sqrt(2*x)/(sqrt(2*x)-1))};
        \addplot [very thick, draw=blue,  domain=1/2:2, smooth] {(2/x)^(90/acos(1/sqrt(2*x)))}; 
        \addplot [very thick, draw=orange,  domain=1/2:2, smooth] {2/x};
        \addplot [very thick, draw=red,  domain=1:2, dashed, smooth] {(2/x)^( sqrt(2*x)/(sqrt(2*x)-1))};
      \end{axis}
      \end{tikzpicture}
  \end{minipage}
  \hspace{1em}
  \begin{minipage}{.4\textwidth}
    \centering
      \begin{tikzpicture}[scale=\scale,transform shape]
        \begin{axis}[xmin=1/2,   xmax=2, ymin=0, ymax=6, extra x ticks={1}, extra y ticks={1}, extra tick style={grid=major},xlabel={\Large $\gamma$},ylabel={Exponent of the threshold},legend entries={$\dc$,$\dstar/2$,$\dMCMC$}]
        \addplot [very thick, draw=red,  domain=0.6:1, smooth] {sqrt(2*x)/(sqrt(2*x)-1)};
        \addplot [very thick, draw=blue,  domain=1/2:2, samples=100, smooth] {90/acos(1/sqrt(2*x))}; 
        \addplot [very thick, draw=orange,  domain=1/2:2, smooth] {1};
        \addplot [very thick, draw=red,  domain=1:2, dashed, smooth] {sqrt(2*x)/(sqrt(2*x)-1)};        
      \end{axis}
      \end{tikzpicture}
  \end{minipage}
  \caption{ 
    Fix $\beta=2$ and the range of $\gamma$ is $(1/2,2]$.
    Left: comparison of $\lambdaMCMC$, $\lambdac$ and $\lambdastar{}$.
    Right: comparison of $\dMCMC\defeq 1$, $\dc$ and $\dstar/2$.
    The dashed red line marks the conjectured threshold for $\gamma\ge 1$.}
%    The algorithmic threshold of $\lambdac$ (red curve) is only confirmed for $\gamma\le 1$.}
  \label{fig:main}
\end{figure}
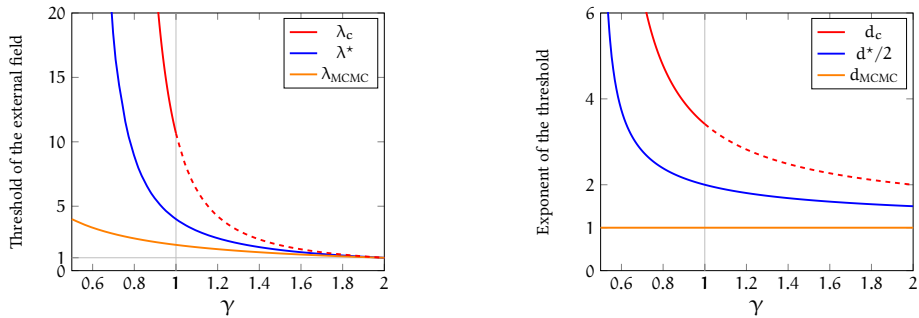

Our algorithm is based on a recent algorithmic technique developed by Barvinok \cite{Bar16} and extended by Patel and Regts \cite{PR17a}.
The idea is to view \Zspin\ as a polynomial in $\lambda$,
and turn zero-free regions of this polynomial in the complex plane into efficient approximation algorithms of the corresponding parameters.
The major challenge of applying this algorithmic framework is to obtain sharp zero-free regions along the real axis.

There are two main methods in obtaining zero-free regions.
The first one is the recursion method, 
where one gradually eliminates vertices from the graph, 
and shows that the zeros are always outside of the desired region.
This method has found many successes, see e.g.\ some work of Sokal \cite{Sok01,SS05}.
More recently, it has been successfully applied to solve long-standing conjectures \cite{PR19} and open problems \cite{LSS19}.
However, there are also strong connections between correlation decay and the recursion method.
In some sense, both results of \cite{PR19} and \cite{LSS19} are turning correlation decay analysis into zero-freeness bounds using complex dynamical systems.
For ferromagnetic 2-spin systems, 
because correlation decay fails if $\gamma>1$ \cite{GL18},
it would be surprising to obtain any meaningful result using the recursion method in this case.

In order to bypass the correlation non-decay barrier, 
we employed the other method, namely the contraction method, pioneered by Asano \cite{Asa70} and Ruelle \cite{Rue71,Rue99a}.
In a typical application, one starts with a graph of isolated components, and then contract vertices or edges to form the desired graph $G$.
The zero-free regions of isolated components are easy to analyse, but the contractions will spread the zeros across the complex plane.
The main effort is to control this spread.
In all previous applications of this method that we are aware of, 
either the unit circle or half planes are used as the starting point.
Our idea is to consider circles whose center and radius are carefully chosen (depending on the parameters), and sometimes their complements.
The main technical challenge is a detailed analysis for contracting an \emph{arbitrary} number of corresponding regions,
which involves repeated Minkowski product of circular regions.
We do so by solving a highly non-trivial optimisation problem in complex variables (see \eqref{eqn:program}).
It remains to be explored whether this methodology has other applications as well.

\begin{theorem}  \label{thm:zero-free}
  Let $\beta,\gamma$ be positive parameters such that $\beta\ge\gamma$ and $\beta\gamma>1$,
  and $\lambdastar{}$ defined as in \Cref{thm:main}.
  Then for any graph with minimum degree at least $2$,
%  Then for any bounded degree graphs,
  \Zspin, viewed as a polynomial in $\lambda$, is zero-free in a constant-sized small neighbourhood of the interval $[0,\lambda']$ for any $\lambda'<\lambdastar{}$.
\end{theorem}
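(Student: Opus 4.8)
The plan is to realise $\Zspin$, as a polynomial in $\lambda$, as an iterated \emph{Asano--Ruelle contraction} of single‑edge polynomials, to seed the argument with a carefully chosen zero‑free region of a single edge, and then to track how that region is distorted by the contractions.

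\emph{Step 1: the contraction.} For each edge $e=uv$ I would introduce two \emph{half‑edge} variables $x_{u,e},x_{v,e}$ and form the multiaffine ``blown‑up'' polynomial
\[
  \widehat Z\tp{\set{x_{v,e}}}\;=\;\prod_{e=uv\in E}\tp{\beta+x_{u,e}+x_{v,e}+\gamma\,x_{u,e}x_{v,e}}.
\]
For each vertex $v$, Asano‑contracting the $\deg(v)$ half‑edge variables $\set{x_{v,e}:e\ni v}$ into a single field variable $\lambda_v$ — i.e.\ replacing them by one new variable carrying only the term of lowest degree and the term of highest degree in those variables — turns $\widehat Z$ into the multivariate partition function $\Zspin(G;\beta,\gamma,\set{\lambda_v})$, since the discarded mixed monomials are exactly the ``inconsistent'' local configurations at $v$; specialising $\lambda_v=\lambda$ then recovers $\Zspin(G;\beta,\gamma,\lambda)$. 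The minimum‑degree hypothesis enters precisely here: with every degree at least $2$, each vertex‑contraction merges at least two variables, so the relevant family of regions below is indexed by $d\ge 2$ rather than $d\ge 1$, and it is this restriction that produces the exponent $\dstar/2$.

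\emph{Step 2: reduction to a region.} By the Asano--Ruelle contraction lemma — if a multiaffine polynomial is non‑vanishing while variable $z_i$ ranges over a circular region (a disk, the complement of a disk, or a half‑plane), then after contracting two variables the result is non‑vanishing while the new variable ranges over a region obtained from the Minkowski product of the two circular regions — the whole construction reduces to the following: exhibit a circular region $\mathcal R$ such that (i) the single‑edge polynomial $P(x,y)\defeq\beta+x+y+\gamma xy$ is non‑vanishing whenever $x,y\in\mathcal R$, and (ii) the $d$‑fold iterated products of the associated region, for all $d\ge 2$, together still cover a fixed (graph‑independent) open neighbourhood of $[0,\lambda']$ for every $\lambda'<\lambdastar{}$. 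The exact bookkeeping — where the point $0$ sits, and the role of complements and of overall signs — is standard but needs care; this is where the paper's choice of circles, and sometimes their complements, is dictated.

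\emph{Step 3: choosing $\mathcal R$.} The geometry of $P$ pins down the region. After the positive rescaling $x=\sqrt{\beta/\gamma}\,\widetilde x$, $y=\sqrt{\beta/\gamma}\,\widetilde y$, the polynomial $P$ becomes, up to a positive constant, $1+q(\widetilde x+\widetilde y)+\widetilde x\widetilde y$ with $q\defeq 1/\sqrt{\beta\gamma}\in(0,1)$, whose zero locus is the graph of the M\"obius involution $m(\widetilde x)=-\tfrac{q\widetilde x+1}{\widetilde x+q}$. Its two fixed points are $e^{\pm i(\pi-\phi)}$, lying on the unit circle, where
\[
  \phi\;\defeq\;\arccos q\;=\;\arctan\sqrt{\beta\gamma-1},
\]
so that $\dstar=\pi/\phi$. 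Since $m$ is an involution (hence conjugate to $w\mapsto -w$), every circle through its two fixed points is $m$‑invariant and $m$ interchanges the two regions it bounds; taking $\mathcal R$ (in the rescaled coordinates) to be the component containing $0$ of a suitable such circle — or, in part of the parameter range, the complementary region — makes (i) automatic, leaving a single real parameter (which circle through the fixed points) to be optimised.

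\emph{Step 4: the main obstacle, and the neighbourhood.} Verifying (ii) is the heart of the matter and, I expect, the hard part. The Minkowski product of two circular regions is in general not circular, so one cannot iterate within a tractable family; instead one must analyse the whole infinite sequence of iterated products simultaneously and show that their common reach along the positive real axis is exactly $\lambdastar{}=(\beta/\gamma)^{\dstar/2}$. Concretely I would expect (ii) to come down to a non‑trivial optimisation problem in complex variables parametrising the boundary of $\mathcal R$; the angle $\phi$ enters because each contraction twists the region's effective reach towards the negative real axis by an amount governed by $\phi$, so that remaining zero‑free on $[0,\lambda']$ for \emph{every} degree $d\ge 2$ forces the threshold $(\beta/\gamma)^{\dstar/2}$ and fixes the optimal circle. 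By contrast, upgrading ``$[0,\lambda']$ is covered'' to ``a constant‑sized neighbourhood of $[0,\lambda']$ is covered'' is routine: the regions involved are open, $[0,\lambda']$ is compact, and near the positive real axis they only widen with $d$, so the distance from $[0,\lambda']$ to the boundary, minimised over $d\ge 2$, is a positive constant depending only on $\beta,\gamma$ and $\lambda'$ — in particular not on the graph, which is exactly the content of \Cref{thm:zero-free}.
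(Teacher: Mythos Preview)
Your plan matches the paper's: blow up vertices into half-edges, Asano--Ruelle contract vertex by vertex, with $K$ taken from the one-parameter pencil of circles through $\zeta_1,\zeta_2$; your M\"obius-involution argument for (i) is a clean alternative to the paper's appeal to Grace--Walsh--Szeg\H{o} (\Cref{prop:Grace-Walsh-Szego}), and identifies exactly the same family of regions. The work you defer in Step~4 is precisely Lemmas~\ref{lem:Jensen}--\ref{lem:lambdastar-value3}: a Jensen-type convexity argument shows the optimum of the program \eqref{eqn:program} is attained when all $z_i$ coincide at angle $\tfrac{d-1}{d}\pi$, after which one optimises over $d$ and then over the circle parameter $c=\cstar$ (with the three cases $\DET\lessgtr 0$ dictating disk-complement, disk, or half-plane).

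One genuine slip in your ``routine'' part: the zero-free intercepts $\lambdastar{d}$ do \emph{not} ``only widen with $d$'' --- $\log\lambdastar{d}$ is convex in $d$ with its minimum near the non-integer $\dstar$ --- so compactness of $[0,\lambda']$ alone does not give a uniform $\delta$ over the infinite family $\{\prodK{d}\}_{d\ge 2}$. The paper closes this by \Cref{lem:norm>1}: when $\beta>\gamma$ one has $\abs{z}>1$ for every $z\in K$, hence $\prodK{d}$ lies outside the disk of radius $\lambdastar{}$ for all sufficiently large $d$, reducing the infimum over $d\ge 2$ to a finite minimum of $\delta_d$'s (the case $\beta=\gamma$ being trivial since then $K$ is the unit circle and $\prodK{d}=K$ for all $d$).
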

The minimum degree requirement in \Cref{thm:zero-free} comes from some technical difficulty with degree $1$ vertices.
They do not affect the algorithmic result, \Cref{thm:main},
because we can preprocess the graph to remove the leaves, and then deal with an instance with non-uniform external fields.
In order to do so, we in fact show a stronger multivariate zero-free theorem, see \Cref{thm:zero-free-multi}.

%We remark that this zero-freeness result is a strict generalization of the Lee-Yang circle theorem to ferromagnetic $2$-spin systems in general.
%Indeed, if one translates
The main message of our paper is to show that the failure of correlation decay is \emph{not} an essential barrier to obtain efficient algorithms.
However, because of some inherent difficulties of the contraction method, as explained in \Cref{sec:limit},
our result still falls short of confirming \Cref{conj:ferro}.
By now we have three different point of views for approximating \Zspin, namely MCMC, correlation decay, and zeros of polynomials.
They are just different aspects of the same object, 
and perhaps settling the complexity of ferromagnetic 2-spin systems requires a more unified view.

\section{Barvinok's algorithm}

Recall \eqref{eqn:weight} and \eqref{eqn:Z} that
\begin{align*}
  \Zspin(G;\beta,\gamma,\lambda)=\sum_{\sigma:V \rightarrow \{0,1\}} w(\sigma) = \sum_{\sigma:V \rightarrow \{0,1\}}\beta^{m_0(\sigma)} \gamma ^{m_1(\sigma)} \lambda^{n_1(\sigma)}.
\end{align*}
We will view \eqref{eqn:Z} as a polynomial in $\lambda$ and fix $\beta$ and $\gamma$.
In that case, we write $\Zspin(G;\lambda)$ for short.
The main effort of this paper is to show that for a certain region of $\lambda$ on the complex plane, $\Zspin\neq 0$.

Our interest in the zeros of the partition function is due to the algorithmic approach developed by Barvinok \cite[Section~2]{Bar16}. 
Let the $\delta$-strip of $[0,t]$ be
\begin{align*}
  \set{z\in\=C\mid \abs{\Im z}\le\delta\text{ and }-\delta\le\Re z\le t+\delta}.
\end{align*}
Suppose a polynomial $P(z)=\sum_{i=1}^nc_i z^i$ of degree $n$ is zero-free in a strip containing $[0,t]$.
Barvinok's method roughly states that $P(t)$ can be $(1\pm\eps)$-approximated using $c_0,\dots,c_k$ for some $k=O\tp{e^{\Theta\inp{1/\delta}} \cdot \log \frac{n }{\eps}}$,
via truncating the Taylor expansion of the logarithm of the polynomial.
In general, computing these coefficients naively will take quasipolynomial-time. %, even if  $\delta$ is a constant.
However, Patel and Regts \cite{PR17a} have provided additional insights on how to compute these coefficients efficiently for a large family of graph polynomials in bounded degree graphs.
As explained in~\cite{LSS17}, the idea of Patel and Regts \cite{PR17a} can be applied to the partition functions of spin systems in much more generality, which includes $\Zspin(G;\lambda)$ that we are interested in. 
Thus, combining the algorithmic paradigm of Barvinok \cite[Section 2]{Bar16} and the idea of Patel and Regts \cite{PR17a}, we have the following useful lemma.

\begin{lemma} \label{lem:zeros-ferro-2-spin}
  Fix $\beta$, $\gamma$ and an integer $\Delta\ge 2$.
  Let $G$ be a graph of maximum degree $\Delta$.
  If $\Zspin(G;\lambda)$ does not vanish in a $\delta$-strip containing $[0,\lambda']$,
  then there is an FPTAS for $\Zspin(G;\lambda)$ for all $\lambda\in[0,\lambda']$.
\end{lemma}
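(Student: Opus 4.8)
The plan is to obtain the FPTAS by combining Barvinok's polynomial-interpolation method \cite[Section~2]{Bar16} with the efficient coefficient-extraction technique of Patel and Regts \cite{PR17a}, following the adaptation to spin systems in \cite{LSS17}.

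First I would record the structure of the polynomial. Writing $e(H)$ for the number of edges of a graph $H$ and grouping configurations by the set $S=\sigma^{-1}(1)$ of vertices assigned $1$, we have $\Zspin(G;\lambda)=\sum_{S\subseteq V}\beta^{e(G[V\setminus S])}\gamma^{e(G[S])}\lambda^{\abs{S}}$, a polynomial in $\lambda$ of degree $n=\abs{V}$ with nonzero constant term $\beta^{e(G)}$. Fix any $\lambda\in[0,\lambda']$; since the $\delta$-strip of $[0,\lambda]$ is contained in the $\delta$-strip of $[0,\lambda']$, the hypothesis gives that $\Zspin(G;\cdot)$ is zero-free there. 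Barvinok's method then applies: one truncates the Taylor series of $\log\Zspin(G;\lambda)$ around $\lambda=0$ at some order $k$, pre-composes with the standard low-degree polynomial map carrying the unit disk into the $\delta$-strip (making the composed polynomial zero-free on the disk), and bounds the truncation tail using zero-freeness. This produces a number $\zeta$ with $\abs{\zeta-\log\Zspin(G;\lambda)}\le\eps$ — hence $e^{\zeta}$ is a $(1\pm O(\eps))$-approximation of $\Zspin(G;\lambda)$ — computable from the first $k$ coefficients of $\Zspin(G;\cdot)$ alone, where $k=O\tp{e^{\Theta(1/\delta)}\log(n/\eps)}$. As $\delta$ depends only on the fixed $\beta,\gamma,\lambda'$, this is $k=O(\log(n/\eps))$.

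The remaining and only non-routine task is to compute these first $k$ coefficients (equivalently, the first $k$ Taylor coefficients of $\log\Zspin(G;\lambda)$ at $\lambda=0$, the two being interchangeable via Newton's identities) in time polynomial in $n/\eps$. Here I would invoke the Patel--Regts framework: a cluster/M\"obius expansion expresses the $j$-th such log-coefficient as a sum of local contributions indexed by connected subgraphs of $G$ on at most $j$ vertices. In a graph of maximum degree $\Delta$, the number of connected subgraphs on at most $j$ vertices through a fixed vertex is at most $\tp{e\Delta}^{j}$, so all of them can be enumerated and their contributions summed in time $n\cdot\Delta^{O(j)}$. Taking $j=k=O(\log(n/\eps))$ and using that $\Delta$ is a constant, this is polynomial in $n/\eps$; composing with the first step yields the claimed FPTAS for every $\lambda\in[0,\lambda']$.

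The step I expect to need the most care is the last one: one must check that $\Zspin(G;\lambda)$ genuinely satisfies the efficiency condition required by \cite{PR17a} (a locality/bounded-induced-degree requirement on the associated graph polynomial), which is met here because $G$ has bounded maximum degree. This is exactly the observation of \cite{LSS17} that the Patel--Regts method extends from the independence polynomial to edge-interaction spin-system partition functions; carrying it over to the present $(\beta,\gamma)$-weighted model is straightforward, and everything else is a black-box application of \cite{Bar16}.
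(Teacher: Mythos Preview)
Your proposal is correct and follows exactly the route the paper indicates: the paper does not give a self-contained proof of this lemma but simply states it as a direct consequence of combining Barvinok's interpolation method \cite[Section~2]{Bar16} with the coefficient-computation technique of Patel and Regts \cite{PR17a}, invoking \cite{LSS17} for the observation that the latter applies to spin-system partition functions in bounded degree. Your write-up fleshes out precisely these ingredients (Taylor truncation after the strip-to-disk map, Newton's identities, enumeration of connected subgraphs of bounded size), so there is nothing to add.
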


In fact, as it has been observed in~\cite{PR17a}, the algorithm can be extended to a multivariate version of the partition fucntion easily. 
Let $\vec{ \lambda}\in \=C^{V}$ be a vector that specifies an external field for each vertex.
The multivariate partition function is given by
\begin{align}\label{eqn:Z-multi}
  \Zspin(G;\beta,\gamma,\vec{ \lambda})\defeq
  \sum_{\sigma:V \rightarrow \{0,1\}}\beta^{m_0(\sigma)} \gamma ^{m_1(\sigma)} \prod_{v \in V} \lambda_v^{[\sigma(v) = 1]}.
\end{align}

\begin{lemma} \label{lem:zeros-ferro-2-spin-multi}
  Fix $\beta$, $\gamma$ and an integer $\Delta\ge 2$.
  Let $G$ be a graph of maximum degree $\Delta$ and $n=\abs{V}$.
  If $\Zspin(G;\vec{\lambda})$ does not vanish in a $\delta$-polystrip %containing the poly-region $[0,\lambda']^n$,
  $ \set{\vec{z}\in\=C^n\mid \forall i\in [n], \abs{\Im z_i}\le\delta\text{ and }-\delta\le\Re z_i\le \lambda'+\delta}$,
  then there is an FPTAS for $\Zspin(G;\vec{\lambda})$ for all $\vec{\lambda}\in[0,\lambda']^n$.
\end{lemma}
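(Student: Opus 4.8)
The plan is to reduce the multivariate partition function to a single-variable polynomial, apply Barvinok's interpolation method to that polynomial, and invoke the efficient coefficient-computation technique of Patel and Regts; I expect only the last ingredient to be non-routine. Concretely, I would fix a target external field $\vec\lambda\in[0,\lambda']^n$ and pass to the univariate polynomial $Q(t)\defeq\Zspin(G;t\vec\lambda)$. By \eqref{eqn:Z-multi},
\[
  Q(t)=\sum_{\sigma:V\to\{0,1\}}\beta^{m_0(\sigma)}\gamma^{m_1(\sigma)}\Bigl(\prod_{v:\sigma(v)=1}\lambda_v\Bigr)t^{n_1(\sigma)},
\]
a polynomial in $t$ of degree at most $n=\abs V$ with nonzero constant term $Q(0)=\beta^{\abs E}$. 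The point of this substitution is that a constant-width strip around $[0,1]$ is mapped into the $\delta$-polystrip: if $\abs{\Im t}\le\delta/\lambda'$ and $-\delta/\lambda'\le\Re t\le1+\delta/\lambda'$, then for every vertex $v$ we have $\abs{\Im(t\lambda_v)}=\lambda_v\abs{\Im t}\le\delta$ and $\Re(t\lambda_v)=\lambda_v\Re t\in[-\delta,\lambda'+\delta]$, using $0\le\lambda_v\le\lambda'$. Hence $Q$ is zero-free in a strip of half-width $\delta'\defeq\delta/\lambda'$ around $[0,1]$, which is a constant since $\delta$ and $\lambda'$ are fixed.

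Next I would apply Barvinok's method (the framework recalled before \Cref{lem:zeros-ferro-2-spin}, see \cite[Section~2]{Bar16}) to $Q$: because $Q$ does not vanish on a constant-width strip containing $[0,1]$, the value $Q(1)=\Zspin(G;\vec\lambda)$ is $(1\pm\eps)$-approximated by exponentiating the Taylor expansion of $\log Q$ at $0$ truncated after $k=O\bigl(e^{\Theta(1/\delta')}\log(n/\eps)\bigr)=O(\log(n/\eps))$ terms, and this truncation depends only on the coefficients $c_0,\dots,c_k$ of $Q$.

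The remaining and, I expect, hardest step is to compute $c_0,\dots,c_k$ — equivalently, the first $k$ Taylor coefficients of $\log Q$ at $0$ — in time polynomial in $n$ and $1/\eps$: writing $c_j=\sum_{\sigma:\,n_1(\sigma)=j}\beta^{m_0(\sigma)}\gamma^{m_1(\sigma)}\prod_{v:\sigma(v)=1}\lambda_v$ directly has $\binom nj$ terms, which is only quasi-polynomial. Here I would use that $\Zspin$ is multiplicative over disjoint unions of graphs, so that $\log\Zspin$ admits a cluster expansion whose $j$-th coefficient is a sum over connected subgraphs of $G$ on $O(j)$ vertices, each term evaluable by brute force over the $2^{O(j)}$ configurations of that small subgraph. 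Since $G$ has maximum degree $\Delta$, there are at most $n\cdot(e\Delta)^{s}$ connected subgraphs on $s$ vertices, so for $s=O(k)=O(\log(n/\eps))$ this count is $n\cdot\Delta^{O(\log(n/\eps))}=(n/\eps)^{O(1)}$ because $\Delta$ is a constant; the small-subgraph evaluations are likewise cheap. This last step is exactly the extension of the algorithm of Patel and Regts \cite{PR17a} to spin-system partition functions observed in \cite{LSS17}, and combining it with the previous steps yields the claimed FPTAS — the reduction and the Barvinok truncation bound being routine once the polystrip zero-freeness is granted.
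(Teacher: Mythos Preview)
Your proposal is correct and follows essentially the same route as the paper: reduce to the univariate polynomial $Q(t)=\Zspin(G;t\vec\lambda)$, observe that the $\delta$-polystrip hypothesis yields a constant-width zero-free strip for $Q$ around $[0,1]$, and then apply Barvinok's method together with Patel--Regts. The only difference is packaging: the paper invokes \Cref{lem:zeros-ferro-2-spin} directly (which already bundles Barvinok's truncation with the Patel--Regts coefficient computation), whereas you unpack those ingredients explicitly and even give the concrete strip width $\delta'=\delta/\lambda'$.
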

\begin{proof}
	For any $\vec{\lambda}\in[0,\lambda']^n$, we consider the univariate polynomial $f(t) = \Zspin(G;t\cdot \vec{\lambda})$.
	On the one hand, $f(1)=\Zspin(G;\vec{\lambda})$ is the quantity what we want to approximate.
	On the other hand,  the fact that $\Zspin(G;\vec{\lambda})$ does not vanish in a $\delta$-polystrip containing the poly-region $[0,\lambda']^n$ 
	implies that there exists a $\delta'>0$ (depending on $\delta$ and $\lambda'$), such that $f(t)$ does not vanish in a $\delta'$-strip containing $[0,1]$.
    Hence, applying \Cref{lem:zeros-ferro-2-spin} on $f(t)$ yields our desired FPTAS for $\Zspin(G;\vec{\lambda})$.
\end{proof}
We note that for any fixed $\beta$, $\gamma$, $\lambda$, and $\Delta$,
our FPTAS runs in time bounded by a polynomial in $n=\abs{V}$ and $1/\varepsilon$.
However, as is typical for deterministic counting algorithms,
the exponent can grow with $\Delta$ and other parameters as they approach the threshold.

\section{The contraction method}

We use the contraction method to show zero-freeness for a $\delta$-strip containing part of the non-negative real line.
The contraction method is an important technique of bounding the zeros of graph polynomials \cite{Asa70, Rue71}.
It was first introduced by Asano \cite{Asa70} as an alternative way of proving the celebrated Lee-Yang circle theorem \cite{LY52}.

The contraction method has two main ingredients.
Firstly we want to relate zeros of a univariate polynomial with those of its polar form.
For a polynomial $P(z)=\sum_{i=0}^{d'} a_i z^i$ of degree $d'\le d$,
its $d$-th polar form with variables $\*z=(z_1,\dots,z_d)$ is
\begin{align*}
  \widehat{P}(\*z)\defeq\sum_{I\subseteq[d]}\frac{a_{\abs{I}}}{\binom{d}{\abs{I}}}z_I,
\end{align*}
where $a_i =0$ if $i> d'$, $[d]$ denotes $\{1,2,\dots,d\}$,
and for an index set $I$, $z_I=\prod_{i\in I}z_i$.
The polar form $\widehat{P}(\*z)$ is the unique multi-linear symmetric polynomial of degree at most $d'$ such that
$\widehat{P}(z,z,\dots,z) = P(z)$.
When $d'<d$, we view $P(z)$ as a degenerate case,
and it has zeros at $\infty$ with multiplicity $d-d'$.

Let $\+C$ be a region in $\=C$.
We say a polynomial $P(\*z)$ in $d\ge 1$ variables is \emph{$\+C$-stable} if $P(\*z)\neq 0$ whenever $z_1,\dots,z_d\in \+C$.
We call $\+C$ \emph{a circular region} if it is a disk, a half plane (a disk whose center is at infinity), or the complement of a disk in $\=C$.\footnote{Including complements of disks is slightly more general than what is usually stated, but this definition suits our purposes better and \Cref{prop:Grace-Walsh-Szego} is still true with this definition. See for example \cite[Section 3, Theorem 3.41b]{RS02}.}

The Grace-Szeg\H{o}-Walsh coincidence theorem \cite{Gra02,Sze22,Wal22} has the following immediate consequence.
\begin{proposition}  \label{prop:Grace-Walsh-Szego}
  If $\+C$ is a circular region,
  then a univariate polynomial $P(z)$ is $\+C$-stable if and only if its polar form $\widehat{P}(\*z)$ is $\+C$-stable.
\end{proposition}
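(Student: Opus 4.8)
The plan is to deduce the proposition from the Grace--Szeg\H{o}--Walsh coincidence theorem itself, which I will take as given: it is precisely the content of \cite{Gra02,Sze22,Wal22}, and the variant that permits $\+C$ to be the complement of a disk is recorded as \cite[Theorem~3.41b]{RS02}. In the form I will use it, the coincidence theorem states that if $A(z_1,\dots,z_d)$ is symmetric and affine in each of its variables and $\+C$ is a circular region, then for any $w_1,\dots,w_d\in\+C$ there is a \emph{single} point $\zeta\in\+C$ with $A(w_1,\dots,w_d)=A(\zeta,\dots,\zeta)$. By construction the polar form $\widehat P(\*z)$ is symmetric and multi-affine (each monomial $z_I$ has degree at most $1$ in each variable and its coefficient depends only on $|I|$), so it is a legitimate input to the theorem.

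Granting this, the two implications are short. For ``$\widehat P$ stable $\Rightarrow$ $P$ stable'' I would specialise: if $z\in\+C$ then $(z,\dots,z)\in\+C^{\,d}$, hence $P(z)=\widehat P(z,\dots,z)\neq 0$. For the reverse implication, assume $P$ is $\+C$-stable and fix arbitrary $z_1,\dots,z_d\in\+C$; the coincidence theorem supplies some $\zeta\in\+C$ with $\widehat P(z_1,\dots,z_d)=\widehat P(\zeta,\dots,\zeta)=P(\zeta)$, and $P(\zeta)\neq 0$ because $\zeta\in\+C$ and $P$ is $\+C$-stable. As $z_1,\dots,z_d$ were arbitrary in $\+C$, this shows $\widehat P$ is $\+C$-stable, and the equivalence follows.

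The only points that actually require care, and where I expect the real (if modest) effort to lie, are the degenerate case $\deg P = d' < d$ and the mildly nonstandard notion of ``circular region.'' For the degenerate case one must check that the convention ``$P$ has zeros at $\infty$ of multiplicity $d-d'$'' is consistent with both notions of stability: when $\+C$ is a bounded disk, $\infty\notin\+C$ and there is nothing to verify; when $\+C$ is a half plane or the complement of a disk, the convention declares such a $P$ not $\+C$-stable, and correspondingly the degree-$d$ homogenisation of $\widehat P$ takes the value $a_d\prod_i x_i$ at the point with all coordinates equal to $\infty$, which is identically zero exactly when $a_d=0$, so $\widehat P$ is not $\+C$-stable either. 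For the circular-region point, the classical coincidence theorem already handles disks and half planes directly; for $\+C$ the complement of a disk one invokes \cite[Theorem~3.41b]{RS02}, or reduces to the disk case through a M\"obius transformation, using that $\+C$-stability of a polynomial of degree at most $d$ is invariant under the induced $GL_2(\=C)$-action on its degree-$d$ homogenisation and that this action sends circular regions to circular regions. None of this introduces an idea beyond the cited coincidence theorem, which is essentially the whole content of the proposition.
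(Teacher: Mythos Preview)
Your proposal is correct and matches the paper's approach exactly: the paper does not give a proof at all, merely stating that the proposition is an ``immediate consequence'' of the Grace--Szeg\H{o}--Walsh coincidence theorem \cite{Gra02,Sze22,Wal22}, with the footnote pointing to \cite[Theorem~3.41b]{RS02} for the complement-of-a-disk case. You have supplied precisely the standard two-line derivation the paper omits, together with a careful discussion of the degenerate case $d'<d$ and the extended notion of circular region, both of which the paper handles only by convention and citation.
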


The next ingredient is the Asano contraction \cite{Asa70,Rue71}.
We will use a slightly different version than the standard one.

\begin{lemma}  \label{lem:contraction}
  Let $K$ be a closed subsets of the complex plane $\=C$, which do not contain $0$,
  and $d\ge 1$ be an integer.
  If the complex polynomial
  \begin{align*}
    P(\*z):=\sum_{I\subseteq [d]} c_I \prod_{i\in I} z_i
  \end{align*}
  can vanish only when $z_i\in K$ for some $i\in[d]$,
  then
  \begin{align*}
    Q(z):=c_{\emptyset}+ c_{[d]} z
  \end{align*}
  can vanish only when $z\in \prodK{d} \defeq(-1)^{d+1} K \cdot K \cdots K$ ($d$ times).
\end{lemma}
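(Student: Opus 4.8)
The plan is to collapse the $d$ variables of $P$ onto a single one via the diagonal specialisation $z_1=z_2=\cdots=z_d=z$, and then to extract the conclusion from Vieta's formula for the product of the roots of a univariate polynomial.

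First I would introduce
\[
  p(z)\ \defeq\ P(z,z,\dots,z)\ =\ \sum_{j=0}^{d}\Bigl(\sum_{\abs{I}=j}c_I\Bigr)z^{j},
\]
whose constant coefficient is $c_{\emptyset}$ and whose coefficient of $z^{d}$ is $c_{[d]}$. The point of the substitution is that if $z$ is a zero of $p$, then $P$ vanishes at the point $(z,z,\dots,z)\in\=C^{d}$; by hypothesis some coordinate of that point lies in $K$, and all of them equal $z$, so $z\in K$. Thus \emph{every zero of $p$ lies in $K$}. In particular $0$ is not a zero of $p$ (because $0\notin K$), which means $c_{\emptyset}=p(0)\neq 0$; and if $c_{[d]}=0$ then $Q(z)=c_{\emptyset}$ is a nonzero constant and there is nothing to prove, so from here on I assume $c_{[d]}\neq 0$.

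Then $p$ has degree exactly $d$ and factors as $p(z)=c_{[d]}\prod_{j=1}^{d}(z-r_j)$ with $r_1,\dots,r_d\in K$ by the previous paragraph. Evaluating at $z=0$ gives $c_{\emptyset}=p(0)=c_{[d]}(-1)^{d}\,r_1r_2\cdots r_d$, that is,
\[
  -\frac{c_{\emptyset}}{c_{[d]}}\ =\ (-1)^{d+1}\,r_1r_2\cdots r_d .
\]
The left-hand side is precisely the unique zero of $Q(z)=c_{\emptyset}+c_{[d]}z$, and since each $r_j\in K$ the right-hand side lies in $(-1)^{d+1}K\cdot K\cdots K=\prodK{d}$. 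Hence $Q$ can vanish only at a point of $\prodK{d}$, as claimed. (The degenerate reading $c_{\emptyset}=0$ cannot occur: it would place a zero of $P$ at the origin with no coordinate in $K$; and the convention that $Q$ carries a zero at $\infty$ when $c_{[d]}=0$ is consistent with reading $\prodK{d}$ as a Minkowski product on $\widehat{\=C}$, though this case does not arise in the applications, where $c_{[d]}\neq 0$.)

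I do not expect a genuine obstacle once the diagonal specialisation is in place: everything after it is a one-line computation with Vieta's formula, and the assumption $0\notin K$ is used only to guarantee $c_{\emptyset}\neq 0$. The more classical route — iterating the two-variable Asano contraction $(z_i,z_j)\mapsto w$ and invoking \Cref{prop:Grace-Walsh-Szego} to pass between a polynomial and its polar form at each step — would also work, but there the real work is concentrated in the two-variable base case, namely controlling the image of $\=C\setminus K$ under a Möbius transformation whose coefficients depend on the remaining $c_I$'s; the diagonal argument sidesteps this entirely.
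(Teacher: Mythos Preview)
Your argument is correct and is essentially identical to the paper's own proof: both specialise to the diagonal $p(z)=P(z,\dots,z)$, observe that every root of $p$ must lie in $K$, dispose of the case $c_{[d]}=0$ using $c_\emptyset=p(0)\neq 0$, and then read off the unique zero of $Q$ from Vieta's formula for the product of the roots of $p$. The only difference is cosmetic---you spell out a couple of steps the paper leaves implicit.
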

\begin{proof}
  If $c_{[d]} = 0$, since $0\not\in K$, $P(0,0,\dots,0) = c_{\emptyset}\neq 0$.
  Thus, $Q(z)=c_{\emptyset}\neq 0$ for any $z$.

  Otherwise $c_{[d]}\neq 0$. 
  Consider the univariate polynomial $\widetilde{P}(z)=P(z,z,\dots,z)$ of degree $d$ and let $\zeta_1,\dots,\zeta_d$ be its roots.
  Clearly $\zeta_i\in K$ for all $i\in [d]$ because of the assumption.
  Thus, by Vieta's formula,
  \begin{align*}
    \prod_{i=1}^d \zeta_i = (-1)^d \frac{c_{\emptyset}}{c_{[d]}}.
  \end{align*}
  It implies that $-\frac{c_{\emptyset}}{c_{[d]}}\in \prodK{d}$.
\end{proof}

Some form of \Cref{lem:contraction} was first discovered by Asano \cite{Asa70} to provide a simple and alternative proof for the celebrated Lee-Yang circle theorem \cite{LY52},
where one chooses $K$ to be the unit disk or its complement.
The contraction method was further extended by Ruelle \cite{Rue71} and applied to subgraph counting polynomials \cite{Rue99a},
where one chooses $K$ to be half planes.
This choice has also found some algorithmic success recently \cite{GLLZ19}.
As we will see in the next section,
our choices are much more intricate,
including both disks and their complements,
and the center and radius are carefully calculated so that the result is optimal for the contraction method.

\section{Analyzing the contraction}

To apply \Cref{lem:contraction}, we will choose $K$ to be a closed circular region.
The specific choice will depend on the positivity of the following quantity:
\begin{align}\label{eqn:DET}
  \DET\defeq\log\sqrt{\frac{\beta}{\gamma}} - \ArcTan{\sqrt{\beta\gamma-1}}\cdot\sqrt{\beta\gamma-1}.
\end{align}
The main case is when $\DET<0$, which include the case of $\gamma>1$.
However, when $\gamma$ is sufficiently close to $1/\beta$, $\DET\ge0$ and we need a different solution.

\subsection{\texorpdfstring{$\DET<0$}{Phi < 0}}
\label{sec:phi<0}

In this case we choose the circular region to be the open disk centered at some real $c\ge0$ with radius $r>0$, denoted by $\+D(c,r)$.
Namely, $\+D(c,r)=\{z\in\=C\mid\abs{z-c} < r\}$.
Let $\+C(c,r)\defeq\partial\+D(c,r)$ be the circle centered at $c$ with radius $r$.
The region $K$ in \Cref{prop:Grace-Walsh-Szego} and \Cref{lem:contraction} will be chosen as the complement of the disk $\Complement{\+D(c,r)}$.
An illustration of $K$ and $\+K_i$ for $i=2,3,4$ is given in \Cref{fig:contraction}.

\begin{figure}[htpb]
  \centering
  \begin{minipage}{.2\linewidth}
    \includegraphics[width=\linewidth]{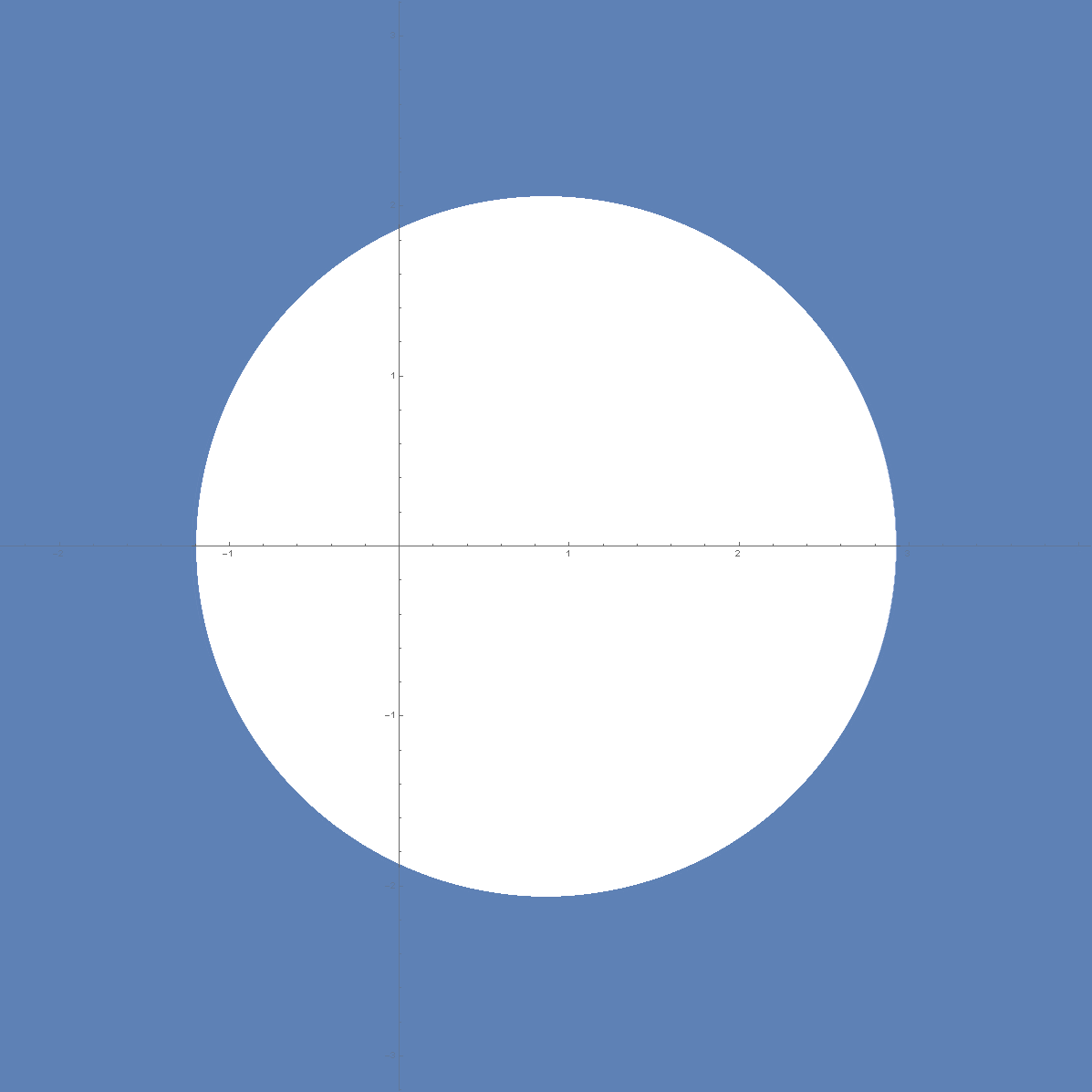}
    \captionsetup{width=0.9\linewidth}
    \captionof*{figure}{$K=\Complement{\+D(c,r)}$}
  \end{minipage}
  \hspace{0.5em}
  \begin{minipage}{.2\linewidth}
    \includegraphics[width=\linewidth]{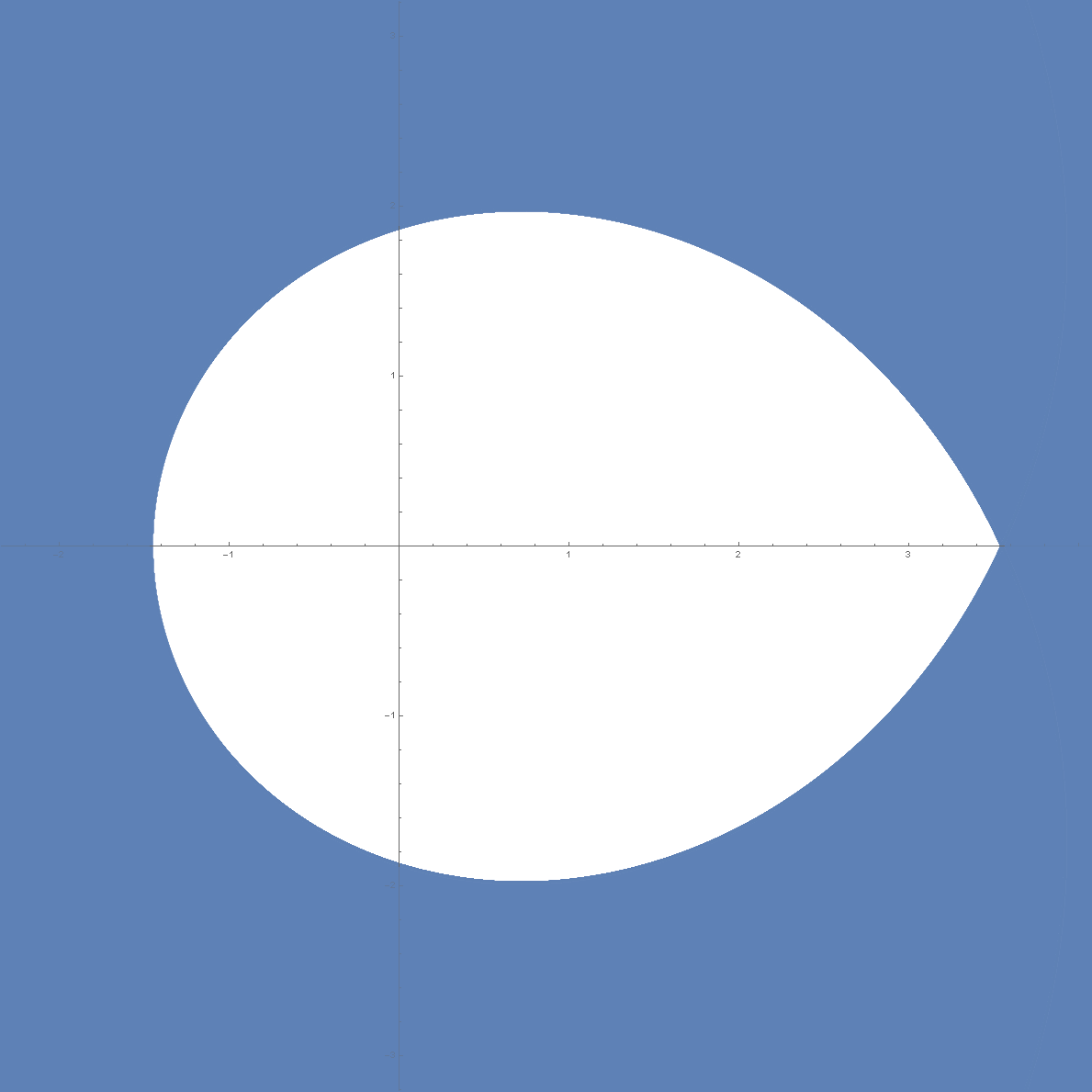}
    \captionsetup{width=0.9\linewidth}
    \captionof*{figure}{$\+K_2=-K\cdot K$}
  \end{minipage}
  \hspace{0.5em}
  \begin{minipage}{.2\linewidth}
    \includegraphics[width=\linewidth]{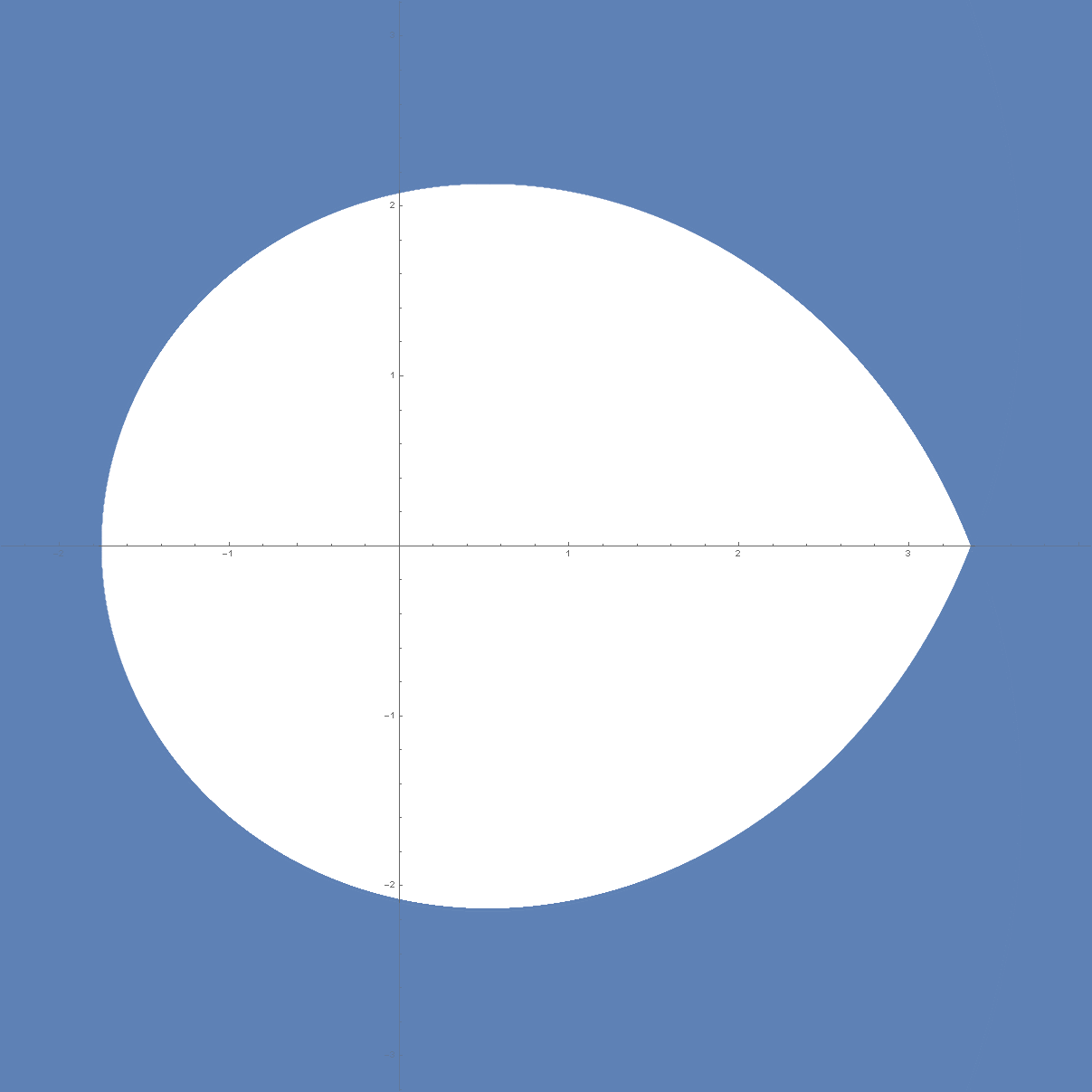}
    \captionsetup{width=0.9\linewidth}
    \captionof*{figure}{$\+K_3=K\cdot K\cdot K$}
  \end{minipage}
  \hspace{0.5em}
  \begin{minipage}{.2\linewidth}
    \includegraphics[width=\linewidth]{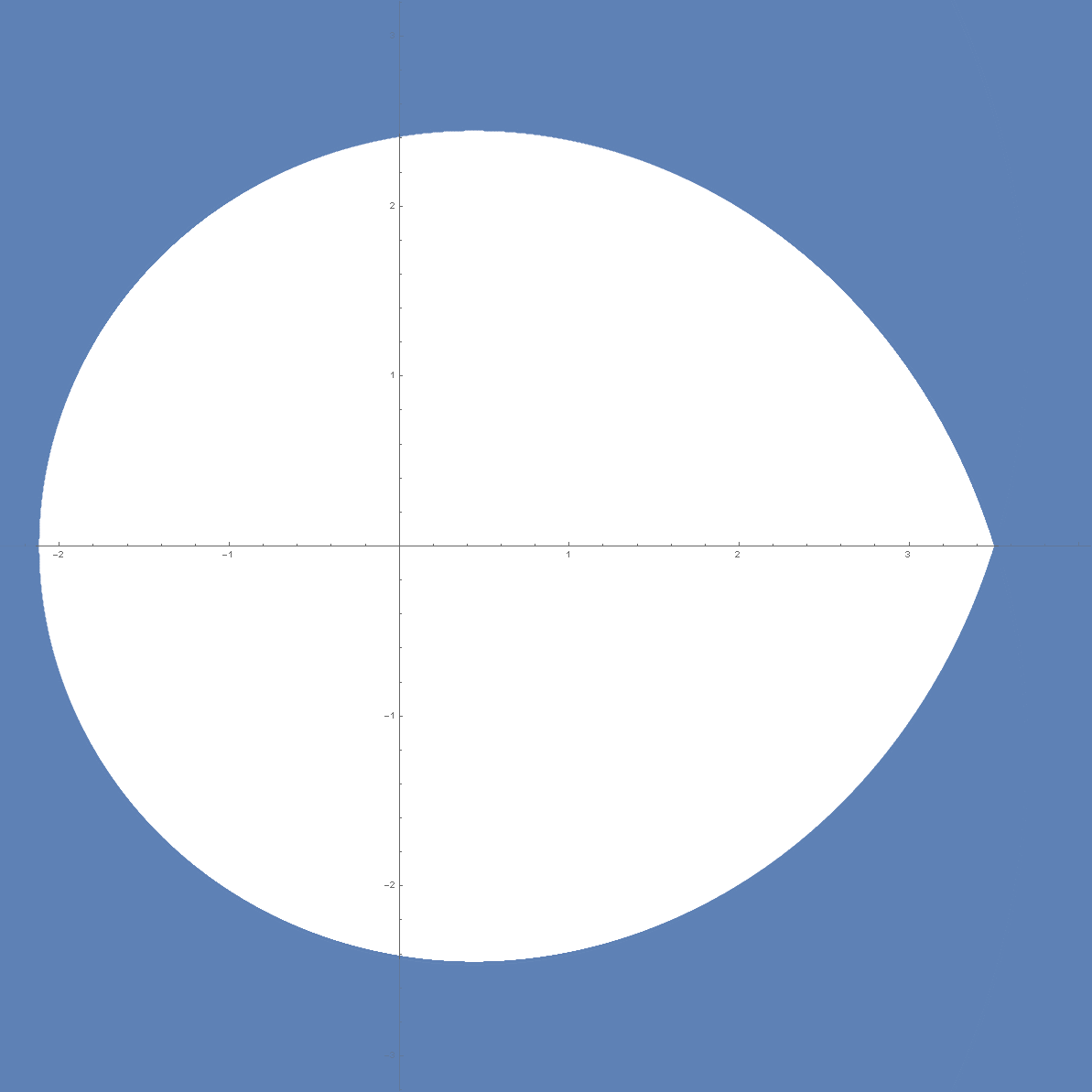}
    \captionsetup{width=0.9\linewidth}
    \captionof*{figure}{$\+K_4 = -K^4$}
  \end{minipage}
  \caption{Our region $K=\Complement{\+D(c,r)}$, $\+K_2$, $\+K_3$ and $\+K_4$ in the case of $\beta=3$ and $\gamma=\frac 4 3$.
  Here, the intercept of $\+K_d$ on the positive real line is minimised at $d=3$ for all $d\ge 2$.}
  \label{fig:contraction}
\end{figure}

When there is a single edge, the partition function is $\gamma \lambda^2+2\lambda+\beta$.
Due to the ferromagnetic assumption $\beta\gamma>1$, 
the equation $\gamma x^2+ 2x+\beta=0$ have two complex roots:
\begin{align}  \label{eqn:zeta}
  \zeta_1&=\frac{-1+\sqrt{1-\beta\gamma}}{\gamma}, & \zeta_2&=\frac{-1-\sqrt{1-\beta\gamma}}{\gamma}.
\end{align}
In particular $\abs{\zeta_1}=\abs{\zeta_2}=\sqrt{\frac{\beta}{\gamma}}$.
We will ensure that $\zeta_i$ lies on the boundary of the disk $\+D(c,r)$ we are choosing.
Namely, once $c$ is fixed, $r=r(c,\beta,\gamma)$ will be chosen to satisfy the following equation
\begin{align}  \label{eqn:r}
  \frac{\beta\gamma-1}{\gamma^{2}}+\left( c+\frac{1}{\gamma} \right)^2 = r^2.
\end{align}
Eventually, we will choose $c$ to be
\begin{align}  \label{eqn:c}
  \cstar\defeq\frac{-\beta\log\sqrt{\frac{\beta}{\gamma}}}{\DET}.
\end{align}
%The choice of $c$ will be determined later.
We remark that most of the argument in this subsection does not require $\Phi < 0$, but only requires that $0\le c<r$ is a positive real number.
The condition $\DET<0$ is only needed in the end, where we have to choose $c$. %, the assumption that $\DET<0$ will be useful.

For some integer $d$, we want to argue that the complement of $\prodK{d}$ where $K=\Complement{\+D(c,r)}$ does not contain a neighbourhood of $[0,\lambda]$ for some $\lambda>0$.
Consider the following program:
\begin{align}
  \label{eqn:program}
  \begin{split}
  \min\quad & \prod_{i=1}^d r_i \\
  \text{subject to}\quad & \sum_{i=1}^d\theta_i = 
  \begin{cases}
    0 \mod 2\pi & \text{if $d$ is odd;}\\
    \pi \mod 2\pi & \text{if $d$ is even;}\\
  \end{cases}\\
  \quad & \forall i\in[d],\ r_i\ge 0 \text{ and } 0\le  \theta_i < 2\pi;\\
  \quad & \forall i\in[d],\ \abs{r_ie^{\ii\theta_i}-c}\ge r.    
  \end{split}
\end{align}
The last constraint ensures that $z_i\defeq r_ie^{\ii\theta_i}\in \Complement{\+D(c,r)}$ and the objective is to minimise the smallest positive real value in $\prodK{d}$.
An illustration is given in \Cref{fig:feasible}.

\def\cexample{0.863712}
\def\rexample{2.2}
\def\bounds{3}

\begin{figure}[h]
  \centering
  \begin{minipage}{.4\textwidth}
    \centering
    \begin{tikzpicture}[scale=1,transform shape]
        \begin{axis}[xtick=\empty, ytick=\empty, 
            trig format plots=rad, 
            axis lines = middle, axis line style={->}, axis on top,axis equal,
            axis background/.style={fill=blue!10!white},
            xmin=-1.5, xmax=3, ymin=-3, ymax=3,
            x label style={at={(axis cs:4,0)},anchor=north},
            y label style={at={(axis cs:0,3)},anchor=east},            
            xlabel = $\Re$, ylabel=$\Im$]
            \draw[thick, fill=white](axis cs:\cexample,0) circle [draw=black, radius=5.4em];  
          \node [label={270:{$c$}},circle,fill,inner sep=1pt] at (axis cs:{\cexample},0) {};
          \node [label={235:{$0$}},circle,fill,inner sep=1pt] at (axis cs:0,0) {};          
          \node [label={180:{$\zeta_1$}},circle,fill,inner sep=1pt] at (axis cs:-3/4,3/2) {};
          \node [label={180:{$\zeta_2$}},circle,fill,inner sep=1pt] at (axis cs:-3/4,-3/2) {};
          \node [label={90:{$z_i$}},circle,fill,inner sep=1pt] at (axis cs:1.96371,1.90526) {};
          \draw (axis cs:1.96371,1.90526) -- (axis cs:{\cexample},0) node [black,midway,xshift=8pt] {$r$};          
          \draw (axis cs:1.96371,1.90526) -- (axis cs:0,0) node [black,midway,yshift=8pt] {$r_i$};

          \draw[black] (axis cs:0,0)+(0:1em) arc (0:44.1:1em);
          \node [] at (axis cs:0.65,0.27) {$\theta_i$};

%          \begin{scope}[on background layer]
%            \clip (axis cs:\cexample,0) circle[radius=5.4em] [reverseclip];
%            \fill[blue!10!white] (axis cs:-3,-3) rectangle (axis cs:3,3);
%          \end{scope}
        \end{axis}
      \end{tikzpicture}
      \captionsetup{width=0.9\linewidth}
      \captionof*{figure}{Parameters: $\beta=3$, $\gamma=4/3$. In this case $\DET<0$ and $c>0$.}
  \end{minipage}
  \hspace{4em}
  \begin{minipage}{.4\textwidth}
    \centering
      \begin{tikzpicture}[scale=1,transform shape]
        \begin{axis}[xtick=\empty, ytick=\empty, 
            trig format plots=rad, 
            axis lines = middle, axis line style={->}, axis equal,axis on top,
            xmin=-35, xmax=5, ymin=-20, ymax=20,
            x label style={at={(axis cs:8.5,0)},anchor=north},
            y label style={at={(axis cs:0,20)},anchor=east},            
            xlabel = $\Re$, ylabel=$\Im$]

          \draw[thick, fill = blue!10!white](axis cs:-16.3528,0) circle [draw=black, radius=5.45em];  
          \node [label={270:{$c$}},circle,fill,inner sep=1pt] at (axis cs:{-16.3528},0) {};
          \node [label={-45:{$0$}},circle,fill,inner sep=1pt] at (axis cs:0,0) {};          
          \node [label={120:{$\zeta_1$}},circle,fill,inner sep=1pt] at (axis cs:-2,{2*sqrt(3)}) {};
          \node [label={240:{$\zeta_2$}},circle,fill,inner sep=1pt] at (axis cs:-2,{-2*sqrt(3)}) {};
          \node [circle,fill,inner sep=1pt] at (axis cs:-1.735,2.31523) {};
          \draw (axis cs:-1.735,2.31523) -- (axis cs:{-16.3528},0) node [black,midway,yshift=6pt] {$r$};
          \draw (axis cs:-1.735,2.31523) -- (axis cs:0,0);

          \draw[black] (axis cs:0,0)+(0:0.5em) arc (0:125:0.5em);
%          \node [] at (axis cs:-0.7,4) {$r_i$};
          \node [] at (axis cs:-3,1) {$z_i$};
          \node [] at (axis cs:3,2) {$\theta_i$};
        \end{axis}
      \end{tikzpicture} 
      \captionsetup{width=0.9\linewidth}
      \captionof*{figure}{Parameters: $\beta=4$, $\gamma=1/2$. In this case $\DET>0$ and $c<0$.}    
  \end{minipage}
  \caption{Illustrations for the programs \eqref{eqn:program} and \eqref{eqn:program2}.
  Feasible regions are colored blue.}
  \label{fig:feasible}
\end{figure}

The usefulness follows from the next lemma.
Let the optimal value of \eqref{eqn:program} be $\lambdastar{d}$. 
Thus $[0,\lambdastar{d})\cap \prodK{d}=\emptyset$.
The lemma below follows from the fact that the complex plane is a normal Hausdorff space and both $\prodK{d}$ and $[0,\lambda]$ are closed sets for any $\lambda>0$.

\begin{lemma}\label{lem:compliment}
  For any $d\ge 1$ and any $\lambda<\lambdastar{d}$, there is a $\delta$-strip containing $[0,\lambda]$ that does not intersect $\prodK{d}$ for some small $\delta$.
\end{lemma}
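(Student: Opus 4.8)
This is a soft separation statement, and I would deduce it from standard point-set topology once the definition of $\lambdastar{d}$ is unwound. By construction $\lambdastar{d}$ is the optimal value of the program \eqref{eqn:program}, which (as already observed) gives $[0,\lambdastar{d})\cap\prodK{d}=\emptyset$; hence for $\lambda<\lambdastar{d}$ the compact interval $[0,\lambda]$ is disjoint from $\prodK{d}$. The plan is then: (i) show $\prodK{d}$ is a closed subset of $\=C$; (ii) conclude that the Euclidean distance $\eta$ between $[0,\lambda]$ and $\prodK{d}$ is strictly positive; (iii) pick $\delta$ small enough that the $\delta$-strip of $[0,\lambda]$ lies in the open $\eta$-neighbourhood of $[0,\lambda]$ and hence misses $\prodK{d}$.

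For step (i): since $K=\Complement{\+D(c,r)}$ is closed and $0\notin K$ (as $0\le c<r$), we have $\rho\defeq\dist(0,K)=r-c>0$, so $\abs{z}\ge\rho$ for every $z\in K$. If $z_1^{(n)}\cdots z_d^{(n)}\to a\in\=C$ with each $z_i^{(n)}\in K$, then boundedness of the products together with $\prod_j\abs{z_j^{(n)}}\ge\rho^{d}$ forces each $\abs{z_i^{(n)}}$ to be bounded above, so all factors remain in a compact subset of $K$; passing to a subsequence gives $z_i^{(n)}\to z_i\in K$ and $a=z_1\cdots z_d\in K\cdots K$. Multiplying by the fixed unit $(-1)^{d+1}$ preserves closedness, so $\prodK{d}$ is closed, and by the same modulus bound $0\notin\prodK{d}$.

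Steps (ii) and (iii) are routine. The function $x\mapsto\dist(x,\prodK{d})$ is continuous and strictly positive on the compact set $[0,\lambda]$, hence attains a positive minimum $\eta$ — this is where the normal Hausdorff (indeed metric) structure of $\=C$ enters, in the sharp ``compact versus closed'' form. Any point $z$ of the $\delta$-strip of $[0,\lambda]$ satisfies $\Re z\in[-\delta,\lambda+\delta]$ and $\abs{\Im z}\le\delta$, so $\dist(z,[0,\lambda])\le\sqrt2\,\delta$; therefore any choice $0<\delta<\eta/\sqrt2$ gives a $\delta$-strip that contains $[0,\lambda]$ and avoids $\prodK{d}$, as required. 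The only step that is not purely formal is the closedness of $\prodK{d}$: a product of arbitrarily many unbounded sets need not be closed in general, and what makes it work here is precisely that $0\notin K$, which confines the factors to a compact region as soon as their product is bounded. All the genuine difficulty of this section lies not in this lemma but in evaluating and bounding the optimum $\lambdastar{d}$ of \eqref{eqn:program}.
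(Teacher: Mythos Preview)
Your argument is correct and follows the same approach as the paper, which simply notes that the lemma ``follows from the fact that the complex plane is a normal Hausdorff space and both $\prodK{d}$ and $[0,\lambda]$ are closed sets.'' You have merely expanded this one-line justification, and in doing so you actually supply the closedness of $\prodK{d}$ (via the compactness-of-factors argument exploiting $0\notin K$), a point the paper leaves implicit.
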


It remains to solve the program \eqref{eqn:program}.
Suppose the minimum is achieved by some $\*z=\{r_i e^{\ii\theta_i}\}_{i\in[d]}$.
First assume that there are at least two $z_i$ in the right half plane, say $z_1$ and $z_2$.
In other words, $\theta_i\in\righthalfplane$, for $i=1,2$.
We replace $\theta_1$ and $\theta_2$ by $\theta_1'=\theta_1+\pi\mod2\pi$ and $\theta_2'=\theta_2'+\pi\mod2\pi$.
The effect of this substitution is
\begin{align*}
  \theta_1+\theta_2 & = \theta_1'+\theta_2' \mod 2\pi, & \theta_1',\theta_2'\in\lefthalfplane.
\end{align*}
Moreover, for $i\in\{1,2\}$, if $r_ie^{\ii\theta_i}\in \Complement{\+D(c,r)}$, then $r_ie^{\ii\theta_i'}\in \Complement{\+D(c,r)}$ as well.
This is because that the center of $\+D(c,r)$ is a positive real number.
Therefore, we may assume that there is at most one $z_i$ such that $\theta_i\in\righthalfplane$.

Next observe that if we shrink $r_i$ until $z_i$ is on the circle $\+C(c,r)$, then the optimal value only improves.
Thus we may assume that all $z_i$ are on the circle $\+C(c,r)$.
As a consequence, $r_i$ is determined by $\theta_i$ for all $i\in[d]$.
Indeed, by the cosine law and \eqref{eqn:r},
\begin{align*}  
  r_i^2+c^2-2cr_i\cos \theta_i &= r^2 = \frac{\beta\gamma-1}{\gamma^{2}}+\left( c+\frac{1}{\gamma} \right)^2,
\end{align*}
which implies that
\begin{align*}
  r_i^2-2cr_i\cos \theta_i - \frac{\beta+2c}{\gamma}=0.
\end{align*}
Since one of the solutions is negative, solving $r_i$ we have that $r_i=f(\theta_i)$, where 
\begin{align}\label{eqn:r-theta}
  f(x)\defeq c\cos x+\sqrt{c^2\cos^2x+\frac{\beta+2c}{\gamma}}.
\end{align}

The next lemma states that we can further assume that all $z_i$ on the left half plane to be the same.

\begin{lemma}  \label{lem:Jensen}
  Let $0\le c<r$.
  Suppose all $i\in[k]$, $z_i=r_ie^{\ii\theta_i}$ is on $\+C(c,r)$ and $\theta_i\in\lefthalfplane$.
  Let $\zavg=\ravg e^{\ii\thetaavg}$ be on $\+C(c,r)$ such that $\thetaavg = \frac{1}{k}\sum_{i=1}^k\theta_i$.
  Then, $\prod_{i=1}^k r_i \ge \ravg^k$.
\end{lemma}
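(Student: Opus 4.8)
The plan is to reduce the asserted inequality to the \emph{convexity of $\log f$ on $\lefthalfplane$}, where $f$ is the function from \eqref{eqn:r-theta}, and then invoke Jensen's inequality. First I would record that, since by hypothesis each $z_i$ and also $\zavg$ lie on $\+C(c,r)$ with arguments in $\lefthalfplane$, we have $r_i=f(\theta_i)$ and $\ravg=f(\thetaavg)$; moreover $\thetaavg=\frac1k\sum_i\theta_i\in\lefthalfplane$ because $\lefthalfplane$ is an interval and $\thetaavg$ is an average of its points. Taking logarithms, the target bound $\prod_{i=1}^k r_i\ge\ravg^k$ becomes
\[
  \frac1k\sum_{i=1}^k\log f(\theta_i)\ \ge\ \log f\!\left(\frac1k\sum_{i=1}^k\theta_i\right),
\]
which is exactly Jensen's inequality for $\log f$, provided $\log f$ is convex on $\lefthalfplane$. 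So the whole lemma comes down to this convexity statement.

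To prove convexity I would differentiate twice. Using \eqref{eqn:r} to rewrite $\tfrac{\beta+2c}{\gamma}=r^2-c^2$, write $f(x)=c\cos x+v(x)$ with $v(x)\defeq\sqrt{r^2-c^2\sin^2 x}$; since $c<r$ the radicand is at least $r^2-c^2>0$, so $f>0$ is smooth and $v(x)^2+c^2\sin^2 x=r^2$. From $v'=-\tfrac{c^2\sin x\cos x}{v}$ one gets $f'=-\tfrac{c\sin x}{v}\,f$, hence $(\log f)'=-\tfrac{c\sin x}{v}$, and differentiating once more the expression collapses to
\[
  (\log f)''(x)\ =\ -\,c\cos x\cdot\frac{v(x)^2+c^2\sin^2 x}{v(x)^3}\ =\ -\,\frac{c\,r^2\cos x}{\bigl(r^2-c^2\sin^2 x\bigr)^{3/2}}.
\]
On $\lefthalfplane$ we have $\cos x\le 0$, while $c\ge 0$, $r>0$, and the denominator is positive; hence $(\log f)''\ge 0$ there, so $\log f$ is convex on $\lefthalfplane$, which finishes the proof.

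I do not expect a real obstacle here. The one point worth being careful about is that $(\log f)''$ simplifies to a single term precisely because the disk $\+D(c,r)$ is centered on the real axis --- that is what forces $v^2+c^2\sin^2 x=r^2$ --- and that its sign is favourable exactly on $\lefthalfplane$, which is why the hypothesis $\theta_i\in\lefthalfplane$ cannot be weakened. (For $c>0$ the convexity is strict on the open interval, so equality in Jensen, and hence in the lemma, forces all $\theta_i$ to coincide; this is the form in which the lemma will be used later to pin down the optimum of \eqref{eqn:program}.)
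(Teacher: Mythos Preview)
Your proposal is correct and follows essentially the same approach as the paper: reduce the inequality to Jensen's inequality for $\log f$ on $\lefthalfplane$, and establish convexity by computing $(\log f)''$ and checking its sign. Your change of variables using \eqref{eqn:r} to write the radicand as $r^2-c^2\sin^2 x$ makes the second-derivative computation slightly cleaner than the paper's (which leaves it in terms of $\beta,\gamma$), but the two expressions are identical after substitution and the argument is the same.
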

\begin{proof}
  We just need to show that if $x\in\lefthalfplane$,
  then $g(x):=\log f(x)$ is a convex function and Jensen's inequality applies,
  where $f(x)$ is defined in \eqref{eqn:r-theta}.
  This can be verified by straightforward calculation that
  \begin{align*}
    g''(x)=-\cos x\cdot\frac{c\sqrt{\gamma}(\beta + 2c + c^2 \gamma)}{\left(\beta +2 c+ c^2 \gamma \cos^2 x\right)^{3/2}}\ge0,
  \end{align*}
  as $x\in\lefthalfplane$.
\end{proof}

We still need to handle the possibility that one of $z_i$, say $z_1$, is on the right half plane.

\begin{lemma}  \label{lem:right-half-plane}
  Let $0\le c < r$.
  Let $d\ge2$ be an integer and $k$ be another integer whose parity is the opposite from that of $d$.
  Let $z_1$ and $\zavg$ be two complex numbers on $\+C(c,r)$.
  Suppose that $z_1=r_1e^{\ii \theta_1}$ where $\theta_1\in\righthalfplane$ and $\zavg=\ravg e^{\ii\thetaavg}$ where $\thetaavg\in\lefthalfplane$.
  If $\theta_1+(d-1)\thetaavg=k \pi$ is fixed, 
  then the minimum of $\ravg^{d-1}r_1$ is attained either when $\theta_1=\pi/2$ or $\theta_1=0$.
\end{lemma}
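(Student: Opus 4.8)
The plan is to reduce to a one-variable optimisation in $\theta_1$. Since the constraint determines $\thetaavg=\thetaavg(\theta_1)=\frac{k\pi-\theta_1}{d-1}$, and $z_1,\zavg$ lie on $\+C(c,r)$ so that $r_1=f(\theta_1)$ and $\ravg=f(\thetaavg)$ with $f$ as in \eqref{eqn:r-theta}, we may write $\ravg^{d-1}r_1=e^{\phi(\theta_1)}$ where $\phi(\theta_1):=(d-1)g(\thetaavg(\theta_1))+g(\theta_1)$ and $g:=\log f$. The feasible values of $\theta_1$ form the part of the arc $\righthalfplane$ on which $\thetaavg(\theta_1)\in\lefthalfplane$, and the boundary points of this set are of exactly two kinds: the ends $\theta_1=\pm\pi/2$ of $\righthalfplane$, or a $\theta_1$ at which $\thetaavg$ reaches the boundary of $\lefthalfplane$, i.e.\ $\cos\thetaavg=0$. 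So the goal splits into: (i) show that $\phi$ attains its minimum over the compact closure of the feasible set at a boundary point; and (ii) show that every boundary point lies at $\theta_1\in\{0,\pm\pi/2\}$ — since $f$ takes the same value at $\pm\pi/2$, this is exactly ``$\theta_1=\pi/2$ or $\theta_1=0$''. Two degenerate cases are disposed of immediately: if $c=0$ then $f$ is constant; and if $d=2$ then $k$ is odd, so $\cos\thetaavg=-\cos\theta_1$ and \eqref{eqn:r-theta} yields $f(\theta_1)f(\thetaavg)=\tfrac{\beta+2c}{\gamma}$, a constant. Assume henceforth $c>0$ and $d\ge3$.

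For (i) the first step is to locate the critical points of $\phi$. A direct differentiation (of the kind carried out in the proof of \Cref{lem:Jensen}) gives
\[
  g'(x)=\frac{-c\sin x}{\sqrt{c^{2}\cos^{2}x+\frac{\beta+2c}{\gamma}}},
\]
and rewriting the radicand as $\bigl(c^{2}+\tfrac{\beta+2c}{\gamma}\bigr)-c^{2}\sin^{2}x$ shows that $g'(x)$ is a strictly decreasing function of $\sin x$. Hence $\phi'(\theta_1)=g'(\theta_1)-g'(\thetaavg)$ vanishes iff $\sin\theta_1=\sin\thetaavg$. Since $\theta_1\in\righthalfplane$, where $\sin$ is an increasing bijection onto $(-1,1)$, while $\thetaavg\in\lefthalfplane$, where $\sin$ is a decreasing bijection onto $[-1,1]$, this is equivalent to $\thetaavg=\pi-\theta_1$; combined with $\theta_1+(d-1)\thetaavg=k\pi$ it pins down a single candidate critical point $\theta_1^{\star}=\frac{(d-1-k)\pi}{d-2}$.

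It remains to determine the nature of $\theta_1^{\star}$ and to read off the boundary points. Differentiating again, $\phi''(\theta_1)=g''(\theta_1)+\tfrac{1}{d-1}g''(\thetaavg)$; because $\thetaavg^{\star}=\pi-\theta_1^{\star}$ flips the sign of $\cos$ but not of $\cos^{2}$, the formula for $g''$ in the proof of \Cref{lem:Jensen} gives $g''(\thetaavg^{\star})=-g''(\theta_1^{\star})$, hence $\phi''(\theta_1^{\star})=\tfrac{d-2}{d-1}\,g''(\theta_1^{\star})\le 0$, strictly negative unless $\theta_1^{\star}=\pm\pi/2$, since $g''\le 0$ on $\righthalfplane$ and $d\ge3$. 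Thus any interior critical point of $\phi$ is a (weak) local maximum; as there is at most one, $\phi$ has no interior local minimum, and its minimum over the closed feasible set is attained at a boundary point, proving (i). For (ii): a boundary point of the first kind is already of the desired form, while for one with $\cos\thetaavg=0$ we have $\thetaavg\equiv\pi/2\pmod\pi$, hence $\theta_1=k\pi-(d-1)\thetaavg\equiv k\pi-(d-1)\tfrac{\pi}{2}\pmod\pi$. Here the opposite-parity hypothesis enters: if $d$ is odd then $d-1$ is even and $\theta_1\equiv0\pmod\pi$, forcing $\theta_1=0$; if $d$ is even then $k$ is odd and $d-1$ is odd, so $\theta_1\equiv\pi/2\pmod\pi$, forcing $\theta_1=\pm\pi/2$. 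In all cases the minimiser lies at $\theta_1\in\{0,\pm\pi/2\}$, which is the claim.

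The step I expect to be the main obstacle is (i), specifically ruling out that $\theta_1^{\star}$ is a minimum. The function $\phi$ is \emph{not} concave in general — $\phi''$ is a sum of a non-positive and a non-negative term — so a plain second-derivative argument does not work, and one must pin down the unique critical point explicitly and test it there; the clean collapse $\phi''(\theta_1^{\star})=\tfrac{d-2}{d-1}g''(\theta_1^{\star})$, which comes from the reflection identity $\thetaavg^{\star}=\pi-\theta_1^{\star}$ at the critical point, is the crux. A secondary point of care is the parity bookkeeping in (ii): it is exactly the hypothesis that $k$ and $d$ have opposite parity that keeps the boundary of the feasible region pinned at $\theta_1\in\{0,\pi/2\}$.
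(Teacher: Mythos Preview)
Your argument is correct and shares the paper's skeleton: write the objective as a one-variable function $\phi(\theta_1)$ via the constraint, compute $\phi'=g'(\theta_1)-g'(\thetaavg)$, and observe that $\phi'=0$ exactly when $\sin\theta_1=\sin\thetaavg$. The difference is purely tactical. The paper first reduces by conjugation to $\theta_1\in[0,\pi/2]$ and then argues monotonicity by cases on the sign of $\theta_1+\thetaavg-\pi$: since this quantity is nondecreasing in $\theta_1$ (its derivative is $\tfrac{d-2}{d-1}\ge 0$), the sign of $\tau'$ changes at most once from positive to negative, so $\tau$ is unimodal and one can push $\theta_1$ to $0$ or $\pi/2$ without ever touching a second derivative. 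You instead pin down the unique critical point explicitly and use the reflection $\thetaavg^{\star}=\pi-\theta_1^{\star}$ to collapse $\phi''(\theta_1^{\star})$ to $\tfrac{d-2}{d-1}\,g''(\theta_1^{\star})\le 0$; this is a little slicker and avoids the case split, at the cost of second-order information. Your explicit parity-based classification of the boundary points where $\thetaavg$ hits $\pi/2$ or $3\pi/2$ is also more thorough than the paper's treatment, which leaves that check implicit in the monotonicity argument.
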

\begin{proof}
  As $\pi=-\pi\mod 2\pi$, by taking the complex conjugate if necessary, we may assume that $\theta_1\in[0,\pi/2]$.
  Then, as $\theta_1$ increases, $\thetaavg$ decreases.
  If $\thetaavg\in(\pi,3\pi/2]$,
  then as $\theta_1$ increases, both $r_1$ and $\ravg$ decreases and the lemma holds.
  So we only need to handle the case that $\thetaavg\in[\pi/2,\pi]$.

  As $\theta_1+(d-1)\thetaavg=k \pi$, $\thetaavg=\frac{k\pi-\theta_1}{d-1}$.
  Using \eqref{eqn:r-theta}, we then can write $(d-1)\log \ravg+\log r_1$ as a function in $\theta_1$, denoted $\tau(\theta_1)$.
  The minimum of $\ravg^{d-1}r_1$ is attained as long as the minimum of $\tau(\theta_1)$ is attained.
  Straightforward calculation yields
  \begin{align*}
    \tau'(\theta_1) = c\sqrt{\gamma}\left( \frac{\sin(\thetaavg)}{\sqrt{2c+\beta+c^2\gamma\cos^2(\thetaavg)}}
    -\frac{\sin(\theta_1)}{\sqrt{2c+\beta+c^2\gamma\cos^2(\theta_1)}} \right).
  \end{align*}
  Note that $\frac{x}{\sqrt{2c+\beta+c^2\gamma(1-x^2)}}$ is an increasing function for $0<x<1$.

  If $\theta_1+\thetaavg \ge \pi$,
  then $\sin(\thetaavg)\le\sin(\theta_1)$ and $\tau$ is a decreasing function in $\theta_1$.
  In this case, if we increase $\theta_1$, the decrease of $\thetaavg$ is smaller,
  and thus the assumption that $\theta_1+\thetaavg \ge \pi$ is maintained.
  We can keep increasing $\theta_1$ until it hits $\pi/2$.

  Otherwise $\theta_1+\thetaavg < \pi$ and $\tau$ is increasing.
  Similar to the case above, we can keep decreasing $\theta_1$ until it hits $0$.
  The lemma follows from the two cases above.
\end{proof}

Now we can argue when the minimum of the program \eqref{eqn:program} is achieved.

\begin{lemma}  \label{lem:extremal}
  Let $0\le c < r$.
  For any $d\ge 2$,
  the minimum of the program \eqref{eqn:program} is achieved when all $z_i$'s are equal and $\theta_i=\frac{d-1}{d}\cdot\pi$ for all $i$.
\end{lemma}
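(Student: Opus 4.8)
The plan is to feed the reductions already established before the statement — at most one $z_i$ lies in the right half plane, and all $z_i$ may be taken on the circle $\+C(c,r)$, so that $r_i=f(\theta_i)$ with $f$ as in \eqref{eqn:r-theta} — into \Cref{lem:Jensen} and \Cref{lem:right-half-plane}. The upshot is that a minimiser of \eqref{eqn:program} has one of two shapes: (A) all $d$ points coincide at a common angle $\theta$ with $d\theta\equiv 0$ or $\pi\pmod{2\pi}$ according to the parity of $d$; or (B) exactly one point $z_1$ lies in the right half plane with $\theta_1\in\{0,\pi/2\}$, while the other $d-1$ points coincide at a common angle $\thetaavg\in\lefthalfplane$. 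It then suffices to minimise the objective within case (A), and to show that case (B) cannot do better.

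For case (A), differentiating \eqref{eqn:r-theta} shows $f$ is non-decreasing in $\cos\theta$, so $f(\theta)^d$ decreases as $\theta$ moves towards $\pi$; the admissible angles are exactly $\theta=m\pi/d$ with $m\not\equiv d\pmod 2$, and for $d\ge 2$ the one closest to $\pi$ is $\theta=\frac{d-1}{d}\pi$ (equivalently $\frac{d+1}{d}\pi$, with the same $\cos\theta$). This configuration is feasible, all $z_i$ are equal, and the objective equals $f\bigl(\frac{d-1}{d}\pi\bigr)^d$.

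For case (B), once $\theta_1$ is fixed, minimality forces $\thetaavg$ to be the admissible value closest to $\pi$: this is $\thetaavg=\pi$ when $\theta_1=0$, and $\thetaavg=\pi-\frac{\pi}{2(d-1)}$ when $\theta_1=\pi/2$. In each sub-case I would use the identity $f(\theta)f(\pi-\theta)=A$, where $A\defeq\frac{\beta+2c}{\gamma}$ (it comes from the two roots of $r_i^2-2cr_i\cos\theta_i-A=0$), to rewrite the objective as a product of exactly $d$ factors $f(\alpha_1)\cdots f(\alpha_d)$ with every $\alpha_j\in[\pi/2,\pi]$ and $\sum_j\alpha_j=(d-1)\pi$. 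Concretely, $f(0)f(\pi)^{d-1}=f(\pi/2)^2 f(\pi)^{d-2}$, while $f(\pi/2)f\bigl(\pi-\frac{\pi}{2(d-1)}\bigr)^{d-1}$ is already of this form. Since $[\pi/2,\pi]\subseteq\lefthalfplane$, the computation in the proof of \Cref{lem:Jensen} shows $\log f$ is convex there, so Jensen's inequality at the averaged angle $\frac1d\sum_j\alpha_j=\frac{d-1}{d}\pi$ gives $f\bigl(\frac{d-1}{d}\pi\bigr)^d\le f(\alpha_1)\cdots f(\alpha_d)$ — exactly the required bound. Hence the program value is $f\bigl(\frac{d-1}{d}\pi\bigr)^d$, attained by the all-equal configuration of case (A).

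The reductions are routine, so the substance lies in case (B); and there the crux is the recasting of the right-half-plane configuration, via $f(\theta)f(\pi-\theta)=A$, into a $d$-fold product of values of $f$ at angles lying entirely in $[\pi/2,\pi]$, where $\log f$ is convex. A direct appeal to Jensen on $f(0)f(\pi)^{d-1}$ fails because $\log f$ is concave on $[0,\pi/2]$; the point $\theta_1=0$ must first be traded for two copies of $\pi/2$. The last thing to verify is that this trade keeps the angle sum at $(d-1)\pi$, so that the averaged angle is still $\frac{d-1}{d}\pi$, and that all the $\alpha_j$ indeed remain in $[\pi/2,\pi]$ — I expect that to be the only genuinely fiddly point.
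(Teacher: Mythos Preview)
Your proposal is correct and follows essentially the same route as the paper. The only cosmetic difference is in the $\theta_1=0$ sub-case: the paper trades the pair $(\theta_1,\theta_2)=(0,\pi)$ for two copies of $\pi/2$ via the geometric identity $(r+c)(r-c)=r^2-c^2$, while you phrase the same trade algebraically as the special case $\theta=0$ of your Vieta-type identity $f(\theta)f(\pi-\theta)=A$ (noting $r^2-c^2=A$); in the $\theta_1=\pi/2$ sub-case the paper simply invokes \Cref{lem:Jensen} on all $d$ points directly, whereas you first pin down $\thetaavg$ and then apply Jensen, but these amount to the same computation.
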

\begin{proof}
  As argued above, we may assume that either all $z_i$'s are on the left half plane or only $z_1$ is on the right half plane.
  In the former case, by \Cref{lem:Jensen},
  we may assume that all $z_i$'s in the left half plane are equal.
  In the latter case, by \Cref{lem:extremal},
  We can assume that either $\theta_1=\pi/2$ or $\theta_1=0$:
  \begin{itemize}
    \item if $\theta_1=\pi/2$, then we invoke \Cref{lem:Jensen} again to reduce to the case where all $z_i$'s are equal;
    \item if $\theta_1=0$, then by \Cref{lem:Jensen},
      we can assume that all other $z_i$'s are equal.
      As $\pi=-\pi\mod 2\pi$, by taking the complex conjugate if necessary, we can also assume that $\theta_i\in[\pi/2,\pi]$ for all $i\ge 2$.
      Then because of the constraint on $\sum_{i=2}^d\theta_i$,
      there must exist a positive integer $k$ whose parity is opposite to that of $d$ and that $\theta_i=\frac{k\pi}{d-1}$ for all $i\ge 2$.
      It is a simple geometric fact that if $\theta_i\in[\pi/2,\pi]$, $r_i$ decreases as $\theta_i$ increases as $z_i$ lies on $\+C(c,r)$.
      On the other hand, $\theta_i\le \pi$ implies that $k\le d$. 
      Because $k$ has the opposite parity against $d$,
      to achieve the minimum in \eqref{eqn:program}, $k=d-1$ and $\theta_i=\pi$ for all $i\ge2$.

      As $d\ge 2$, consider $r_1r_2$.
      Since $\theta_1=0$ and $\theta_2=\pi$,
      $r_1=r+c$ and $r_2=r-c$, and $r_1r_2=r^2-c^2$.
      We can replace both of them by $z_1'=z_2'=r'e^{\ii\theta'}$ where $r'=\sqrt{r^2-c^2}$ and $\theta'=\frac{\pi}{2}$.
      It is straightforward to verify that $z_1'$ and $z_2'$ are on the circle $\+C(c,r)$, and $r_1r_2=r_1'r_2'$.
      Thus we are reduced to the setting of \Cref{lem:Jensen},
      and applying it makes all $z_i$'s equal.
  \end{itemize}
  To summarize, in all cases, we can assume that all $z_i$'s are equal.

  Similar to the complicated case above,
  we now assume that there is an integer $k$ such that $\theta_i=\frac{k\pi}{d}$ for all $i\in[d]$, $k\le d$,
  and $k$ has the opposite parity against $d$.
  Once again, the larger $k$ the smaller $\prod_{i\in[d]}r_i$.
  Thus, the minimum is achieved when $k=d-1$ and $\theta_i =\frac{d-1}{d}\cdot\pi$ for all $i$.
  The lemma holds.
\end{proof}

By \Cref{lem:extremal} and \eqref{eqn:r-theta},
\begin{align}
  \lambdastar{d}&= \left(f\left( \pi-\frac{\pi}{d} \right)\right)^d =
  \left(c\cos\left( \pi-\frac{\pi}{d}\right)+\sqrt{c^2\cos^2\left(\pi-\frac{\pi}{d}\right)+\frac{\beta+2c}{\gamma}}\right)^d \notag\\
  & = \left(-c\cos\frac{\pi}{d}+\sqrt{c^2\cos^2\frac{\pi}{d}+\frac{\beta+2c}{\gamma}}\right)^d \label{eqn:lambdastard}.
\end{align}

Still, as we want to deal with vertices of all degrees,
we need to determine when the expression in~\eqref{eqn:lambdastard} attains its minimum when $d$ varies.
We will view $\lambdastar{d}$ as a function of $d$ with the expression in \eqref{eqn:lambdastard},
and relax $d$ to be a continuous variable taking values in $[2,\infty)$.
With this in mind, let $h(d)\defeq\log\lambdastar{d}$.
We take derivatives of $h(d)$:
\begin{align}
  h'(d) & = \frac{\log\lambdastar{d}}{d} - \frac{c\pi \sqrt{\gamma}  \sin(\pi/d)}{d \sqrt{\beta + 2c + c^2 \gamma \cos^2(\pi/d)}}; \label{eqn:h'd}\\
  h''(d) & = \cos \frac{\pi}{d}\cdot\frac{c\pi^2\sqrt{\gamma}(\beta + 2c + c^2 \gamma)}{d^3\left(\beta +2 c+ c^2 \gamma \cos^2 (\pi/d )\right)^{3/2}}\label{eqn:h''d}.
\end{align}
As $d\ge 2$, $h''(d)\ge 0$ and $h(d)$ is a convex function.
Thus, the minimum of $h(d)$ (and therefore that of $\lambdastar{d}$) is attained at the solution to $h'(d)=0$.
Call the solution $\dstar$ and let $\lambdastar{}\defeq \lambdastar{\dstar}$.
To summarize the argument above, we have the following lemma.

\begin{lemma}\label{lem:lambdastar}
  Let $0\le c< r$. For any $d\ge 2$, $\lambdastar{d}\ge\lambdastar{}$.
\end{lemma}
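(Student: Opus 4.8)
The plan is to combine \Cref{lem:extremal} with a one-variable convexity argument. By \Cref{lem:extremal} together with \eqref{eqn:r-theta}, the optimum of the program \eqref{eqn:program} has the closed form \eqref{eqn:lambdastard}; the right-hand side of \eqref{eqn:lambdastard} is defined for every real $d\ge 2$, not only for integers, so I would actually establish the (formally stronger) statement that $d\mapsto\lambdastar{d}$, extended to $[2,\infty)$ via \eqref{eqn:lambdastard}, is minimised at $\dstar$; the integer case is then immediate.

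First I would set $h(d)\defeq\log\lambdastar{d}$ and differentiate \eqref{eqn:lambdastard} twice, obtaining \eqref{eqn:h'd} for $h'(d)$ and \eqref{eqn:h''d} for $h''(d)$; this is a direct, if slightly tedious, calculation. The key point is the sign of $h''$: in the form \eqref{eqn:h''d} every factor except $\cos(\pi/d)$ is non-negative, using $c\ge 0$, $\beta,\gamma>0$ and $\beta+2c+c^2\gamma\cos^2(\pi/d)>0$ (which holds since $0\le c<r$), while $\cos(\pi/d)\ge 0$ because $\pi/d\le\pi/2$ for $d\ge 2$. Hence $h''\ge 0$ on $[2,\infty)$, so $h$, and therefore $\lambdastar{d}$, is a convex function of $d$ there.

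Finally I would invoke convexity: a convex function on $[2,\infty)$ attains its minimum at a stationary point whenever one exists in the interval, so I would let $\dstar$ be a solution of $h'(d)=0$ (and check, via the boundary behaviour of $h'$, namely $h'(d)=h(d)/d-O(1/d^2)$ with $h(d)/d\to\log\bigl(-c+\sqrt{c^2+(\beta+2c)/\gamma}\bigr)$ as $d\to\infty$, that such a point exists, taking $\dstar=2$ otherwise), set $\lambdastar{}\defeq\lambdastar{\dstar}$, and conclude $\lambdastar{d}\ge\lambdastar{}$ for all real, hence all integer, $d\ge 2$. The main obstacle is precisely this last step: not the convexity, which falls straight out of the sign of \eqref{eqn:h''d}, but confirming that the minimiser of $h$ over $[2,\infty)$ is well defined and is the stationary point $\dstar$ rather than the endpoint $d=2$, which needs the boundary estimates on $h'$; everything else is just assembling \Cref{lem:extremal}, the derivative formulas, and the elementary behaviour of convex functions.
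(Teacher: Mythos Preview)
Your proposal is correct and follows essentially the same approach as the paper: invoke \Cref{lem:extremal} to obtain the closed form \eqref{eqn:lambdastard}, relax $d$ to $[2,\infty)$, set $h(d)=\log\lambdastar{d}$, compute \eqref{eqn:h'd} and \eqref{eqn:h''d}, observe $h''\ge 0$ from the sign of $\cos(\pi/d)$, and conclude by convexity that the minimum is at the stationary point $\dstar$. Your extra care about the existence and location of $\dstar$ (boundary behaviour of $h'$, the possibility $\dstar=2$) is a point the paper simply glosses over---it defines $\dstar$ as ``the solution to $h'(d)=0$'' without justification---so if anything your version is slightly more complete, but not different in substance.
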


The only remaining task is to find out how large $\lambdastar{}$ is and this depends on the value of $c$.
Recall $\zeta_1$ and $\zeta_2$ in \eqref{eqn:zeta}.
We will choose $c$ such that $\pi-\pi/\dstar=\arg{\zeta_1}$.
In other words, we want that $\tan(\pi/\dstar) = \sqrt{\beta\gamma-1}$.
Let $\dstar\defeq\frac{\pi}{\ArcTan {\sqrt{\beta\gamma-1}}}$,
where we take the principle branch of $\ArcTan{\cdot}\in(-\pi/2,\pi/2)$.
In this case, $\lambdastar{} = \abs{\zeta_1}^{\dstar} = \left(\frac{\beta}{\gamma}\right)^{\dstar/2}$.

\begin{lemma}\label{lem:lambdastar-value}
  If $\DET<0$,
  then we can choose $c=\cstar>0$ in \eqref{eqn:c} so that $\lambdastar{}=\left(\frac{\beta}{\gamma}\right)^{\dstar/2}$, 
  where $\dstar = \frac{\pi}{\ArcTan {\sqrt{\beta\gamma-1}}}$.
\end{lemma}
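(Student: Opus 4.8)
The plan is to verify that the choice $c = \cstar$ from \eqref{eqn:c} is both admissible (i.e.\ satisfies $0 \le c < r$, so that all the lemmas of this subsection apply) and actually realizes the equality $\pi - \pi/\dstar = \arg \zeta_1$, which we have already observed forces $\lambdastar{} = |\zeta_1|^{\dstar} = (\beta/\gamma)^{\dstar/2}$. So there are really two things to check: first, that $\dstar$ as defined is a genuine critical point of $h(d)$, i.e.\ that $h'(\dstar) = 0$ holds precisely when $c = \cstar$; and second, that $\cstar > 0$ and $\cstar < r(\cstar,\beta,\gamma)$ under the hypothesis $\DET < 0$.

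For the first point I would start from the equation $h'(\dstar) = 0$ using \eqref{eqn:h'd}, substitute the defining relation $\tan(\pi/\dstar) = \sqrt{\beta\gamma - 1}$ (equivalently $\cos(\pi/\dstar) = 1/\sqrt{\beta\gamma}$, $\sin(\pi/\dstar) = \sqrt{\beta\gamma-1}/\sqrt{\beta\gamma}$), and also use that $\lambdastar{\dstar} = (\beta/\gamma)^{\dstar/2}$ so that $\log \lambdastar{\dstar} = \tfrac{\dstar}{2}\log(\beta/\gamma) = \dstar \log\sqrt{\beta/\gamma}$. Plugging these into $h'(\dstar) = 0$ and clearing the common factor $1/\dstar$ turns the equation into a linear relation in $c$: the term $\beta + 2c + c^2\gamma\cos^2(\pi/\dstar)$ simplifies, using $\gamma\cos^2(\pi/\dstar) = 1/\beta$, to $\beta + 2c + c^2/\beta = (\beta+c)^2/\beta$, so the square root becomes $(\beta+c)/\sqrt{\beta}$. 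After this simplification the equation reads
\begin{align*}
  \log\sqrt{\tfrac{\beta}{\gamma}} = \frac{c\pi\sqrt{\gamma}\,\sin(\pi/\dstar)}{(\beta+c)/\sqrt{\beta}},
\end{align*}
and substituting $\sin(\pi/\dstar) = \sqrt{\beta\gamma-1}/\sqrt{\beta\gamma}$ together with $\pi = \dstar\ArcTan{\sqrt{\beta\gamma-1}}$, one solves for $c$ and recognizes the right-hand side of \eqref{eqn:c}, i.e.\ $c = \cstar = -\beta\log\sqrt{\beta/\gamma}\,/\,\DET$ with $\DET$ as in \eqref{eqn:DET}. This is the step I expect to be the main obstacle: it is a somewhat delicate trigonometric-algebraic simplification, and one has to be careful that the branch conventions ($\ArcTan{\cdot} \in (-\pi/2,\pi/2)$, so $\pi/\dstar \in (0,\pi/2]$ and $\cos(\pi/\dstar) > 0$) are consistent throughout, and that the positive root was the one selected in passing from \eqref{eqn:r-theta} onward.

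For the second point, positivity of $\cstar$ is immediate: the numerator $-\beta\log\sqrt{\beta/\gamma} \le 0$ since $\beta \ge \gamma$, and the denominator $\DET < 0$ by hypothesis, so $\cstar \ge 0$; strictness follows unless $\beta = \gamma$, a case one can either include (then $\cstar = 0$, which is still admissible) or note is covered by \cite{LSS17}. For $\cstar < r$, I would use \eqref{eqn:r}: $r^2 = (\beta\gamma-1)/\gamma^2 + (c + 1/\gamma)^2 = c^2 + 2c/\gamma + \beta/\gamma$, so $r^2 - c^2 = (2c + \beta)/\gamma > 0$ whenever $c \ge 0$, giving $r > c$ for free. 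Once both points are established, the equality $\lambdastar{} = (\beta/\gamma)^{\dstar/2}$ follows by combining \Cref{lem:lambdastar} (the minimum over integer degrees $d \ge 2$ is at least $\lambdastar{\dstar}$) with the identity $\lambdastar{\dstar} = |\zeta_1|^{\dstar}$ derived above, completing the proof.
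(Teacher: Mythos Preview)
Your approach is correct and essentially the same as the paper's, which simply records the simplified form
\[
  \rho(c,\dstar)=\log\sqrt{\tfrac{\beta}{\gamma}}-\ArcTan{\sqrt{\beta\gamma-1}}\cdot\sqrt{\beta\gamma-1}\cdot\frac{c}{\beta+c}
\]
and verifies $\rho(\cstar,\dstar)=0$, then invokes convexity ($h''\ge 0$) for uniqueness of the minimiser. One small slip to fix: your displayed equation is missing a factor $1/\dstar$ on the right-hand side. After you substitute $\log\lambdastar{\dstar}=\dstar\log\sqrt{\beta/\gamma}$, the first term of $h'(\dstar)$ becomes $\log\sqrt{\beta/\gamma}$ with the $\dstar$'s already cancelled, while the second term still carries its explicit $1/\dstar$; so there is no ``common factor $1/\dstar$'' to clear. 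With that $1/\dstar$ restored, your planned substitution $\pi=\dstar\ArcTan{\sqrt{\beta\gamma-1}}$ cancels it and yields exactly the linear equation in $c$ you want. Your explicit admissibility check $0\le\cstar<r$ (via $r^2-c^2=(2c+\beta)/\gamma>0$) is a useful addition that the paper leaves implicit.
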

\begin{proof}
  Denote the right hand side of \eqref{eqn:h'd} by $\rho(c,d)$.
  Then
  \begin{align*}
    \rho(c,\dstar) = \log\left(\sqrt{\frac{\beta}{\gamma}}\right) - \ArcTan {\sqrt{\beta\gamma-1}}\cdot\sqrt{\beta\gamma-1} \cdot \frac{c}{\beta +c}.
  \end{align*}
  It is straightforward to verify that $\rho(\cstar,\dstar)=0$.
%  We will show that $\rho(0,\dstar)\ge 0$ and $\lim_{c\rightarrow\infty}\rho(c,\dstar)<0$,
%  which implies that there exists $\cstar>0$ such that $\rho(\cstar,\dstar)=0$.
%  It is straightforward to see that 
%  \begin{align*}
%    \rho(0,\dstar) = \frac{1}{2}\log\left(\frac{\beta}{\gamma}\right)\ge 0,
%  \end{align*}
%  and 
%  \begin{align*}
%    \lim_{c\rightarrow\infty}\rho(c,\dstar) & = \log\sqrt{\frac{\beta}{\gamma}} - \frac{\pi}{\dstar}\cdot\sqrt{\beta\gamma-1} =\DET<0. \tag{By \eqref{eqn:DET}}%\\
%  \end{align*}

  Since $h''(d)\ge 0$, $h'(d)=0$ has at most one zero in $d$ for any fixed $c$.
  Once we chose $c=\cstar$, $\dstar$ is the unique zero of $h'(d)$.
  The lemma follows.
\end{proof}

\subsection{\texorpdfstring{$\DET>0$}{Phi > 0}}
\label{sec:phi>0}

When $\DET>0$, the argument is almost the same as or even simpler than that in \Cref{sec:phi<0}.
The main issue is that following \Cref{lem:lambdastar-value} would yield $c<0$ and some geometry changes.
We consider instead a disk $\+D(c,r)$ with $c<-\beta<0$.
Eventually we also choose $c=\cstar$ according to \eqref{eqn:c},
although now $\cstar<-\beta<0$ as $\Phi>0$.
The radius $r$ is still chosen according to \eqref{eqn:r} such that $\zeta_1,\zeta_2$ are on $\+C(c,r)$.
The main change is that now we choose region $K=\Closure{\+D(c,r)}$.
Namely, $K$ is the closure of $\+D(c,r)$ instead of its complement.
An illustration of $K$ and $\+K_i$ for $i=2,3,4,5$ is given in \Cref{fig:contraction-negative}.

\begin{figure}[htpb]
  \centering
  \begin{minipage}{.15\linewidth}
    \includegraphics[width=\linewidth]{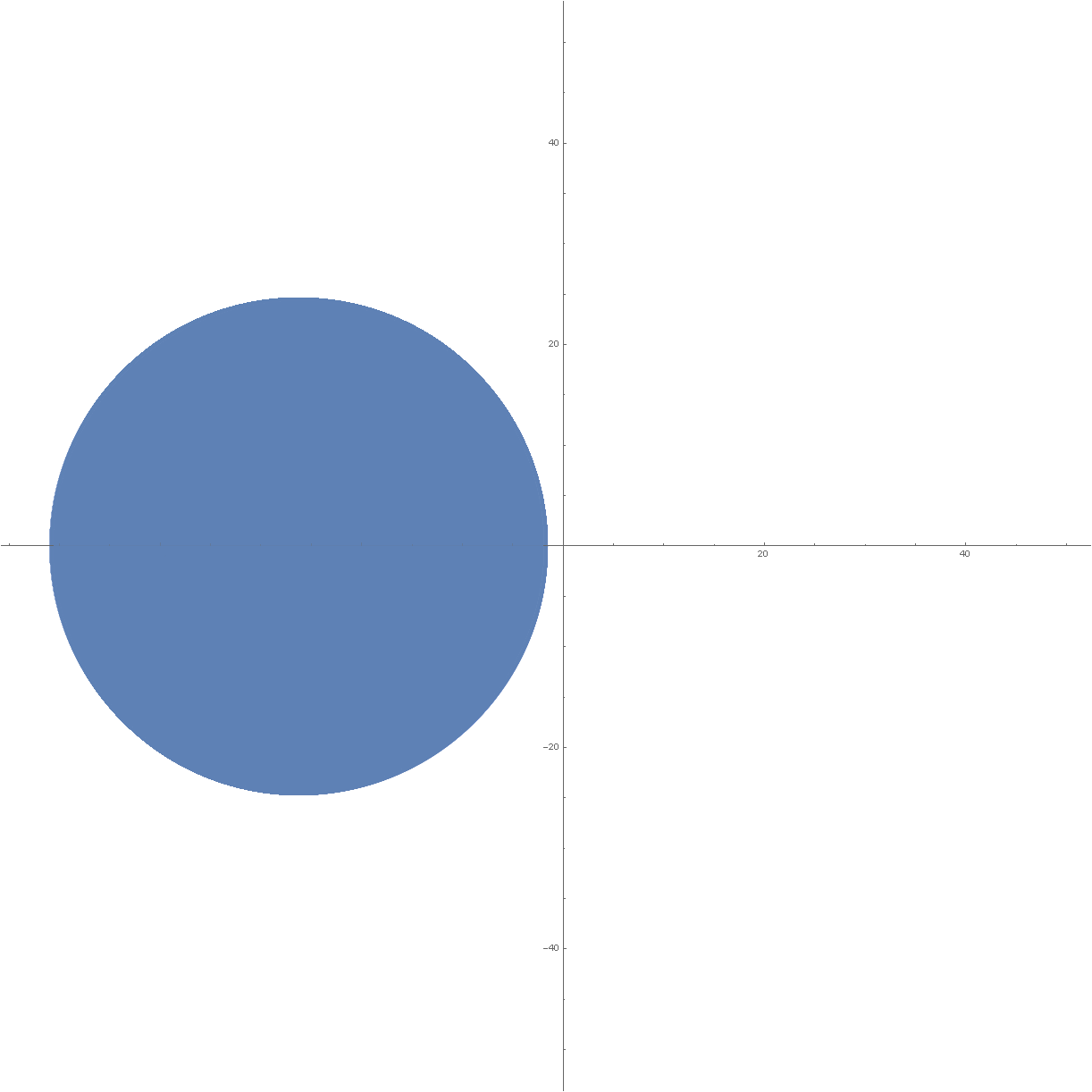}
    \captionsetup{width=0.9\linewidth}
    \captionof*{figure}{$K=\Closure{\+D(c,r)}$}
  \end{minipage}
  \hspace{0.5em}
  \begin{minipage}{.15\linewidth}
    \includegraphics[width=\linewidth]{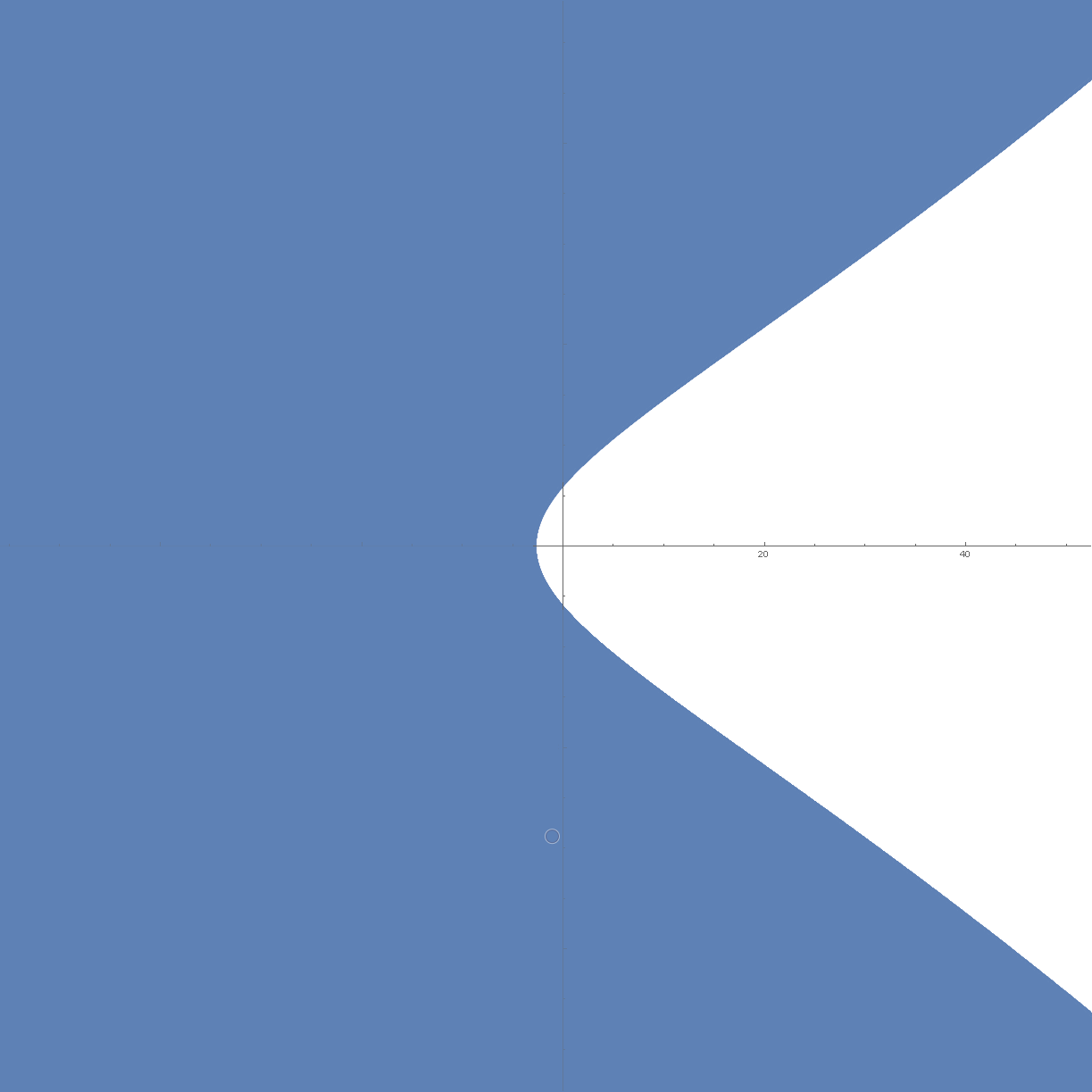}
    \captionsetup{width=0.9\linewidth}
    \captionof*{figure}{$\+K_2=-K\cdot K$}
  \end{minipage}
  \hspace{0.5em}
  \begin{minipage}{.15\linewidth}
    \includegraphics[width=\linewidth]{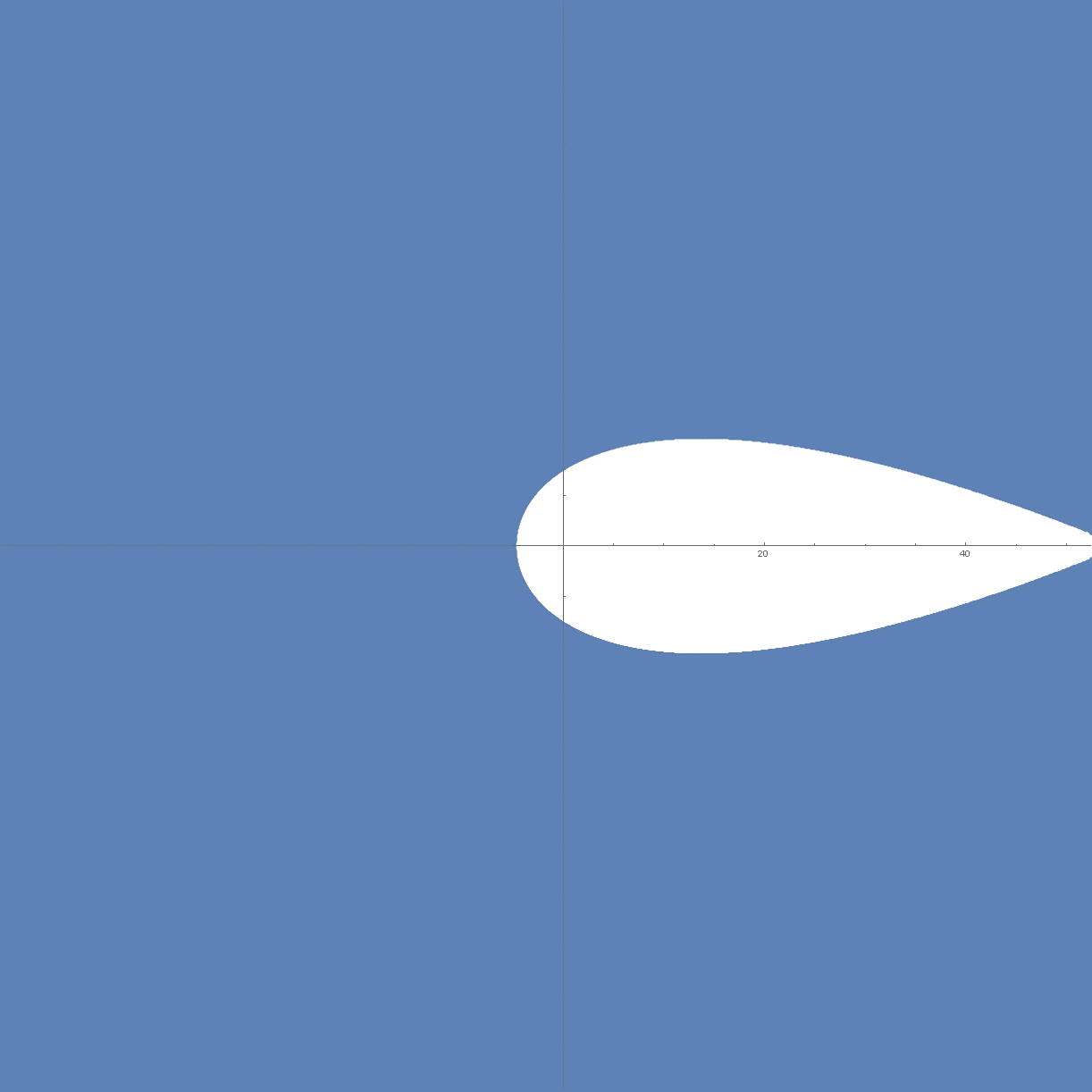}
    \captionsetup{width=0.9\linewidth}
    \captionof*{figure}{$\+K_3= K^3$}
  \end{minipage}
  \hspace{0.5em}
  \begin{minipage}{.15\linewidth}
    \includegraphics[width=\linewidth]{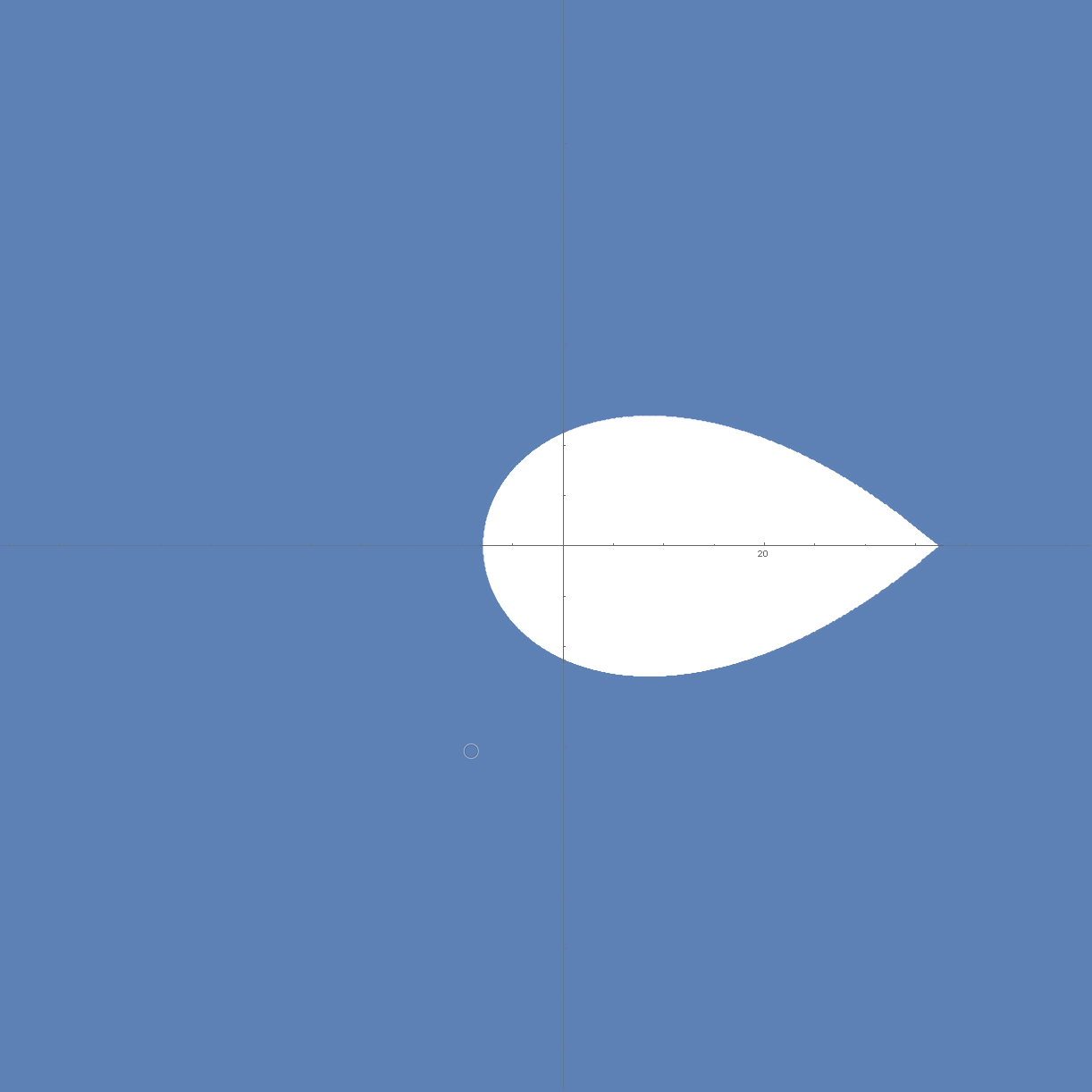}
    \captionsetup{width=0.9\linewidth}
    \captionof*{figure}{$\+K_4= -K^4$}
  \end{minipage}
  \hspace{0.5em}
  \begin{minipage}{.15\linewidth}
    \includegraphics[width=\linewidth]{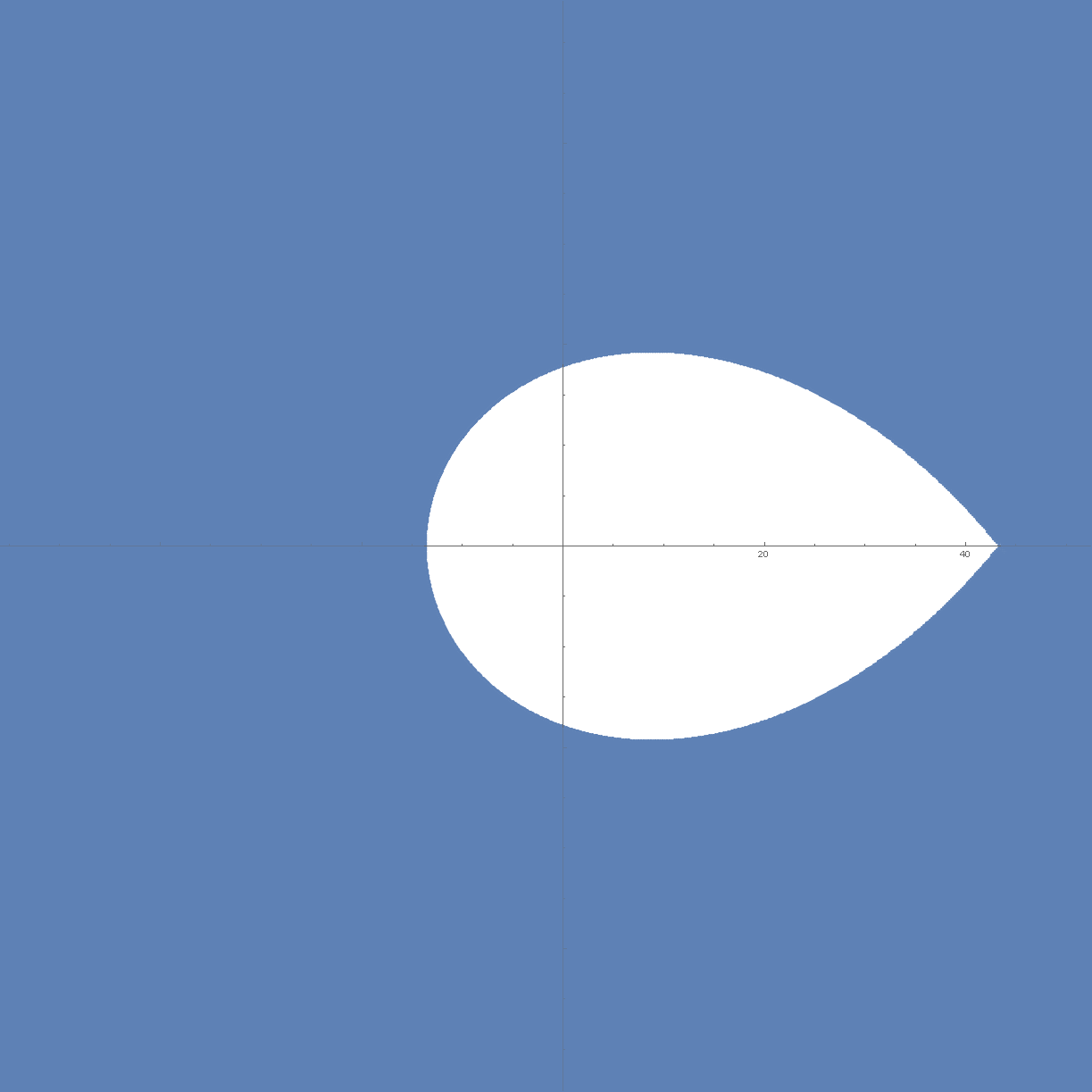}
    \captionsetup{width=0.9\linewidth}
    \captionof*{figure}{$\+K_5=K^5$}
  \end{minipage}
  \caption{Our region $K=\Closure{\+D(c,r)}$, $\+K_2$, $\+K_3$, $\+K_4$ and $\+K_5$ in the case of $\beta=4$ and $\gamma=\frac 1 2$.
  Here, the intercept of $\+K_d$ on the positive real line is minimised at $d=4$ for all $d\ge 2$.}
  \label{fig:contraction-negative}
\end{figure}
Then, the program \eqref{eqn:program} becomes
\begin{align}
  \label{eqn:program2}
  \begin{split}
  \min\quad & \prod_{i=1}^d r_i \\
  \text{subject to}\quad & \sum_{i=1}^d\theta_i = 
  \begin{cases}
    0 \mod 2\pi & \text{if $d$ is odd;}\\
    \pi \mod 2\pi & \text{if $d$ is even;}\\
  \end{cases}\\
  \quad & \forall i\in[d],\ r_i\ge 0 \text{ and } 0 \le \theta_i < 2\pi;\\
  \quad & \forall i\in[d],\ \abs{r_ie^{\ii\theta_i}-c}\le r.    
  \end{split}
\end{align}
Still denote the optima of \eqref{eqn:program2} by $\lambdastar{d}$ and it is easy to verify that \Cref{lem:compliment} holds in this setting.
An illustration can be found in \Cref{fig:feasible}.

As $c<-\beta$, it is easy to verify that $c<-r$ using $\eqref{eqn:r}$, and $0\not\in K$.
So for any $z_i$, we can shrink it until it hits the right boundary of $\+C(c,r)$.
In this case, similar to \eqref{eqn:r-theta}, $r_i=\alternative{f}(\theta_i)$, where
\begin{align}  \label{eqn:r-theta2}
  \alternative{f}(x)\defeq c\cos x-\sqrt{c^2\cos^2x+\frac{\beta+2c}{\gamma}}.
\end{align}
The sign changed because now there are two positive solutions and we should choose the smaller one.
Moreover, notice that due to the constraints in \eqref{eqn:program2},
$\theta_i\in\lefthalfplane$ and is further constrained into a range so that $\alternative{f}(\cdot)$ is real, namely
\begin{align}  \label{eqn:theta-range}
  c^2\gamma\cos^2\theta_i+\beta+2c>0.
\end{align}
In particular, since $\theta_i=\pi$ satisfies the constraint of \eqref{eqn:program2},
$c^2\gamma+\beta+2c>0$.
The analogue of \Cref{lem:Jensen} also holds.

\begin{lemma}  \label{lem:Jensen2}
  Let $c<-r<0$.
  Suppose all $i\in[k]$, $z_i=r_ie^{\ii\theta_i}$ where $r_i=\alternative{f}(\theta_i)$ and \eqref{eqn:theta-range} holds.
  Let $\zavg=\ravg e^{\ii\thetaavg}$ be on $\+C(c,r)$ such that $\thetaavg = \frac{1}{k}\sum_{i=1}^k\theta_i$.
  Then, $\prod_{i=1}^k r_i \ge \ravg^k$.
\end{lemma}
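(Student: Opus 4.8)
The statement is the exact analogue of \Cref{lem:Jensen} with $\alternative f$ from \eqref{eqn:r-theta2} replacing $f$ from \eqref{eqn:r-theta}, so the plan is to run the same argument: show that $\alternative g(x):=\log\alternative f(x)$ is convex on the range of angles permitted by \eqref{eqn:theta-range}, and then apply Jensen's inequality. Here each $z_i=r_ie^{\ii\theta_i}$ lies on $\+C(c,r)$ with $r_i=\alternative f(\theta_i)$, and $\zavg=\ravg e^{\ii\thetaavg}$ is taken on the right boundary of $\+C(c,r)$, so $\ravg=\alternative f(\thetaavg)$ with $\thetaavg=\frac1k\sum_{i=1}^k\theta_i$ (this is the choice consistent with the reduction preceding \eqref{eqn:r-theta2}). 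Taking logarithms, the desired bound $\prod_{i=1}^k r_i\ge\ravg^k$ becomes exactly $\frac1k\sum_{i=1}^k\alternative g(\theta_i)\ge\alternative g\bigl(\frac1k\sum_{i=1}^k\theta_i\bigr)$, i.e.\ the convexity inequality for $\alternative g$.

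Before applying Jensen I would record that the admissible angular window is an interval, so that the average $\thetaavg$ stays in it. Since $c<-\beta$ forces $\beta+2c<0$, constraint \eqref{eqn:theta-range} reads $\cos^2\theta>\frac{-(\beta+2c)}{c^2\gamma}$, which on $\lefthalfplane$ carves out an open interval centred at $\pi$ (nonempty exactly because $\beta+2c+c^2\gamma>0$, as noted right after \eqref{eqn:theta-range}). An interval is convex, so $\thetaavg$ is again admissible and $\alternative f(\thetaavg)$, hence $\zavg$, is well defined.

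The computational core is the second derivative of $\alternative g$. Writing $S:=\sqrt{c^2\cos^2x+\frac{\beta+2c}{\gamma}}$, a short differentiation (parallel to \Cref{lem:Jensen}, using $\alternative f(x)=c\cos x-S$) gives $\alternative g'(x)=\frac{c\sin x}{S}$ and then
\[
  \alternative g''(x)=\frac{c\sqrt{\gamma}\,\cos x\,(\beta+2c+c^2\gamma)}{(\beta+2c+c^2\gamma\cos^2x)^{3/2}},
\]
which is the expression from \Cref{lem:Jensen} with the leading minus sign removed --- the flip coming precisely from the opposite choice of square-root sign in \eqref{eqn:r-theta2} versus \eqref{eqn:r-theta}. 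To see this is nonnegative on the admissible range: on $\lefthalfplane$ we have $\cos x\le0$, and here $c<0$, so $c\cos x\ge0$; the denominator is real and positive by \eqref{eqn:theta-range}; and $\beta+2c+c^2\gamma>0$ is \eqref{eqn:theta-range} at $\theta=\pi$. Hence $\alternative g''\ge0$, $\alternative g$ is convex on the admissible interval, and Jensen gives $\prod_{i=1}^k r_i\ge\ravg^k$.

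I expect no genuine obstacle here, since the whole argument parallels \Cref{lem:Jensen}; the only points requiring attention are the sign bookkeeping in $\alternative g''$ (which works out exactly because of the sign flip in $\alternative f$) and the elementary observation that the admissible set of angles is an interval, so that $\thetaavg$ lies in it. The routine but slightly fiddly derivative computation is the main time sink rather than any conceptual difficulty.
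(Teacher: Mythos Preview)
Your proposal is correct and follows essentially the same route as the paper: compute $\alternative g''(x)$, observe it is nonnegative on the admissible range because $c\cos x\ge0$ and $\beta+2c+c^2\gamma>0$, and apply Jensen. The extra remark that the admissible angles form an interval (so $\thetaavg$ is admissible) is a useful detail the paper leaves implicit.
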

\begin{proof}
  The proof goes through similar calculations to that of \Cref{lem:Jensen}.
  Let $\alternative{g}=\log\alternative{f}$.
  Then for $x$ such that \eqref{eqn:theta-range} holds,
  \begin{align*}
    \alternative{g}''(x)=\cos x\cdot\frac{c\sqrt{\gamma}(\beta + 2c + c^2 \gamma)}{\left(\beta +2 c+ c^2 \gamma \cos^2 x\right)^{3/2}}\ge0,
  \end{align*}
  as $\beta + 2c + c^2 \gamma>0$.
\end{proof}

Since in this case all $z_i$'s are on the left half plane, there is no need to consider $z_i$'s on the right half plane like \Cref{lem:right-half-plane}.
We directly go to the analogue of \Cref{lem:extremal}.

\begin{lemma}  \label{lem:extremal2}
  Let $c < -r<0$.
  For any $d\ge 2$,
  the minimum of the program \eqref{eqn:program2} is achieved when all $z_i$'s are equal and $\theta_i=\frac{d-1}{d}\cdot\pi$ for all $i$.
\end{lemma}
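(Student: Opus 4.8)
The plan is to mirror the proof of \Cref{lem:extremal}, which in this regime becomes strictly easier: because $c<-r<0$ the disk $\+D(c,r)$ lies entirely in the open left half plane, so the right–half–plane case that forced us to invoke \Cref{lem:right-half-plane} simply does not arise here.

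First I would reduce to the boundary circle. Take a feasible point $\*z=\{r_ie^{\ii\theta_i}\}_{i\in[d]}$ of \eqref{eqn:program2}. Since $0\notin K=\Closure{\+D(c,r)}$ and $c$ is real and negative, along each ray $\arg z=\theta_i$ meeting $K$ we may shrink $r_i$ until $z_i$ reaches the nearer (right) intersection with $\+C(c,r)$; this does not increase $\prod_i r_i$, and it forces $r_i=\alternative f(\theta_i)$ as in \eqref{eqn:r-theta2}, with $\theta_i$ constrained by \eqref{eqn:theta-range}. Note that \eqref{eqn:theta-range} cuts out precisely an interval of angles around $\pi$ — the set where $\cos^2\theta$ exceeds a fixed constant strictly between $0$ and $1$, intersected with the left half plane — and on exactly this interval \Cref{lem:Jensen2} supplies the convexity of $\alternative g=\log\alternative f$. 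In particular every $\theta_i$ already lies in the left half plane.

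Next I would symmetrize the angles. Applying Jensen's inequality to the $\theta_i$'s — their arithmetic mean $\vartheta$ again lies in the (convex) feasible interval, and replacing every $\theta_i$ by $\vartheta$ leaves $\sum_i\theta_i$, hence the parity constraint, unchanged — shows $\prod_i r_i\ge\alternative f(\vartheta)^d$, so we may assume all $z_i$ are equal. The parity constraint then reads $d\vartheta\equiv 0$ or $\pi\pmod{2\pi}$, i.e.\ $\vartheta=k\pi/d$ for an integer $k$ whose parity is opposite to that of $d$; conjugating all $z_i$ simultaneously, which preserves both feasibility and the objective, we may take $\vartheta\le\pi$, so $k\le d$.

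Finally I would optimise over $k$. Differentiating \eqref{eqn:r-theta2} shows $\alternative f$ is decreasing on $[\pi/2,\pi]$ with minimum at $\pi$, so $\prod_i r_i=\alternative f(\vartheta)^d$ decreases as $\vartheta$ increases towards $\pi$; hence we want $k$ as large as possible, and the largest value compatible with $k\le d$ and $k$ having parity opposite to $d$ is $k=d-1$, giving $\theta_i=\frac{d-1}{d}\pi$ for all $i$, as claimed. A short check shows that \eqref{eqn:program2} is feasible for a given $d$ precisely when this angle already lies in the feasible interval of \eqref{eqn:theta-range}, so whenever the statement is non-vacuous the claimed minimiser is itself feasible. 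The only step requiring genuine care is the symmetrisation: one must verify that the constraint set \eqref{eqn:theta-range} is exactly an interval of angles and that it coincides with the domain of convexity provided by \Cref{lem:Jensen2}, so that Jensen's inequality may be applied without ever leaving the feasible region; everything else runs parallel to \Cref{lem:extremal} or is a routine monotonicity computation.
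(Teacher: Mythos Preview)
Your proposal is correct and follows essentially the same approach as the paper: reduce to the nearer boundary arc, apply \Cref{lem:Jensen2} to equalise all $\theta_i$, write the common angle as $k\pi/d$ with $k$ of opposite parity to $d$, normalise to $[\pi/2,\pi]$ by conjugation, and use monotonicity of $\alternative f$ to force $k=d-1$. The paper's proof is the same argument stated more tersely; your additional remarks on the convexity of the feasible angle interval and on feasibility of the claimed minimiser are sound refinements that the paper leaves implicit.
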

\begin{proof}
  We first invoke \Cref{lem:Jensen2} to assume that all $z_i$'s are equal.
  Therefore there exists $k$ of opposite parity against $d$ such that $\theta_i=\frac{k\pi}{d}$.
  We may assume that $\theta_i\in[\pi/2,\pi]$ by taking conjugates if necessary.
  Then, $r_i$ is a decreasing function in $\theta_i$, and the minimum of $\prod_{i=1}^dr_i$ is achieved when $k=d-1$.
\end{proof}

Some calculations need to be changed due to the sign change in \eqref{eqn:r-theta2}.
By \Cref{lem:extremal2} and \eqref{eqn:r-theta2},
\begin{align}
  \lambdastar{d}& = \left(-c\cos\frac{\pi}{d}-\sqrt{c^2\cos^2\frac{\pi}{d}+\frac{\beta+2c}{\gamma}}\right)^d \label{eqn:lambdastard2}.
\end{align}

Let $\alternative{h}(d)\defeq\log\lambdastar{d}$ where $\lambdastar{d}$ is given as the expression in \eqref{eqn:lambdastard2}.
We take derivatives of $\alternative{h}(d)$:
\begin{align}
  \alternative{h}'(d) & = \frac{\log\lambdastar{d}}{d} + \frac{c\pi \sqrt{\gamma}  \sin(\pi/d)}{d \sqrt{\beta + 2c + c^2 \gamma \cos^2(\pi/d)}}; \label{eqn:h'd2}\\
  \alternative{h}''(d) & = -\cos \frac{\pi}{d}\cdot\frac{c\pi^2\sqrt{\gamma}(\beta + 2c + c^2 \gamma)}{d^3\left(\beta +2 c+ c^2 \gamma \cos^2 (\pi/d )\right)^{3/2}}\label{eqn:h''d2}.
\end{align}
As $d\ge 2$ and $\beta + 2c + c^2 \gamma>0$, $\alternative{h}''(d)\ge 0$ and $\alternative{h}(d)$ is still a convex function.
Thus, the minimum of $\alternative{h}(d)$ is attained at the solution to $\alternative{h}'(d)=0$.
With a little abuse of notation, call the solution $\dstar$ and let $\lambdastar{}\defeq \lambdastar{\dstar}$.

\begin{lemma}\label{lem:lambdastar2}
  Let $c<-r<0$. For any $d\ge 2$, $\lambdastar{d}\ge\lambdastar{}$.
\end{lemma}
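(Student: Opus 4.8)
The plan is to repeat, essentially verbatim, the argument that produced \Cref{lem:lambdastar} in the $\DET<0$ case: the statement is really a summary of the paragraph just before it, and all the analytic work has already been done. By \Cref{lem:extremal2} and \eqref{eqn:lambdastard2}, whenever the program \eqref{eqn:program2} is feasible for an integer $d\ge 2$ one has
\[
  \lambdastar{d}=\left(-c\cos\tfrac{\pi}{d}-\sqrt{c^2\cos^2\tfrac{\pi}{d}+\tfrac{\beta+2c}{\gamma}}\right)^{d};
\]
for the remaining (small) values of $d$ the program is infeasible, $\prodK{d}$ then misses the nonnegative reals, and we may set $\lambdastar{d}=+\infty$, making the claimed bound trivial. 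It therefore suffices to relax $d$ to a continuous variable on $[2,\infty)$, put $\alternative{h}(d)\defeq\log\lambdastar{d}$ on the range where the displayed formula is valid, and show that $\alternative{h}$ is minimised there at the point $\dstar$ defined by $\alternative{h}'(\dstar)=0$.

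First I would record the sign facts keeping everything well defined: under $c<-r<0$ the quantity $\beta+2c+c^2\gamma$ equals $\gamma r^2>0$ by \eqref{eqn:r} (equivalently, $\theta=\pi$ is feasible in \eqref{eqn:program2} and \eqref{eqn:theta-range} applies), and $c<0$. Plugging these into the second-derivative formula \eqref{eqn:h''d2},
\[
  \alternative{h}''(d)=-\cos\tfrac{\pi}{d}\cdot\frac{c\,\pi^2\sqrt{\gamma}\,(\beta+2c+c^2\gamma)}{d^3\left(\beta+2c+c^2\gamma\cos^2(\pi/d)\right)^{3/2}},
\]
each factor has a fixed sign for $d\ge 2$: $\cos(\pi/d)\ge 0$, the leading $c$ is negative, and $\beta+2c+c^2\gamma>0$, so $\alternative{h}''(d)\ge 0$; thus $\alternative{h}$ is convex on $[2,\infty)$ and strictly convex on $(2,\infty)$ (where $\cos(\pi/d)>0$), so $\alternative{h}'$ is strictly increasing there and has at most one zero. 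A convex function attains its global minimum over an interval at any interior critical point, so if $\alternative{h}'(\dstar)=0$ for some $\dstar$ in the admissible range then $\alternative{h}(d)\ge\alternative{h}(\dstar)$ for all admissible $d\ge 2$; exponentiating gives $\lambdastar{d}\ge\lambdastar{\dstar}=\lambdastar{}$ for every integer $d\ge 2$, which is the claim.

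The one point needing a little care — the step I would single out as the (mild) obstacle — is the well-posedness of the setup for small $d$: one must check that the radicand $\beta+2c+c^2\gamma\cos^2(\pi/d)$ is nonnegative wherever the formula is invoked (equivalently, that the extremal angle $\theta_i=\pi-\pi/d$ of \Cref{lem:extremal2} is actually realised on $\+C(c,r)$), and that $\alternative{h}'$ genuinely vanishes in range rather than $\alternative{h}$ being monotone. Both are settled by the explicit choice $c=\cstar<-\beta$ of \eqref{eqn:c} in the companion lemma to follow (the analogue of \Cref{lem:lambdastar-value}), which exhibits $\dstar=\frac{\pi}{\ArcTan{\sqrt{\beta\gamma-1}}}$ as the unique zero of $\alternative{h}'$ and observes that $\ArcTan{\sqrt{\beta\gamma-1}}\in(0,\pi/2)$ puts $\dstar$ genuinely in $(2,\infty)$. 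Granting that, the present lemma is nothing more than the convexity conclusion.
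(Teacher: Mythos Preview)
Your proposal is correct and follows essentially the same approach as the paper: the lemma is really a packaging of the convexity argument in the paragraph immediately preceding it, namely that $\alternative{h}''(d)\ge 0$ on $[2,\infty)$ (using $c<0$, $\cos(\pi/d)\ge 0$, and $\beta+2c+c^2\gamma=\gamma r^2>0$), so the minimum of $\alternative{h}$ is attained at the unique zero $\dstar$ of $\alternative{h}'$. Your extra care about small $d$ where the radicand may be negative (equivalently, where the program \eqref{eqn:program2} is infeasible and $\lambdastar{d}=+\infty$) is a point the paper glosses over, but it does not change the argument.
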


We still need to choose $c$ so that $\dstar=\frac{\pi}{\ArcTan {\sqrt{\beta\gamma-1}}}$.

\begin{lemma}\label{lem:lambdastar-value2}
  If $\DET>0$,
  then we can choose $c=\cstar<-\beta<0$ in \eqref{eqn:c} so that $\lambdastar{}=\left(\frac{\beta}{\gamma}\right)^{\dstar/2}$, 
  where $\dstar = \frac{\pi}{\ArcTan {\sqrt{\beta\gamma-1}}}$.
\end{lemma}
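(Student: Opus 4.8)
The plan is to mirror the proof of \Cref{lem:lambdastar-value} almost verbatim, adapting only the sign changes induced by \eqref{eqn:r-theta2} versus \eqref{eqn:r-theta}. First I would denote by $\rho(c,d)$ the right-hand side of \eqref{eqn:h'd2}, so that $\alternative{h}'(d)=\rho(c,d)$, and substitute $d=\dstar$ where $\dstar=\frac{\pi}{\ArcTan{\sqrt{\beta\gamma-1}}}$. The key point is that this value of $\dstar$ is exactly the one for which $\tan(\pi/\dstar)=\sqrt{\beta\gamma-1}$, hence $\cos^2(\pi/\dstar)=\frac{1}{\beta\gamma}$ and $\sin(\pi/\dstar)=\sqrt{\frac{\beta\gamma-1}{\beta\gamma}}$. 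Plugging these into \eqref{eqn:lambdastard2} gives $\lambdastar{\dstar}=\abs{\zeta_1}^{\dstar}=\left(\frac{\beta}{\gamma}\right)^{\dstar/2}$ provided that the base $-c\cos(\pi/\dstar)-\sqrt{c^2\cos^2(\pi/\dstar)+\frac{\beta+2c}{\gamma}}$ is indeed positive and equals $\abs{\zeta_1}=\sqrt{\beta/\gamma}$; this is where $c<-\beta$ is used, since it forces $\beta+2c<0$ so the square root is smaller than $\abs{c\cos(\pi/\dstar)}$ and the whole expression stays positive. The upshot is the clean formula
\begin{align*}
  \rho(c,\dstar) = \log\sqrt{\frac{\beta}{\gamma}} + \ArcTan{\sqrt{\beta\gamma-1}}\cdot\sqrt{\beta\gamma-1}\cdot\frac{c}{\beta+c},
\end{align*}
exactly as in \Cref{lem:lambdastar-value} but with the opposite sign on the second term, matching the sign flip between \eqref{eqn:h'd} and \eqref{eqn:h'd2}.

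Next I would solve $\rho(\cstar,\dstar)=0$ for $\cstar$. Setting the displayed expression to zero and recalling $\DET=\log\sqrt{\beta/\gamma}-\ArcTan{\sqrt{\beta\gamma-1}}\cdot\sqrt{\beta\gamma-1}$ from \eqref{eqn:DET}, one rearranges to obtain $\frac{\cstar}{\beta+\cstar}=\frac{\log\sqrt{\beta/\gamma}}{\ArcTan{\sqrt{\beta\gamma-1}}\cdot\sqrt{\beta\gamma-1}}$, and then $\cstar=\frac{-\beta\log\sqrt{\beta/\gamma}}{\DET}$, which is precisely \eqref{eqn:c}. Then I would check the sign: when $\DET>0$, the numerator $-\beta\log\sqrt{\beta/\gamma}\le 0$ (using $\beta\ge\gamma$), so $\cstar\le 0$, and a short estimate shows $\cstar<-\beta$ because $\frac{\cstar}{\beta+\cstar}>1$ is equivalent to $\beta+\cstar<0$ and to the second term exceeding $\log\sqrt{\beta/\gamma}$, i.e.\ exactly $\DET>0$. (The degenerate subcase $\beta=\gamma$, where $\log\sqrt{\beta/\gamma}=0$ forces $\cstar=0$, does not arise here since $\DET>0$ requires $\beta>\gamma$.) This also confirms we are in the regime $c<-r<0$ demanded by \Cref{lem:extremal2} and \Cref{lem:Jensen2}, so all the structural lemmas of \Cref{sec:phi>0} apply and the optimum of \eqref{eqn:program2} is genuinely $\lambdastar{\dstar}$.

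Finally, uniqueness: since $\alternative{h}''(d)\ge 0$ on $[2,\infty)$ by \eqref{eqn:h''d2} (here again $\beta+2c+c^2\gamma>0$ is what we need, and it holds because $\theta_i=\pi$ is feasible in \eqref{eqn:program2}, as noted after \eqref{eqn:theta-range}), the equation $\alternative{h}'(d)=0$ has at most one root for each fixed $c$; having exhibited $\dstar$ as a root when $c=\cstar$, it is \emph{the} minimiser, so $\lambdastar{}=\lambdastar{\dstar}=\left(\frac{\beta}{\gamma}\right)^{\dstar/2}$, invoking \Cref{lem:lambdastar2}. I expect the main obstacle to be purely bookkeeping: making sure every sign flip relative to \Cref{sec:phi<0} is tracked correctly (the $\pm$ in $\alternative f$, the sign of $c$, the sign in $\alternative h'$), and verifying the positivity conditions $\beta+2c<0$, $\beta+2c+c^2\gamma>0$, and base${}>0$ are mutually consistent in the $\cstar<-\beta$ regime; the algebra reducing $\rho(\cstar,\dstar)=0$ to \eqref{eqn:c} is then a one-line rearrangement identical in form to \Cref{lem:lambdastar-value}.
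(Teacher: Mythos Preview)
Your approach is essentially identical to the paper's proof, and the overall structure (compute $\alternative{h}'(\dstar)$, solve for $\cstar$, invoke convexity for uniqueness) is correct. You also supply more detail than the paper on why $\cstar<-\beta$, which is welcome.

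There is, however, one sign slip. Your displayed formula
\[
  \rho(c,\dstar) = \log\sqrt{\tfrac{\beta}{\gamma}} + \ArcTan{\sqrt{\beta\gamma-1}}\cdot\sqrt{\beta\gamma-1}\cdot\frac{c}{\beta+c}
\]
is wrong: the second term should carry a \emph{minus} sign, exactly the same as in \Cref{lem:lambdastar-value}. The reason is that when you simplify $\sqrt{\beta+2c+c^2\gamma\cos^2(\pi/\dstar)}=\sqrt{(\beta+c)^2/\beta}$ you get $\abs{\beta+c}/\sqrt{\beta}$, and since here $c<-\beta$ this equals $-(\beta+c)/\sqrt{\beta}$. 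That extra minus cancels the sign flip between \eqref{eqn:h'd} and \eqref{eqn:h'd2}, so the final expression for $\rho(c,\dstar)$ coincides in both regimes; the paper makes this explicit by first writing $\frac{c}{\abs{\beta+c}}$ and then simplifying. Your subsequent algebra (the ratio $\frac{\cstar}{\beta+\cstar}=\frac{\log\sqrt{\beta/\gamma}}{\ArcTan{\sqrt{\beta\gamma-1}}\cdot\sqrt{\beta\gamma-1}}$, the derivation of $\cstar$ via \eqref{eqn:c}, and the check that this ratio exceeds $1$ iff $\DET>0$) is in fact consistent with the correct minus-sign formula, not with the one you displayed, so once you fix that line everything goes through.
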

\begin{proof}
  Denote the right hand side of \eqref{eqn:h'd2} by $\alternative{\rho}(c,d)$.
  Then
  \begin{align*}
    \alternative{\rho}(c,\dstar) & = \log\left(\sqrt{\frac{\beta}{\gamma}}\right) + \frac{\pi}{\dstar}\cdot\sqrt{\beta\gamma-1} \cdot \frac{c}{\abs{\beta +c}}\\
    & = \log\left(\sqrt{\frac{\beta}{\gamma}}\right) - \ArcTan {\sqrt{\beta\gamma-1}}\cdot\sqrt{\beta\gamma-1} \cdot \frac{c}{\beta +c}.
  \end{align*}
  It is straightforward to verify that $\rho(\cstar,\dstar)=0$.  

  Since $\alternative{h}''(d)\ge 0$, $\alternative{h}'(d)=0$ has at most one zero in $d$ for any fixed $c$.
  Once we chose $c=\cstar$, $\dstar$ is the unique zero of $\alternative{h}'(d)$.
  The lemma follows.
\end{proof}

\subsection{\texorpdfstring{$\DET=0$}{Phi = 0}}
In fact, the arguments in \Cref{sec:phi<0} and \Cref{sec:phi>0} can be viewed as moving $c$ from $0$ to $\infty$, then ``wrapping around'' to $-\infty$, and eventually to $-\beta$.
The threshold case of $\DET=0$ requires us to take $c=\infty$, in which case $K$ becomes the close half plane $\{z\mid\Re z\le-\frac{1}{\gamma}\}$.
The program becomes 
\begin{align}
  \label{eqn:program3}
  \begin{split}
  \min\quad & \prod_{i=1}^d r_i \\
  \text{subject to}\quad & \sum_{i=1}^d\theta_i = 
  \begin{cases}
    0 \mod 2\pi & \text{if $d$ is odd;}\\
    \pi \mod 2\pi & \text{if $d$ is even;}\\
  \end{cases}\\
  \quad & \forall i\in[d],\ r_i\ge 0 \text{ and }\theta_i\in(\pi/2,3\pi/2);\\
  \quad & \forall i\in[d],\ -r_i\le \frac{1}{\gamma}.    
  \end{split}
\end{align}
Still denote the optima of \eqref{eqn:program3} by $\lambdastar{d}$ and it is easy to verify that \Cref{lem:compliment} holds in this setting.

%It is a routine task to check that all calculations in \Cref{sec:phi>0} still hold. 

Once again, we can assume that all $z_i's$ are on the boundary, namely that $\Re z =-\frac{1}{\gamma}$.
In this case
\begin{align}  \label{eqn:r-theta3}
  r_i = -\frac{1}{\gamma\cos\theta_i}.
\end{align}
It is easy to check that $\log r_i = - \log (-\cos\theta_i) -\cos\gamma$ is a convex function.
By the same argument as in \Cref{lem:extremal2},
\begin{align}\label{eqn:lambdastard3}
  \lambdastar{d}= \frac{1}{\gamma^d\cos^d(\pi/d)}.
\end{align}

\begin{lemma}\label{lem:lambdastar-value3}
  If $\DET=0$,
  then choosing $K=\{z\mid\Re z\le-\frac{1}{\gamma}\}$ ensures that for any $d\ge 2$, $\lambdastar{d}\ge \lambdastar{}=\left(\frac{\beta}{\gamma}\right)^{\dstar/2}$, 
  where $\dstar = \frac{\pi}{\ArcTan {\sqrt{\beta\gamma-1}}}$.
\end{lemma}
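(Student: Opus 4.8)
The plan is to follow the template already used for \Cref{lem:lambdastar2} and \Cref{lem:lambdastar-value2}: relax $d$ to a continuous variable, show that $h(d)\defeq\log\lambdastar{d}$ with $\lambdastar{d}$ given by \eqref{eqn:lambdastard3} is strictly convex on $(2,\infty)$, and then verify that its unique stationary point is exactly $\dstar=\frac{\pi}{\ArcTan{\sqrt{\beta\gamma-1}}}$ precisely under the hypothesis $\DET=0$. Evaluating $\lambdastar{d}$ at that point gives the claimed value, and convexity upgrades this to $\lambdastar{d}\ge\lambdastar{}$ for every $d\ge 2$ (the case $d=2$ being trivial since $\lambdastar{2}=\infty$).

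Concretely, \eqref{eqn:lambdastard3} gives $h(d)=-d\log\gamma-d\log\cos(\pi/d)$, and a direct computation yields
\begin{align*}
  h'(d) &= -\log\gamma-\log\cos(\pi/d)-\frac{\pi\tan(\pi/d)}{d}, &
  h''(d) &= \frac{\pi^2}{d^3\cos^2(\pi/d)}.
\end{align*}
Since $\beta\gamma>1$ forces $\pi/\dstar=\ArcTan{\sqrt{\beta\gamma-1}}\in(0,\pi/2)$, in particular $\dstar>2$, and $h''(d)>0$ for every $d>2$; thus $h$ is strictly convex on $(2,\infty)$, and its minimum there is attained at the unique stationary point, which we locate next.

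Using $\pi/\dstar=\ArcTan{\sqrt{\beta\gamma-1}}$ we obtain $\tan(\pi/\dstar)=\sqrt{\beta\gamma-1}$ and $\cos(\pi/\dstar)=1/\sqrt{\beta\gamma}$, hence $\gamma\cos(\pi/\dstar)=\sqrt{\gamma/\beta}$ and $\frac{\pi\tan(\pi/\dstar)}{\dstar}=\ArcTan{\sqrt{\beta\gamma-1}}\cdot\sqrt{\beta\gamma-1}$. Substituting into $h'$,
\begin{align*}
  h'(\dstar) &= -\log\bigl(\gamma\cos(\pi/\dstar)\bigr)-\frac{\pi\tan(\pi/\dstar)}{\dstar}\\
  &= \log\sqrt{\frac{\beta}{\gamma}}-\ArcTan{\sqrt{\beta\gamma-1}}\cdot\sqrt{\beta\gamma-1} = \DET,
\end{align*}
so the hypothesis $\DET=0$ is exactly the assertion that $\dstar$ is the minimizer of $h$. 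Finally, plugging $\cos(\pi/\dstar)=1/\sqrt{\beta\gamma}$ into \eqref{eqn:lambdastard3} gives $\lambdastar{}=\lambdastar{\dstar}=(\beta\gamma)^{\dstar/2}/\gamma^{\dstar}=\left(\frac{\beta}{\gamma}\right)^{\dstar/2}$, and strict convexity yields $\lambdastar{d}\ge\lambdastar{}$ for all $d\ge 2$.

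I do not expect a genuine obstacle: this is the degenerate $c=\infty$ limit of the two preceding subsections, so there is no center to optimize and \eqref{eqn:lambdastard3} is already in closed form. The only points requiring a little care are the sign of $h''$ (so that strict convexity, hence uniqueness of the stationary point, is legitimate even without analyzing the behavior of $h$ at $\infty$) and the bound $\dstar>2$ (so the stationary point lies inside the relaxed domain); both follow at once from $\beta\gamma>1$.
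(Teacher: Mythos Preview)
Your proposal is correct and follows essentially the same approach as the paper: both compute $h(d)=-d\log\gamma-d\log\cos(\pi/d)$, verify convexity via $h''(d)=\frac{\pi^2}{d^3\cos^2(\pi/d)}\ge 0$, and check that $h'(\dstar)=0$ using $\cos(\pi/\dstar)=1/\sqrt{\beta\gamma}$, $\tan(\pi/\dstar)=\sqrt{\beta\gamma-1}$, and $\DET=0$. Your write-up is in fact slightly more explicit, spelling out that $h'(\dstar)=\DET$, that $\dstar>2$ so the stationary point lies in the domain, and that the boundary case $d=2$ gives $\lambdastar{2}=\infty$.
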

\begin{proof}
  We just need to verify that $\log\lambdastar{d} = -d\log\gamma -d\log\cos(\pi/d)$ (with $\lambdastar{d}$ in \eqref{eqn:lambdastard3}) takes its minimum at $d=\dstar$.
  In this case,
  \begin{align*}
    \left(\log\lambdastar{d}\right)' & = -\log\gamma-\log\cos(\pi/d)-\frac{\pi\tan(\pi/d)}{d};\\
    \left(\log\lambdastar{d}\right)'' & = \frac{\pi^2}{d^3\cos^2(\pi/d)} \ge 0.
  \end{align*}
  The lemma follows from $\left(\log\lambdastar{\dstar}\right)'=0$,
  which can be verified using $\DET=0$, $\cos(\pi/\dstar)=1/\sqrt{\beta\gamma}$, and $\tan(\pi/\dstar)=\sqrt{\beta\gamma-1}$.
\end{proof}

\subsection{Proof of \texorpdfstring{\Cref{thm:main} and \Cref{thm:zero-free}}{Theorem 2 and Theorem 3}}

%\Cref{lem:lambdastar-value,lem:lambdastar-value2,lem:lambdastar-value3} only deal with vertices of degree $d\ge 2$.
%The case of $d=1$ needs some special care.
%We will ``prune'' these leaves.
%Let $\phi$ be the following function
%\begin{align}\label{eqn:phi}
%  \varphi(\lambda)\defeq \frac{1+\gamma\lambda}{\beta+\lambda}.
%\end{align}
%The effect of pruning a leaf is to replace its neighbour's external field $\lambda$ by $\lambda\varphi(\lambda)$.
%
%\begin{lemma}  \label{lem:d=1}
%  For any constant $0<\lambda'<\lambdastar{}$,
%  there exists an $\varepsilon>0$ such that for any $\delta\le\varepsilon$,
%  a $\delta$-strip $\+N$ covering $[\lambda',\lambdastar{}]$ and $\+N_1$ covering $[0,\lambdastar{}]$ satisfies that $\varphi(\+N)\cdot \+N_1\subseteq \+N_1$.
%\end{lemma}
%\begin{proof}
%  First we verify that $\varphi(\+I)\cdot\+I\subseteq\+I$ where $\+I=[\lambda',\lambdastar{}]$.
%  This is because that $\lambdastar{}\le\lambda_c$ and $\lambda_c<\frac{\beta-1}{\gamma-1}$ \cite[Lemma 3.2]{GL18}.
%\end{proof}

In order to avoid considering infinitely many degrees, 
we observe the following.

\begin{lemma}  \label{lem:norm>1}
  Let $\beta>\gamma$ be the parameters.
  For any of our chosen $K$, if $z\in K$, then $\abs{z}>1$.
\end{lemma}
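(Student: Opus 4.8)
The plan is to reduce the statement to $\dist(0,K)>1$ in each of the three regimes for $K$, since every $z\in K$ obeys $\abs{z}\ge\dist(0,K)$. First I would record that $\beta>1$ (from $\beta\ge\gamma$ and $\beta\gamma>1$) and $\log\sqrt{\beta/\gamma}>0$ (from $\beta>\gamma$), and rewrite~\eqref{eqn:r} as $r^2=\cstar^2+\tfrac{2\cstar}{\gamma}+\tfrac{\beta}{\gamma}$. When $\Phi<0$, \eqref{eqn:c} gives $\cstar>0$, so $r^2>\cstar^2$; hence $0\in\+D(\cstar,r)$, $K=\Complement{\+D(\cstar,r)}$ is its closed exterior, and minimising $\abs{z}$ over $\+C(\cstar,r)$ gives $\dist(0,K)=r-\cstar$. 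When $\Phi>0$, \eqref{eqn:c} gives $\cstar<-\beta$ (equivalently $\sqrt{\beta\gamma-1}\,\ArcTan{\sqrt{\beta\gamma-1}}>0$), so $r^2<\cstar^2$; hence $K=\Closure{\+D(\cstar,r)}$ lies in the open left half-plane and $\dist(0,K)=\abs{\cstar}-r$. When $\Phi=0$, $K=\{z:\Re z\le-1/\gamma\}$ and $\dist(0,K)=1/\gamma$.

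Next I would square (legitimate since all quantities are positive) and substitute~\eqref{eqn:r} to turn each condition $\dist(0,K)>1$ into algebra. For $\Phi<0$, $r-\cstar>1$ becomes $r^2>(1+\cstar)^2$, i.e.\ $\beta-\gamma>2\cstar(\gamma-1)$; this is immediate when $\gamma\le1$ (left side positive, right side nonpositive). For $\Phi=0$, the condition $1/\gamma>1$ is $\gamma<1$, equivalently $\log\sqrt{\beta/\gamma}>\tfrac12\log(\beta\gamma)$; since $\Phi=0$ means $\log\sqrt{\beta/\gamma}=t\ArcTan{t}$ with $t=\sqrt{\beta\gamma-1}$, this reduces to $t\ArcTan{t}>\tfrac12\log(1+t^2)$ for $t>0$, which holds because the difference vanishes at $t=0$ and has derivative $\ArcTan{t}>0$. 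For $\Phi>0$, $\abs{\cstar}-1>r$ (valid since $\abs{\cstar}>\beta>1$) becomes $(\abs{\cstar}-1)^2>r^2$, i.e.\ $2\cstar(\gamma-1)>\beta-\gamma$.

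This leaves $\Phi<0$ with $\gamma>1$ and $\Phi>0$. In both, substituting the value of $\cstar$ from~\eqref{eqn:c} and clearing the (positive) denominator $\abs{\Phi}$ turns the required inequality into
\begin{align*}
  \bigl(\log\sqrt{\beta/\gamma}\bigr)\,(2\beta\gamma-\beta-\gamma)<(\beta-\gamma)\,\sqrt{\beta\gamma-1}\,\ArcTan{\sqrt{\beta\gamma-1}},
\end{align*}
which must be shown for all $\beta>\gamma$ with $\beta\gamma>1$. This transcendental inequality is the crux and the main obstacle, since $\cstar$ is an opaque function of the two parameters. I would attack it via the hyperbolic substitution $\beta=\sqrt{1+t^2}\,e^{u}$, $\gamma=\sqrt{1+t^2}\,e^{-u}$ with $u,t>0$ (so $\log\sqrt{\beta/\gamma}=u$ and $\sqrt{\beta\gamma-1}=t$), under which it becomes $F(u,t):=u\cosh u+t\ArcTan{t}\,\sinh u-u\sqrt{1+t^2}>0$. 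Since $F(0,t)=0$ and $\partial_u F(u,t)=\cosh u\,(1+t\ArcTan{t})+u\sinh u-\sqrt{1+t^2}\ge(1+t\ArcTan{t})-\sqrt{1+t^2}$, everything reduces to the base inequality $1+t\ArcTan{t}>\sqrt{1+t^2}$ for $t>0$; and that follows because $v(t):=1+t\ArcTan{t}-\sqrt{1+t^2}$ has $v(0)=0$ and $v'(t)=\ArcTan{t}+\tfrac{t}{1+t^2}-\tfrac{t}{\sqrt{1+t^2}}>\ArcTan{t}-\tfrac{t}{\sqrt{1+t^2}}>0$, where the last inequality is the elementary comparison $\tfrac{d}{dt}\ArcTan{t}=\tfrac{1}{1+t^2}>\tfrac{1}{(1+t^2)^{3/2}}=\tfrac{d}{dt}\bigl(t/\sqrt{1+t^2}\bigr)$ together with both functions vanishing at $t=0$.
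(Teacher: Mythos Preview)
Your argument is correct: the case split on $\Phi$, the computation of $\dist(0,K)$ in each regime, the reduction of the two nontrivial cases to the single transcendental inequality, and the hyperbolic substitution followed by the two one-variable calculus lemmas all check out.

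However, the route is quite different from the paper's. The paper gives a four-line bootstrap: since $\beta>\gamma$ forces $\lambdastar{}>1$, if some $z\in K$ had $\abs{z}\le 1$ one could perturb its argument to $p\pi/q$ with $p$ odd and $q$ even (staying in $K$), and then $(-1)^{q+1}(z')^{q}\in\prodK{q}$ would be a positive real of modulus $\le 1<\lambdastar{}$, contradicting the already-established \Cref{lem:lambdastar-value}, \ref{lem:lambdastar-value2}, \ref{lem:lambdastar-value3}. In other words, the paper recycles the optimisation result $\lambdastar{d}\ge\lambdastar{}>1$ rather than analysing $K$ directly. Your approach is longer but fully self-contained --- it never invokes the optimisation lemmas and instead yields an explicit formula for $\dist(0,K)$ in each regime; the price is the transcendental inequality $u\cosh u + t\ArcTan{t}\,\sinh u > u\sqrt{1+t^2}$, which you dispatch cleanly. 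The paper's argument is slicker given its placement after the $\lambdastar{d}$ analysis; yours would be the natural one if the lemma had to stand alone.
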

\begin{proof}
  The assumption $\beta>\gamma$ implies that $\lambdastar{}>1$.
  Assume otherwise that $\exists z\in K$ such that $\abs{z}\le 1$.
  Then, there must exist some $z'$ close to $z$ so that $\abs{z}=\abs{z'}$, $\arg(z')=\frac{p\pi}{q}$ where $p$ and $q$ are two integers, $p$ is odd and $q$ is even.
  Thus, $-(-z')^q \in \+K_{q}$ and $arg(-(-z')^q)=0$.
  Moreover, $\abs{-(-z')^q} = \abs{z'}^q \le 1 < \lambdastar{}$.
  This contradicts to \Cref{lem:lambdastar-value}, \ref{lem:lambdastar-value2}, or \ref{lem:lambdastar-value3}.
\end{proof}

Our method in fact shows a multivariate version of \Cref{thm:zero-free}.
Recall the definition of the multivariate partition function in \eqref{eqn:Z-multi}.

\begin{theorem}  \label{thm:zero-free-multi}
  Let $\beta,\gamma$ be positive parameters such that $\beta\ge\gamma$ and $\beta\gamma>1$,
  and $\lambdastar{}$ defined as in \Cref{thm:main}.
  There exists a $\delta>0$ such that for any $\lambda'<\lambdastar{}$ and any graph $G=(V,E)$ such that $\deg_G(v)\ge 2$ for all $v\in V$, 
%  Then for any bounded degree graphs,
  $\Zspin(G;\vec{\lambda})$ does not vanish in a $\delta$-polystrip containing the poly-region $[0,\lambda']^n$ where $n=\abs{V}$.
\end{theorem}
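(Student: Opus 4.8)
The plan is to combine the Asano contraction machinery (\Cref{lem:contraction}, \Cref{prop:Grace-Walsh-Szego}) with the single-edge factorization of $\Zspin$ and the optimization bounds from \Cref{lem:lambdastar-value}, \Cref{lem:lambdastar-value2}, \Cref{lem:lambdastar-value3}, \Cref{lem:lambdastar}, \Cref{lem:lambdastar2}. First I would recall the standard ``edge-splitting'' construction: given $G=(V,E)$, replace each vertex $v$ of degree $d_v=\deg_G(v)$ by $d_v$ fresh copies, one for each incident edge, yielding a disjoint union of single edges $\bigsqcup_{e\in E} K_e$; then $\Zspin$ of $G$ is recovered from $\prod_{e\in E}\Zspin(K_e;\ldots)$ by iteratively applying the Asano contraction at each original vertex $v$, merging its $d_v$ copies. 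For a single edge with endpoints carrying fields $x,y$, the partition function is the bilinear form $\beta + x + y + \gamma xy$, and as a polynomial in, say, $x$ with $y$ fixed it vanishes only at $x = -\frac{\beta+y}{1+\gamma y}$. The key geometric input, already packaged in \eqref{eqn:zeta} and \eqref{eqn:r}, is that if $y$ ranges over our region $K$ (be it $\Complement{\+D(\cstar,r)}$, $\Closure{\+D(\cstar,r)}$, or the half-plane $\{\Re z\le-1/\gamma\}$ according to the sign of $\DET$), then $-\frac{\beta+y}{1+\gamma y}$ stays in $K$ as well; this is exactly the statement that the Möbius map fixing the two roots $\zeta_1,\zeta_2$ preserves the chosen circle $\+C(\cstar,r)$ (the half-plane case being the degenerate limit), so $\Zspin(K_e)$ is ``$K$-stable in each variable separately'' in the sense needed to feed \Cref{lem:contraction}.

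Next I would run the contraction vertex by vertex. For each original vertex $v$ with its $d_v$ copies, the multilinear polynomial in those $d_v$ copy-variables obtained so far vanishes only when one of them lies in $K$ (this is maintained inductively over the edges processed at $v$, using \Cref{prop:Grace-Walsh-Szego} to pass between the polynomial and its polar form exactly as in the proof of \Cref{lem:contraction}); applying \Cref{lem:contraction} with $K$ contracts these $d_v$ variables into a single field variable $\lambda_v$ whose exclusion region is $\+K_{d_v}=(-1)^{d_v+1}K\cdots K$ ($d_v$ times). After processing all vertices we conclude that $\Zspin(G;\vec\lambda)$, as a polynomial in each $\lambda_v$ separately, vanishes only if $\lambda_v\in\+K_{d_v}$ for some $v$. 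Here is where $\deg_G(v)\ge2$ is used: each $d_v\ge2$, so by \Cref{lem:lambdastar} (resp. \Cref{lem:lambdastar2}, \Cref{lem:lambdastar-value3}) the real intercept $\lambdastar{d_v}$ of $\+K_{d_v}$ on $[0,\infty)$ is at least $\lambdastar{}$, hence $[0,\lambdastar{})\cap\+K_{d_v}=\emptyset$ for every $v$. Moreover, since $V$ is finite there are only finitely many distinct degrees appearing, so by \Cref{lem:compliment} (one application per occurring degree, then intersect the finitely many $\delta$-strips) there is a single $\delta>0$ such that the $\delta$-strip around $[0,\lambda']$ avoids $\+K_{d_v}$ for all $v$, for any fixed $\lambda'<\lambdastar{}$. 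This $\delta$-polystrip is then a zero-free region for $\Zspin(G;\vec\lambda)$, giving the theorem; \Cref{thm:zero-free} is the special case $\vec\lambda=(\lambda,\dots,\lambda)$, and combined with \Cref{lem:zeros-ferro-2-spin-multi} (after preprocessing away degree-$1$ vertices and absorbing their effect into non-uniform fields on the remaining vertices) it yields \Cref{thm:main}.

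One subtlety I would be careful about is uniformity of $\delta$: the statement asks for a $\delta$ independent of $G$ (though it may depend on $\beta,\gamma$ and on the chosen $\lambda'$). The point is that $\+K_{d}$ for the relevant degrees is a nested-type family — its positive-real intercept $\lambdastar{d}$ is monotone past $\dstar$ and bounded below by $\lambdastar{}$ uniformly in $d$ — and by \Cref{lem:norm>1} every point of $K$ has modulus $>1$, so for $d$ large the entire region $\+K_d$ is pushed outside the fixed disk of radius $\lambda'$ (all of its points have modulus $\ge(\min_{z\in K}|z|)^d\to\infty$), whence only finitely many $d$ can possibly bring $\+K_d$ near $[0,\lambda']$ at all; for those finitely many degrees one takes the minimum of the finitely many $\delta$'s furnished by \Cref{lem:compliment}. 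The main obstacle in writing this out rigorously is verifying the invariance claim ``$y\in K\implies -\frac{\beta+y}{1+\gamma y}\in K$'' cleanly in all three regimes ($\DET<0$, $\DET>0$, $\DET=0$) simultaneously, i.e. checking that the fractional-linear map with fixed points $\zeta_1,\zeta_2$ maps the chosen circle $\+C(\cstar,r)$ (and hence the correct side of it) to itself; everything downstream is bookkeeping with \Cref{lem:contraction} and the already-established optimization lemmas.
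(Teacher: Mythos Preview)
Your overall architecture matches the paper exactly: split $G$ into isolated edges, show that the product polynomial vanishes only when some coordinate lies in $K$, then contract vertex by vertex via \Cref{lem:contraction}, and finally use \Cref{lem:norm>1} to reduce to finitely many degrees and obtain a uniform $\delta$. That is precisely the paper's proof.

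However, your justification of the single-edge step contains a genuine error. You claim that the M\"obius involution $M(y)=-\frac{\beta+y}{1+\gamma y}$ satisfies ``$y\in K\Rightarrow M(y)\in K$'', inferring this from the (true) fact that $M$ preserves the boundary circle $\+C(\cstar,r)$. But preserving the boundary does not mean preserving each side: $M$ is an involution with fixed points $\zeta_1,\zeta_2$ on the circle, hence conjugate to $z\mapsto -z$, and it \emph{swaps} the two complementary regions rather than fixing them. Concretely, in the $\DET<0$ case one has $\infty\in K$ while $M(\infty)=-1/\gamma\in\+D(\cstar,r)=\Complement{K}$, so your stated implication is false. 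What \Cref{lem:contraction} actually requires is the opposite direction, namely $y\notin K\Rightarrow M(y)\in K$; that statement is correct and would suffice, but it is not what you wrote, and the ``main obstacle'' you single out at the end is literally the wrong claim to try to verify.

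The paper avoids this entire discussion. Since the univariate polynomial $\gamma z^2+2z+\beta$ has both roots $\zeta_1,\zeta_2$ on $\+C(\cstar,r)$ (this is what \eqref{eqn:r} was arranged to do), it is $\Complement{K}$-stable, and one application of \Cref{prop:Grace-Walsh-Szego} immediately gives $\Complement{K}$-stability of its polar form $\gamma z_1 z_2 + z_1 + z_2 + \beta$, uniformly across the three regimes with no M\"obius analysis at all. After that, the rest of your outline (including the uniform-$\delta$ argument via \Cref{lem:norm>1}) is exactly what the paper does.
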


\begin{proof}
%  Let $G=(V,E)$ be the degree such that $2\le \deg_G(v)\le \Delta$ for all $v\in V$.
  First we claim that for any $\lambda'<\lambdastar{}$, we can choose a $\delta$-strip $\+N$ containing $[0,\lambda']$ for $\lambda'<\lambdastar{}$ 
  so that it does not intersect $\+K_d$ for any $d\ge 2$,
  and the $\delta$-polystrip $\+N^n$ is what we choose in the theorem.
  \Cref{lem:compliment}, \ref{lem:lambdastar}, \ref{lem:lambdastar-value}, \ref{lem:lambdastar2}, \ref{lem:lambdastar-value2} and \ref{lem:lambdastar-value3}
  together imply that for any single $d$, there is a $\delta_d$-strip covering $[0,\lambda']$ that does not intersect $\+K_d$.
  If $\beta>\gamma$, by \Cref{lem:norm>1}, $\abs{z}>1$ for any $z\in K$.
  For sufficiently large $d$, for any $z\in\+K_d$, $\abs{z}>\lambdastar{}$.
  Thus, we only need to take $\delta$ to be the minimum one among finitely many $\delta_d$'s.
  If $\beta=\gamma$, then $K$ is the unit circle, $\+K_d=K$ for any $d\ge 2$, and $\lambdastar{}=1$.
  In this case, clearly the claim holds.

  We construct a sequence of graphs $G_0,G_1,\dots,G_n=G$.
  In $G_0=(V_0,E_0)$, we replace each vertex $v\in V$ by $d=\deg_G(v)$ copies, denoted $v_1,v_2,\dots,v_d$,
  and connect them according to $E$ so that $G_0$ is a disjoint union of isolated edges.
  Then
  \begin{align*}
    \Zspin(G_0;\lambda) = \prod_{e\in E}\left( \gamma\lambda^2+2\lambda+\beta  \right).
  \end{align*}
  The only zeros of $\Zspin$ are $\zeta_1$ and $\zeta_2$,
  both of which are in the circular region $K$ chosen according to \Cref{lem:lambdastar-value}, \ref{lem:lambdastar-value2}, or \ref{lem:lambdastar-value3}.
  Now consider the polynomial
  \begin{align*}
    \Zspin(G_0;\*z)\defeq\prod_{(u_i,v_j)\in E_0}\left( \gamma z_{u_i} z_{v_j}+z_{u_i}+z_{v_j}+\beta  \right).
  \end{align*}
  By \Cref{prop:Grace-Walsh-Szego},
  $\Zspin(G_0;\*z)$ does not vanish if $z_v\not\in K$ for all $v\in V_0$.
  
  The graph $G_1$ is constructed by choosing an arbitrary $v\in V$,
  and contract $v_1,\dots, v_d$ where $d=\deg_G(v)$.
  Namely, we replace all $\{z_{v_i}\}$ by the same $z_v$ in $\Zspin(G_0;\*z)$ to get a new polynomial $\Zspin(G_1;\*z)$.
  This is the operation in \Cref{lem:contraction},
  and then $\Zspin(G_1;\*z)\neq 0$ if $z_v\not\in \+K_d$ and $z\not\in K$ for all $z\neq z_v$.
  We keep contracting vertices to construct $G_2,\dots,G_n=G$, and their partition functions correspondingly.
  Then \Cref{lem:contraction} guarantees that $\Zspin(G;\*z)\neq 0$ as long as $z_v\not\in \+K_d$ where $d=\deg_G(v)$ for all $v\in V$.

%  Since the maximum degree in $G$ is bounded, say by some $\Delta\ge2$,
%  by \Cref{lem:compliment}, \ref{lem:lambdastar}, \ref{lem:lambdastar-value}, \ref{lem:lambdastar2}, \ref{lem:lambdastar-value2} and \ref{lem:lambdastar-value3},
%  there is a neighbourhood of $[0,\lambda']$ that does not intersect any $\+K_d$ for any $2\le d\le \Delta$ and $\lambda'<\lambdastar{}$.
%  Moreover, it is straightforward to check that $[0,\lambda']$ also has a neighbourhood not intersecting $K$ itself (the case that $d=1$).
  Our choice of $\+N$ already ensures that any $\+N\cap\+K_d=\emptyset$ for all $d\ge 2$.
%  The only case not covered yet is when $d=1$.
%  Suppose for the sake of contradiction that there is $\*z$ so that $z_v\in\+N$ and $p_n(\*z)=0$,
%  and $G$ is the smallest graph so that such a $\*z$ exists.
%  If $\Phi\ge 0$, there is no feasible point to the program \eqref{eqn:program2} or \eqref{eqn:program3}
%  and is thus impossible.
%  The only possible case is that $\Phi<0$ and $\lambdastar{1}=c+r <\lambdastar{}$.
%  Moreover, there must be some $z_v$ such that $\deg_G(v)=1$ and $z_v$ is contained in a $\delta$-strip covering $[\lambda',\lambdastar{}]$.
%  
%  We claim that $G'=G[V\setminus\{v\}]$ has the same property and therefore reach a contradiction.
%  Let the unique neighbour of $v$ in $G$ be $u$.
%  For $G'$, let $z_w'=z_w$ for all $w\neq u$ and $z_u'=z_u\varphi(z_v)$.
%  By \Cref{lem:d=1}, $z_u'\in\+N$.
%  However, $p_n'(\*z')=p_n(\*z)=0$ is unchanged by this ``pruning'' operation.
%  Thus this contradicts to the minimality of $G$.
%
%  To summarize, $p_n(\*z)\neq 0$ if $z_v\in\+N$ for all $v\in V$.
  The theorem follows. % by observing that $\Zspin(G;\lambda)=p_n(\lambda,\dots,\lambda)$.
\end{proof}

\Cref{thm:zero-free} is a simple corollary of \Cref{thm:zero-free-multi}.
To prove \Cref{thm:main}, we need to take some special care of degree $1$ vertices.

\begin{proof}[Proof of \Cref{thm:main}]
  Let $G=(V,E)$ be a graph and $v\in V$ such that $\deg_G(v)=1$.
  Let $\vec{\lambda}$ be the (not necessarily uniform) vertex weights.
  Let the unique neighbour of $v$ in $G$ be $u$.
  The ``pruning'' operation is the following.
  Construct $G'=G[V\setminus\{v\}]$ and $\lambda_w'=\lambda_w$ if $w\neq u$ and $\lambda_u'=\lambda_u\cdot\frac{\lambda_v\gamma+1}{\lambda_v+\beta}$.
  Then $\Zspin(G;\vec{\lambda})=(\beta+\lambda_v) \cdot \Zspin(G';\vec{\lambda}')$.

  Notice that if $\lambda_v\le \frac{\beta-1}{\gamma-1}$, then $\frac{\lambda_v\gamma+1}{\lambda_v+\beta}\le 1$.
  Moreover, $\lambdastar{}\le \lambda_c$ and $\lambda_c\le \frac{\beta-1}{\gamma-1}$ \cite[Lemma 3.2]{GL18}.
  Thus, in the assumed range of parameters, we can keep pruning leaves until there is none,
  and all $\lambda_v$ after pruning still satisfies that $\lambda_v<\lambda'<\lambdastar{}$.
  When there are no degree $1$ vertices, 
  we apply \Cref{lem:zeros-ferro-2-spin-multi} and \Cref{thm:zero-free-multi}.
\end{proof}

\section{Concluding remarks}
\label{sec:limit}

The main limit of our approach is that the roots $\zeta_1,\zeta_2$ to the single edge case are fixed.
Any circular region we choose in \Cref{prop:Grace-Walsh-Szego} and subsequently in \Cref{lem:contraction} must contain $\zeta_1$ and $\zeta_2$.
If the degree $d$ of a vertex is very close to $\ArcTan{\sqrt{\beta\gamma-1}}$,
then $\zeta_1$ will be mapped to very close to the real axis after the contraction.
Thus, our best hope is to make sure that this is the worst case, 
and that is exactly what we do in \Cref{lem:lambdastar-value,lem:lambdastar-value2,lem:lambdastar-value3}.
This seems to be an inherent difficulty to the contraction method on ferromagnetic $2$-spin systems.

\section*{Acknowledgement}
We would like to thank the organisers of the workshop ``Deterministic Counting, Probability, and Zeros of Partition Functions'' in the Simons Institute for the Theory of Computing.
The topic of the workshop inspired us to look at this problem and the work was initiated during the workshop.
HG also wants to thank the hospitality of the Institute of Theoretical Computer Science in Shanghai University of Finance and Economics,
where part of the work was done.

\bibliographystyle{alpha} 
\bibliography{zeros}

\end{document}